\newtheorem*{rep@theorem}{\rep@title}
\newcommand{\newreptheorem}[2]{%
\newenvironment{rep#1}[1]{%
 \def\rep@title{#2 \ref{##1}}%
 \begin{rep@theorem}}%
 {\end{rep@theorem}}}
\newtheorem{theo}{Theorem} 
\newtheorem{lemma}[theo]{Lemma}
\newtheorem{definition}[theo]{Definition}
\newtheorem{prop}[theo]{Proposition}
\newtheorem{proposition}[theo]{Proposition}
\def\cG{\mathcal{G}}
\def\1{\mathbbm{1}}
\def\eps{\varepsilon}
\newcommand{\beps}{\boldsymbol{\varepsilon}}
\newcommand{\error}{\boldsymbol{e}}
\newcommand{\G}{\EuScript G}
\newcommand{\C}{\EuScript C}
\newcommand{\eQ}{\EuScript Q}
\newcommand{\Q}{Q}
\newcommand{\f}{\mathbb F}
\newcommand{\F}{\mathbb F}
\newcommand{\x}{\mathbf x}
\renewcommand{\geq}{\geqslant}
\renewcommand{\leq}{\leqslant}
\newcommand{\Cay}{{\sf Cay}}
\newcommand\DT[2]{C_{#1}\otimes\f^{#2}+\f^{#1}\otimes C_{#2}}
\newcommand{\col}{{\sf Col}}
\newcommand{\row}{{\sf Row}}
\begin{document}

\title{Efficient decoding up to a constant fraction of the code length for asymptotically good quantum codes}

\author{Anthony Leverrier\thanks{Inria, France. {\tt anthony.leverrier@inria.fr}} \qquad Gilles Z{\'e}mor\thanks{Institut de Math\'ematiques
de Bordeaux, UMR 5251, France. {\tt zemor@math.u-bordeaux.fr}}}

\date{\today}	

\maketitle

\begin{abstract}
We introduce and analyse an efficient decoder for quantum Tanner codes that can correct adversarial errors of linear weight. Previous decoders for quantum low-density parity-check codes could only handle adversarial errors of weight $O(\sqrt{n \log n})$. 
We also work on the link between quantum Tanner codes and the Lifted Product
codes of Panteleev and Kalachev, and show that our decoder can be adapted to the latter. The decoding algorithm alternates between sequential and parallel procedures and converges in linear time.  
\end{abstract}


\section{Introduction}

\subsection{Contributions}
Historically, a major motivation behind the study of classical low-density parity-check (LDPC) codes was the possibility of efficient decoding. In his thesis, Gallager showed that random LDPC codes were asymptotically good codes with high probability and  proposed a first decoding algorithm based on message passing~\cite{Gal62}. Much later, Sipser and Spielman introduced expander codes which are explicit LDPC codes with a minimum distance linear in their length $n$, together with an efficient decoder that provably corrects adversarial errors of linear weight~\cite{SS96}. These expander codes are a special instance of Tanner codes~\cite{T81} defined on a $\Delta$-regular expander graph: bits are associated to the edges of the graph while parity-check constraints are enforced at each vertex \textit{via} small linear codes of length $\Delta$.

The LDPC property may be of even greater interest in the quantum case because such codes can significantly reduce the required overhead for fault-tolerant quantum computing~\cite{got14}. For this, it is enough to find a code family encoding $\Theta(n)$ logical qubits within $n$ physical qubits, with a sufficient minimum distance $d= \Omega(n^\alpha)$ for some $\alpha>0$ and an efficient decoding algorithm. Quantum expander codes, which are obtained by taking the hypergraph product \cite{TZ14} of two expander codes, form one such family when combined with the small-set-flip decoder~\cite{LTZ15, FGL18b}. This decoder corrects adversarial errors of weight proportional to the minimum distance, that is $O(\sqrt{n})$. Very recently, Panteleev and Kalachev discovered a first family of asymptotically good quantum LDPC codes~\cite{PK21} and suggested that a decoder similar to that of quantum expander codes might be able to correct errors of large weight in linear time.

In the present paper, we describe such a linear-time decoder for quantum Tanner
codes~\cite{LZ22}, a family of good quantum LDPC codes obtained by applying the Tanner code construction to a Left-Right Cayley complex~\cite{PK21,DEL21, BE21b} rather than to a graph. 
This decoder can correct adversarial errors of linear weight. Previously, the best decoder available was able to correct errors of weight $O(\sqrt{ n  \log n})$~\cite{EKZ20}.

\subsection{Context and history}

\paragraph{Quantum LDPC codes.}

Quantum error correcting codes (QECC), initially devised to fight decoherence with the goal of building large-scale quantum computers, have become a central object of study in fields as diverse as computer science, topological phases in physics, quantum information and even quantum gravity. Probably the best-known QECC is the toric code of Kitaev which corresponds to the degenerate ground-space of a simple local Hamiltonian defined on a torus~\cite{kit03}. It is a $2^2$-dimensional subspace encoding 2 qubits within the Hilbert space $(\mathbbm{C}^2)^{\otimes n}$ of $n$ physical qubits, defined as the common eigenspace with eigenvalue 1 of a set of $n-2$ commuting Pauli operators, called \emph{generators}. The \emph{minimum distance} of a code is the minimal weight\footnote{The weight of a Pauli operator is the number of qubits on which the operator acts nontrivially.} of a Pauli operator that commutes with all the generators but that cannot be written as a product of generators. The toric code achieves a remarkable distance of $\Theta(\sqrt{n})$, which remains the current record for quantum codes with generators acting locally in a topological space of constant dimension. 
In general, the parameters of such local codes are rather severely constrained by the ambient space~\cite{BT09, BPT10, del13, BK22}, and it is useful to relax the locality constraint and only require that the generators of the code have constant weight, but are otherwise arbitrary\footnote{We do not require the generators to be geometrically local anymore, and they can therefore act on an arbitrary set of qubits. This set should simply be of constant size.}. In addition, we ask that any qubit is involved in at most a constant number of generators. Such QECC are called LDPC.

An important stepping stone in the study of quantum LDPC codes was the
constant-rate\footnote{The rate of a QECC is the ratio $k/n$ of the number $k$
of logical qubits per physical qubit.} generalisation of the toric code called
hypergraph (or homological) product construction that forms a quantum LDPC code
from two arbitrary classical LDPC codes \cite{TZ14}. In particular, if the
classical codes are asymptotically good, then the resulting QECC has constant
rate and a minimum distance $d = \Theta(\sqrt{n})$. Ideas from
higher-dimensional expansion were useful to break the so-called \emph{square-root
barrier} for the minimum distance, but only by polylogarithmic factors
\cite{EKZ20, KT21}. A much more impressive improvement came from the idea of
adding a twist to the homological product construction~\cite{HHO20} to obtain $d
= \widetilde{\Theta}(n^{3/5})$. Further generalizations, either balanced product
codes or lifted product codes, were developed in~\cite{BE21b} and~\cite{PK20}
and finally led to the asymptotically good quantum LDPC codes of Panteleev and
Kalachev~\cite{PK21}. Hidden behind this final construction lies a
higher-dimensional generalization of a graph called a Left-Right Cayley complex
introduced in~\cite{DEL21} to construct (classical) locally testable codes.
Combining this complex, which is a balanced product of Cayley graphs, with the
Tanner code construction of~\cite{T81} yields an alternative family of good quantum LDPC codes called quantum Tanner codes~\cite{LZ22}. It should be emphasized that before these recent works, it was completely unclear whether quantum LDPC codes with a minimum distance significantly above $\sqrt{n}$ could exist at all.

\paragraph{Decoders for quantum LDPC codes.}

Topological codes, such as the toric code and its various generalizations in
higher dimensions, are by far the most studied quantum LDPC codes. In
particular, they now come with relatively efficient decoders that solve the
following problem: given the syndrome of a physical Pauli error $\error$,
\textit{i.e.}\ the list of generators that do not commute with the error, return
a guess $\hat{\error}$ for the error such that $\error$ and $\hat{\error}$
differ by a product of generators\footnote{In this case, $\error$ and
$\hat{\error}$ are called \emph{equivalent}.}, which is sufficient since generators act trivially on the codespace. This is a crucial relaxation compared to the classical decoding problem which requires to recover the exact error. While this could suggest at first sight that the decoding problem is simpler in the quantum case, this is in fact far from clear. A typical issue arises when the error corresponds to half a generator. In that case, it might not be clear for the decoder how to break the symmetry between the two equivalent errors corresponding to each half of the generator. A related observation is that the value of the error on a given qubit is never well defined since it is straightforward to find equivalent errors with a distinct action on that qubit.
These issues are by now relatively well understood for topological codes and the minimum weight perfect matching decoder~\cite{edm65} yields a decoder with very good performance and a reasonable complexity of $O(n^3)$. Faster decoders exist for topological codes, for instance Union-Find has essentially a linear complexity in the worst case and performs optimally for errors of weight less than $(d-1)/2$~\cite{DN21}.

Decoding more general, non topological LDPC codes, appears to be more challenging. While some solutions behave reasonably well against random errors, most notably the combination of Belief Propagation and Ordered Statistics Decoding proposed in~\cite{PK21b}, an important challenge is to understand how well one can correct adversarial errors. Inspired by a decoder of Hastings~\cite{has14} for 4-dimensional hyperbolic codes~\cite{GL14}, the small-set-flip (SSF) decoder of quantum expander codes can correct $\Theta(\sqrt{n})$ errors and works in a greedy fashion by trying to find small local patterns that can decrease the syndrome weight~\cite{LTZ15}. The decoder of~\cite{EKZ20} 
exploits the homological product structure in a more global way to (slightly) beat the $\sqrt{n}$ bound. 
The recent invention of good quantum LDPC codes raises the natural question of whether one can indeed correct adversarial errors of linear weight in polynomial, or even linear time\footnote{We note that~\cite{LH22b} studies such a decoder for a possible construction of asymptotically good quantum LDPC codes that relies on a conjecture about the existence of 2-sided lossless expander graphs with free group action.}. We answer this question in the affirmative, but note that our decoder is somewhat more complicated than SSF. In particular, while the \emph{normal} mode of the decoder is very similar to SSF, we also need to consider an \emph{exceptional} mode to take care of potentially problematic error patterns.\\

\paragraph{Comparison with recent results.}
Two independent works with decoders for linear weight errors have appeared at the same time as the present manuscript.
First Gu, Pattison and Tang present and analyse a related decoder for quantum Tanner codes~\cite{GPT22}. Instead of directly decoding the mismatch as here, they compute a global cost function equal to the sum of the weights of all the local corrections, and apply a greedy algorithm to find out which qubits to flip in order to decrease the value of the function. Technically, they improve the bound on the robustness of random codes, and show that there exists $\eps>0$ such that the dual of the tensor code obtained from random codes of length $\Delta$ is $\Delta^{3/2+\eps}$-robust with probability tending to 1 when the length $\Delta$ goes to infinity. This better robustness, strictly better than $\Delta^{3/2}$ allows for a simplified decoding algorithm, without any need for an exceptional mode as in the present paper. 
Then Dinur, Hsieh, Lin and Vidick present a variation of the codes of Panteleev
and Kalachev, where the roles of the qubits and of the generators are exchanged~\cite{DHL22}. One advantage is that the qubits are associated with 1-cells of
the chain complex while the generators are associated with 2-cells and 0-cells,
which corresponds to the standard framework of mapping a chain complex to a CSS
code. The hope is that this variation will generalise more easily to longer
chain complexes, which may be helpful to design elusive quantum locally testable
codes~\cite{AE15, has17, LLZ22}. Remarkably,~\cite{DHL22} establishes that dual
of tensor codes obtained from random codes satisfy optimal robustness, up to
$\Omega(\Delta^{2})$. This result was also obtained independently by 
Kalachev and Panteleev~\cite{PK22}. This better robustness then allows the authors to show that their quantum LDPC codes are also asymptotically good, similarly to the original PK codes and to quantum Tanner codes. Furthermore, it is also possible to extend the small set flip decoder of \cite{LTZ15} to design decoders for both kinds of errors that correct arbitrary errors up to a linear weight. 

Finally, the decoding algorithm introduced in the present paper can be significantly simplified if the local codes display optimal robustness. In that case, it becomes possible to parallelize the decoding procedure of both PK codes and quantum Tanner codes to obtain a logarithmic-time complexity~\cite{LZ22c}.\\

The paper is organised as follows. Section~\ref{sec:overview} gives an overview of the paper, recalls the construction of the quantum Tanner codes, describes the decoding algorithm and states our main result showing that the decoder corrects all error patterns of weight below a constant fraction of the code length. Section~\ref{sec:prelim} is a preliminary to the detailed part of the paper and introduces the required technical material. 
Section~\ref{sec:details} gives a detailed description of the quantum Tanner
codes  which rely here on a quadripartite version of the left-right Cayley
complex and not a bipartite version as in~\cite{LZ22}, since this would
complicate the exposition. Similarly, Section~\ref{sec:decoder} discusses 
the decoding algorithm and the tools used in its analysis. Section~\ref{sec:analysis} is the core
of the paper, giving the precise description of the decoder and its detailed
analysis, and establishing the
main theorem. Finally, Section~\ref{sec:LP} explores the links between the
quantum Tanner codes and the lifted product codes of Panteleev and Kalachev, and
explains how our decoding algorithm yields an efficient decoder for the lifted
product codes as well, thereby solving an open problem of \cite{PK21}.

\section{Overview}\label{sec:overview}

\paragraph{The left-right Cayley complex.} 
We will work with the square complex of \cite{PK21}, which can also be
thought of as a quadripartite version \cite{gol21} of the square complex of Dinur \textit{et al.}
\cite{DEL21}. We recall its construction, using the language of \cite{DEL21}.
It is an incidence structure $X$ between a set $V$ of vertices,
two sets of edges $E_A$ and $E_B$, that we will refer to as $A$-edges and
$B$-edges, and a set $Q$ of squares (or quadrangles). 
The vertex-set $V$ is defined from a group $G$. While we 
used a slightly more general bipartite version version of the complex
in~\cite{LZ22}, we prefer to use here the quadripartite version as it will
simplify the exposition. The vertex set is therefore partitioned as $V=V_{00}
\cup V_{01} \cup V_{10} \cup V_{11}$, with each part identified as a copy of the
group~$G$. Formally,
we set $V_{ij}=G\times\{ij\}$ for $i, j \in \{0,1\}$. We also have two self-inverse
subsets $A=A^{-1}$ and $B=B^{-1}$ of the group $G$: for $i \in \{0,1\}$, a vertex $v=(g,i0)\in V_{i0}$ and a vertex
$v'=(g',i1) \in V_{i1}$ are said to be related by an $A$-edge if $g'=ag$ for some $a\in A$.
Similarly, for $j \in \{0,1\}$, vertices $v = (g,0j)$ and $v' = (gb,1j)$ are said to be related by a $B$-edge if $g'=gb$ for
some $b\in B$. The sets $E_A$ and $E_B$ make up the set of $A$-edges and
$B$-edges respectively, and define graphs $\G_A$ and $\G_B$ where $\G_A$
consists of two copies of the double cover of the {\em left} Cayley graph
$\Cay(G,A)$ and $\G_B$ consists of two copies of the double cover of the  {\em right} Cayley graph  $\Cay(G,B)$.

Next, the set $Q$ of squares is defined as the set of $4$-subsets of vertices of the form
\[
\{(g,00),(ag,01),(gb,10),(agb,11)\}.
\]
The four vertices of a square threrefore belong to the four distinct copies of $G$, as depicted on Figure~\ref{fig:square}.

An advantage of the quadripartite version we consider here is that we do not need to enforce any additional constraint on $G, A, B$ such as the Total No-Conjugacy condition defined in~\cite{DEL21} requiring that $ag \ne gb$ for all choices of $g, a, b$.

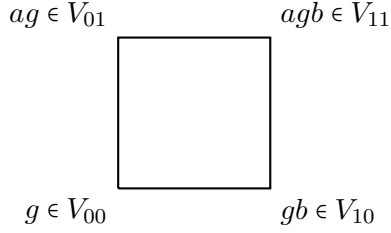
\begin{figure}[h!]
\begin{center}
\begin{tikzpicture}
\draw[thick] (0,0) -- (0,2) -- (2,2) -- (2,0) -- (0,0);
\node[below left] at (0,0) {$g \in V_{00}$};
\node[above left] at (0,2) {$ag \in V_{01}$};
\node[above right] at (2,2) {$agb \in V_{11}$};
\node[below right] at (2,0) {$gb \in V_{10}$};
\end{tikzpicture}
\end{center}
\caption{Square $\{ (g,00), (ag, 01), (agb,11), (gb,10)\}$ of the complex.\label{fig:square}
}
\end{figure}

If we restrict the vertex set to $V_0 := V_{00} \cup V_{11}$, every square is now
incident to only two vertices: one in $V_{00}$ and one in $V_{11}$. The set of squares can then be
seen as a set of edges on $V_0$, and it therefore defines a bipartite graph that we denote
by $\G_0^{\square}=(V_0,Q)$. Similarly, the restriction to the vertices of $V_1 := V_{01} \cup V_{10}$ defines the
graph $\G_1^{\square}$, which is an exact replica of $\G_0^{\square}$: both graphs are 
defined over two copies of the group $G$, with $g,g,'\in G$ being related by an edge
whenever $g'=agb$ for some $a\in A, b\in B$. We assume
for simplicity that $A$ and $B$ are of the same cardinality $\Delta$.
For any vertex $v$, we denote by $Q(v)$ the $Q$-neighbourhood of $v$
which is defined as the set of squares incident to $v$. The $Q$-neighbourhood
$Q(v)$ has cardinality $\Delta^2$ and is isomorphic to the product set $A\times
B$: the situation is illustrated on Figure~\ref{fig:code} and discussed in
detail in Section~\ref{sec:details}.

\begin{figure}[h]
\begin{center}
\begin{tikzpicture}
\draw (0,0) rectangle (3,3);
\draw[pattern=north west lines,pattern color = blue] (1,0) rectangle (1.5,3);
\draw[pattern=north east lines,pattern color = red] (0,1.5) rectangle (3,2);
\draw[step=0.5cm] (0,0) grid (3,3);

\draw (4,0) rectangle (7,3);
\draw[pattern=north west lines,pattern color = blue] (5,0) rectangle (5.5,3);
\draw[pattern=north east lines,pattern color = red] (5,1.5) rectangle (5.5,2);

\draw[step=0.5cm] (4,0) grid (7,3);

\draw (0,4) rectangle (3,7);
\draw[pattern=north east lines,pattern color = red] (0,5.5) rectangle (3,6);
\draw[pattern=north west lines,pattern color = blue] (1,5.5) rectangle (1.5,6);
\draw[step=0.5cm] (0,4) grid (3,7);

\draw (4,4) rectangle (7,7);
\draw[step=0.5cm] (4,4) grid (7,7);
\draw[pattern=north east lines,pattern color = red] (5,5.5) rectangle (5.5,6);
\draw[pattern=north west lines,pattern color = blue] (5,5.5) rectangle (5.5,6);

\node at (-1.4,0.5) {$Q(g,00)$};
\node at (8.5,0.5) {$Q(gb,10)$};
\node at (-1.4,6.5) {$Q(ag,01)$};
\node at (8.5,6.5) {$Q(agb,11)$};
\node at (1.25,-0.4) {$b$};
\node at (5.35,-0.35) {$b$};
\node at (-0.4,1.75) {$a$};
\node at (-0.4,5.75) {$a$};

\end{tikzpicture}
\end{center}
\caption{The four local views $Q(v)$ that contain the square $\{ (g,00), (ag, 01), (agb,11), (gb,10)\}$. The views of two vertices connected by an $A$-edge (resp.~a $B$-edge) share a row depicted in red (resp.~a column in blue). The labeling is chosen to ensure that a given square, such as the one in red and blue, is indexed similarly, by $(a,b)$ here, in the four local views. The $\sigma_X$-type generators are codewords of $C_A\otimes C_B$ in the local views of $V_{00} \cup V_{11}$; the $\sigma_Z$-type generators are codewords of $C_A^\perp \otimes C_B^\perp$ in the local views of $V_{01} \cup V_{10}$. They automatically commute since their support can only intersect on a shared row or column (as depicted), and the orthogonality of the local codes ensure that they commute on this row or column.
\label{fig:code}}
\end{figure}
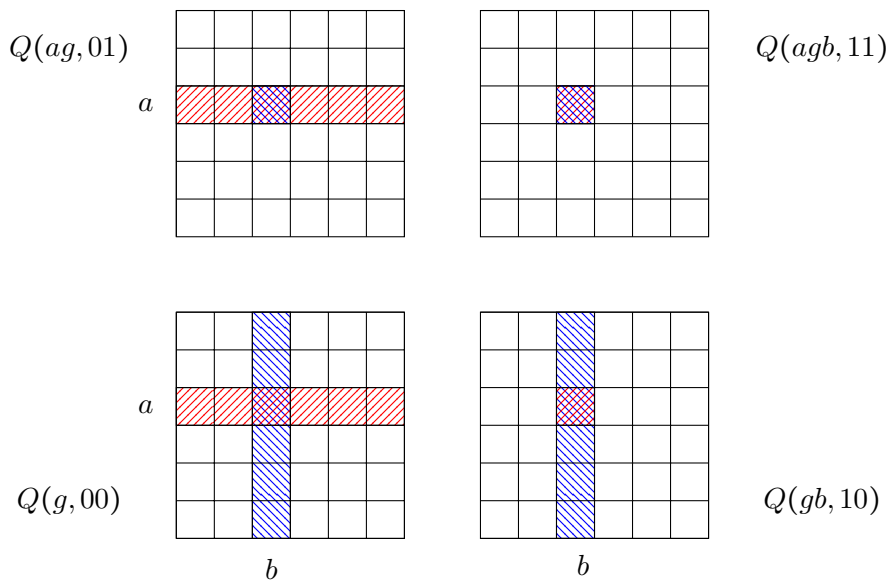

\paragraph{Quantum Tanner codes on the complex $X$.}
A \emph{Tanner code}, or expander code, on a $\Delta$-regular graph $\G=(V,E)$ is the
set of binary vectors indexed by $E$ (functions from $E$ to $\f_2$), such that
on the edge neighbourhood of every vertex $v\in V$, we see a codeword of a small code $C$ of length $\Delta$~\cite{T81,SS96}. We denote the resulting code by $\text{Tan}(\G, C) \subset \F_2^E$.

Following~\cite{LZ22}, we consider \emph{quantum Tanner codes} which are quantum CSS codes formed by two classical Tanner codes $\C_0$ and $\C_1$ with support on the set $Q$ of squares of a Left-Right Cayley complex. The CSS construction requires both codes to satisfy the orthogonality condition $\C_0^\perp \subset \C_1$. Enforcing this condition requires some care for the choice of the local codes of $\C_0$ and $\C_1$. We will define local codes on the space $\f_2^{A\times B}$ that we may think of as the space of matrices whose rows (columns) are indexed by $A$ (by $B$).
If $C_A\subset \f_2^A$ and $C_B\subset\f_2^B$ are two linear codes,
we define the {\em tensor} (or product) code $C_A\otimes C_B$ as the space of
matrices $x$ such that for every $b\in B$ the column vector $(x_{ab})_{a\in A}$
belongs to $C_A$ and for every $a\in A$ the row vector $(x_{ab})_{b\in B}$
belongs to $C_B$.
Recall that the dual $C^\perp$ of a code $C$ of length $n$ is the set of words orthogonal to all words in $C$:
\[ C^\perp := \{ x \in \F_2^n \: : \: \langle x, y \rangle = 0 \; \forall y \in C\}.\]
We finally define $\C_0$ and $\C_1$ to be the following classical Tanner codes:
\[ \C_0 = \text{Tan}(\G_0^\square,(C_A \otimes C_B)^\perp), \quad \C_1 = \text{Tan}(\G_1^\square,(C_A^\perp \otimes C_B^\perp)^\perp),\]
with bits associated to each square of $\Q$ and local constraints enforced at the vertices of $V_0$ and $V_1$, respectively. We will refer to the dual of tensor codes $(C_A \otimes C_B)^\perp$ as \emph{dual tensor codes}.
Let us denote $C_0 := C_A \otimes C_B$ and $C_1 := C_A^\perp \otimes C_B^\perp$, so that $\C_i = \text{Tan}(\G_i^\square, C_i^\perp)$ for $i \in \{0,1\}$. 
To check the orthogonality condition between the two codes, it is convenient to look at their generators (or parity-checks). We define a $C_0$-generator for $\C_0$ (resp.~a $C_1$-generator for $\C_1$) as a vector of $\F_2^\Q$ whose support lies entirely in the $\Q$-neighborhood $\Q(v)$ of $V_{0}$ (resp.~$V_1$), and which is equal to a codeword of $C_0$ (resp.~$C_1$) on $\Q(v)$. The Tanner code $\C_0$ (resp.~$\C_1$) is defined as the set of vectors orthogonal to all $C_0$-generators (resp.~$C_1$-generators).
The condition $\C_0^\perp \subset \C_1$ simply says that all $C_0$-generators
are orthogonal to all $C_1$-generators: this follows from the fact that if a
$C_0$-generator on $v_0 \in V_0$ and a $C_1$-generator on $v_1 \in V_1$ have
intersecting supports, then $v_0$ and $v_1$ must be neighbours in the left-right Cayley complex and their local views must intersect on either a column or a row, on which the two generators equal codewords of $C_A$ and $C_A^\perp$, or of $C_B$ and $C_B^\perp$ (see Fig.~\ref{fig:code}).

We denote by $\eQ = (\C_0, \C_1)$ the \emph{quantum Tanner code} obtained in this way from $\C_0$ and $\C_1$, and recall the main result from~\cite{LZ22}.

\begin{theo}[Quantum Tanner codes are asymptotically good~\cite{LZ22}]\label{thm:good-qLDPC}
There exists an infinite family of square complexes $X$ such that the following holds.  
For any $\rho \in (0,1/2)$, $\eps\in(0,1/2)$ and $\delta>0$ satisfying
 $-\delta \log_2 \delta - (1-\delta)\log_2(1-\delta) < \rho,$
randomly choosing $C_A$ and $C_B$ of rates
$\rho$ and $1-\rho$ yields, with probability $>0$ for $\Delta$ large enough,
an infinite sequence of quantum codes $\eQ=(\C_0,\C_1)$ of rate $(2\rho-1)^2$,
length $n$ 
and minimum distance $\geq \delta n/4 \Delta^{3/2+\eps}$.
\end{theo}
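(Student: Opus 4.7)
The plan is to establish the theorem in three parts: existence of an infinite family of suitable square complexes, a lower bound on the CSS rate, and a lower bound on the minimum distance. The first two are essentially counting arguments; the distance bound is the substantive content and combines a robustness property of dual tensor codes with spectral expansion of the underlying Cayley graphs.

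For the complex existence I would invoke standard Ramanujan--Cayley constructions, for example on $\PSL_2(q)$ with symmetric generating sets $A,B\subset G$ of size $\Delta$, so that $\Cay(G,A)$, $\Cay(G,B)$, and hence the graphs $\G_A,\G_B,\G_0^\square,\G_1^\square$ all have a nontrivial spectral gap; letting $q\to\infty$ at fixed $\Delta$ produces the desired infinite sequence of complexes with $n\to\infty$. The rate bound is a direct Tanner-code dimension count: at each $v\in V_0$ or $v\in V_1$ the local check space has codimension $\rho(1-\rho)\Delta^2$ in $\F_2^{\Q(v)}$, so with $|V_0|=|V_1|=2|G|$ and $n=|\Q|=|G|\Delta^2$ the CSS inequality $k\geq\dim\C_0+\dim\C_1-n$ gives $k\geq n-4|G|\rho(1-\rho)\Delta^2=(2\rho-1)^2 n$.

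For the distance bound, the crucial external input is that with positive probability over random $C_A,C_B$ of rates $\rho,1-\rho$, both codes have relative distance at least $\delta$ and both dual tensor codes $(C_A\otimes C_B)^\perp$ and $(C_A^\perp\otimes C_B^\perp)^\perp$ are $\Delta^{3/2-\eps}$-robust, in the sense that every codeword of weight $w$ has its support covered by $O(w/\Delta^{3/2-\eps})$ rows and columns. Granting this, I would take $x$ of minimum weight in $\C_0\setminus\C_1^\perp$ (the case $\C_1\setminus\C_0^\perp$ being symmetric) and normalize it modulo $\C_1^\perp$ by iteratively subtracting any $C_1$-generator on a local view at $v\in V_1$ that strictly decreases the weight. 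On termination each nonzero local view at $V_1$ is irreducible, and applying robustness of $(C_A^\perp\otimes C_B^\perp)^\perp$ forces either the view to lie in $C_1$ (contradicting irreducibility) or to have weight $\Omega(\delta\Delta^{3/2-\eps})$. The dual-tensor constraint on local views at $V_0$, combined with the spectral expansion of the square-graphs, then prevents the nonzero $V_1$-local views from concentrating on too few vertices; a double-counting argument over square/vertex incidences converts the local weight lower bound into the global estimate $|x|\geq\delta n/4\Delta^{3/2+\eps}$.

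The main obstacle is the distance step, and within it two subpoints. First, the robustness of random dual tensor codes is a delicate probabilistic statement that requires uniform control, via a union bound, over all small row/column support patterns, and I expect it to dictate the final $\Delta^{3/2-\eps}$ factor (and hence the $\Delta^{3/2+\eps}$ denominator in the distance). Second, the geometric part requires showing that the local reduction terminates at a well-structured representative on which local robustness and global expansion can be combined; here the quadripartite geometry of the left-right Cayley complex, specifically the fact that adjacent $V_0$- and $V_1$-local views share exactly one row or column on which the pair $(C_A\otimes C_B,\,C_A^\perp\otimes C_B^\perp)$ has the orthogonality needed, is used essentially to propagate local lower bounds across the complex without losing factors larger than $\Delta^\eps$.
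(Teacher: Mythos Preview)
Your handling of the complex existence and the rate bound is fine, but the distance argument has a real gap in the step ``applying robustness of $(C_A^\perp\otimes C_B^\perp)^\perp$ forces either the view to lie in $C_1$ \ldots\ or to have weight $\Omega(\delta\Delta^{3/2-\eps})$''. Robustness is a statement about \emph{codewords} of the dual tensor code. For $x\in\C_0$ the local views that are dual tensor codewords sit at $V_0$ (they lie in $(C_A\otimes C_B)^\perp=C_A^\perp\otimes\F_2^B+\F_2^A\otimes C_B^\perp$), not at $V_1$; the $V_1$-local views of $x$ are arbitrary vectors of $\F_2^{A\times B}$, and robustness of $(C_A^\perp\otimes C_B^\perp)^\perp$ says nothing about them. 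Even on the correct side, robustness does not yield a local weight lower bound of order $\Delta^{3/2-\eps}$: the minimum distance of a dual tensor code is only $\delta\Delta$, and robustness merely says that codewords of weight $\leq\Delta^{3/2-\eps}$ are supported on few rows and columns. So the dichotomy you assert does not follow.

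The paper's argument (sketched after Theorem~\ref{thm:good-qLDPC}) runs in the opposite direction from your attempted local lower bound. One applies robustness at $V_0$, decomposing each small $V_0$-view into individual $C_A^\perp$-column and $C_B^\perp$-row codewords; each such column or row also sits in the $Q$-neighbourhood of some vertex of $V_1$. If $|x|$ is below the claimed threshold, expansion in $\G_A$ and $\G_B$---not merely in the square graph $\G^\square$---forces most of these column and row codewords to cluster on the $Q$-neighbourhoods of a small set of $V_1$-vertices. At such a vertex the local view of $x$ is then close to both $C_A^\perp\otimes\F_2^B$ and $\F_2^A\otimes C_B^\perp$, hence (via Proposition~\ref{prop:tensor}) close to the tensor code $C_A^\perp\otimes C_B^\perp$, so a nonzero $C_1$-generator strictly decreases $|x|$, contradicting termination of your reduction. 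In short: your reduction step (subtracting $C_1$-generators at $V_1$) is the right one, but the analysis of the stuck configuration must pass through the $V_0$ dual tensor structure and expansion in $\G_A,\G_B$, not through a direct local bound at $V_1$.
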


Concretely, the complexes $X$ that work in Theorem~\ref{thm:good-qLDPC} are all
complexes such that $\Cay(G,A)$ and $\Cay(G,B)$ are Ramanujan graphs and for
which $\Delta$ is fixed, independent of the complex size, and large enough.

The proof of Theorem~\ref{thm:good-qLDPC} in~\cite{LZ22} went along the
following lines: If one considers a codeword $\mathbf{x}$ of $\C_1^\square$ and
identifies it with its support, a set of squares that reduces to a set of edges in the graph $\G_1^\square$, then
this is a subgraph of minimum degree at least $\delta\Delta$, a lower bound on
the minimum distance of the component dual tensor code. Expansion in $\G_1^\square$ is not
quite enough to deduce that $|\x|$ must be large, but if $|\x|$ is small,
expansion does tell us that its
local views on vertices of $V_1$ must have a small weight (close to $\Delta$)
on average. If we decompose these local views, which have the structure of dual
tensor codewords on $A\times B$, as sums of column vectors, \textit{i.e.}\ elements of
$C_A\otimes\F_2^B$, and row vectors in $\F_2^A\otimes C_B$, then,
the definition of the complex $X$ tells us that the individual column
and row vectors from these local views also exist in the $Q$-neighbourhoods of
vertices of $V_0$.
Switching to expansion in the graphs $\G_A$ and $\G_B$, we obtain that most
individual column $C_A$-codewords and row $C_B$-codewords from the
$Q$-neighbourhoods of vertices of $V_1$, must cluster around local views of
vertices of $V_0$. A local analysis of these clustered local views then shows that
adding some non-zero tensor codeword must reduce its Hamming weight, and in this
way one obtains iteratively that the global codeword $\x$ of $\C_1$ can be
expressed as a sum of generators, which yields the lower bound on the minimum
distance.

\paragraph{Robustness of the component codes $C_A$ and $C_B$.} Crucial to the
analysis sketched above is the ability to claim that if a dual tensor codeword
$x=(x_{ab})\in \F_2^{A\times B}$ has sufficiently small weight, then it can be expressed as a sum
$x=r+c$, where the union of row codewords $r\in \F_2^A\otimes C_B$ and the union
of column codewords $c\in C_A\otimes\F_2^B$ are both of small weight. This is a
{\em robustness property.} More precisely, we say that a dual tensor code $(C_A^\perp\otimes
C_B^\perp)^\perp=C_A\otimes\f_2^B + \f_2^A\otimes C_B$ is
$w$-robust if any codeword $x$ of weight $\leq w$ has its support
included in the union of $|x|/d_A$ columns and $|x|/d_B$ rows, where $d_A$ and
$d_B$ are the minimum distances of $C_A$ and $C_B$. A similar notion
is used both in~\cite{DEL21} and~\cite{PK21}. What was shown in~\cite{LZ22} is that 
for any $\eps >0$ and large enough $\Delta$, 
when $C_A$ and $C_B$ are chosen at random, then the dual tensor codes 
$(C_A\otimes C_B)^\perp$ and $(C_A^\perp\otimes C_B^\perp)^\perp$
are both $\Delta^{3/2-\eps}$-robust with high probability. This property will
again be crucial when considering decoding issues.

\paragraph{The decoding problem.} 
A standard property of QECC is that the ability to correct all Pauli errors of weight up to $t$ implies the ability to correct \emph{arbitrary} errors of weight less than $t$. For a CSS code, bit flips and phase flips can be decoded independently, and it is therefore sufficient to consider only one type of errors. Since we chose $C_A$ and $C_B$ of rate $\rho$ and $1-\rho$, respectively, we see that the classical Tanner codes $\C_0$ and $\C_1$ have the same parameters and that the resulting quantum Tanner codes will protect equally well against bit flips and phase flips. 
Without loss of generality, we therefore consider a phase-flip error with
support on $\error \in \F_2^\Q$. This error is detected by the classical Tanner
code $\C_1$, and the goal of the decoder is to output an error candidate
$\hat{\error}$ such that $\error + \hat{\error} \in \C_0^\perp$. Recall indeed
that an element of $\C_0^\perp$ is a sum of generators, and therefore acts
trivially on the codespace. For this reason, it is not necessary to recover the
exact error $\error$, and any \emph{equivalent} error in the coset $\error + \C_0^\perp$ is equally good.

An approach mentioned in~\cite{LZ22} for decoding (classical) Tanner codes is to
define a \emph{mismatch} vector that summarises how the local decoders associated to each local code may disagree about the error, and then try to locally modify this mismatch in order to reduce its weight. 
It is natural to see the error $\error$ as a collection of local views on the
vertices of $V_1$: abusing notation slightly, we can write $\error = \{e_v\}_{v
\in V_1}$, where we see the local views $e_v$ both as small
length vectors in $\F_2^{Q(v)}$ and as vectors of $\F_2^Q$ with $0$ coordinate
values
outside $Q(v)$. Since each square of $Q$ belongs both to a local view of
$V_{01}$ and to a local view of $V_{10}$, we have that $\sum_{v \in V_{01}} e_v = \sum_{v \in V_{10}} e_v$. 
For each vertex $v \in V_1$, one can compute a local error $\eps_v$ with support on $Q(v)$ of minimal Hamming weight yielding the corresponding local syndrome. This gives a decomposition of the local views of the error 
\[e_v = \eps_v + c_v + r_v,\]
with $c_v \in C_A\otimes \F_2^B, r_v = \F_2^A \otimes C_B$, and $\eps_v$ of minimal Hamming weight. Here, we have that $c_v + r_v$ is a codeword of the dual tensor code $C_1^\perp = C_A\otimes \F_2^B +  \F_2^A \otimes C_B$. 
The issue is that the local views $\{\eps_v\}_{v\in V_1}$ are in general not consistent and do not define a global error candidate. 
We measure this inconsistency by defining the mismatch vector 
\begin{equation}
\label{eq:mismatch}
Z := \sum_{v \in V_1} \eps_v\in \F_2^Q.
\end{equation}
If it is equal to zero, it means that each square/qubit is affected the same value for the two views it belong to, and the decoder is able to define a global error. Otherwise, the support of $Z$ corresponds to the set of squares for which the local views disagree. Exploiting the previous remark that $\sum_{v \in V_1} e_v = 0$, we can rewrite the mismatch as
\[
Z = \sum_{v \in V_1} r_v +c_v = C_0 + R_0 + C_1+R_1,
\]
where we defined
\[ C_0 = \sum_{v \in V_{10}} c_v, \quad R_0 = \sum_{v \in V_{01}} r_v, \quad C_1 = \sum_{v \in V_{01}} c_v, \quad R_1 = \sum_{v \in V_{10}} r_v.\]

The idea behind our decoder is to find a decomposition $\{\hat{r}_v, \hat{c}_v\}_{v \in V_1}$ such that $Z =  \sum_{v \in V_1} \hat{r}_v +\hat{c}_v$. 
In that case, the decoder will output the error candidate $\hat{\error}=\{\hat{e}_v\}_{v \in V_1}$ with 
\begin{equation}\label{eq:evhat}
 \hat{e}_v := \eps_v + \hat{r}_v + \hat{c}_v.
\end{equation}
In particular, the vectors $\error$ and $\hat{\error}$ differ by an element of $\C_1$
(since they have the same syndrome on $V_1$), and a sufficient condition to
guarantee the success of the decoder is that $|\error + \hat{\error}|$ is less than the
minimum distance. In that case, it means that $\error$ and $\hat{\error}$ necessarily
differ by an element of $\C_0^\perp$, that is a sum of generators. It is not
difficult to see that $|Z| = O(|\error|)$. Therefore if $|\error| \leq \kappa n$ for
some sufficiently small $\kappa>0$ and if the algorithm can find $\hat{\error}$ of weight at most $O(|Z|)$, then the decoder will return a correct solution since the minimum distance is linear in $n$. 

We now describe the main subroutine of decoding algorithm, which aims at finding a decomposition $Z =  \sum_{v \in V_1} \hat{r}_v +\hat{c}_v$.
We will keep track of 5 vectors initialized as follows
\[ \hat{Z} := Z, \quad \hat{C}_0 := 0, \quad \hat{R}_0:=0, \quad \hat{C}_1 :=0, \quad \hat{R}_1 :=0.\]
It will alternate between two procedures:

\paragraph{Sequential procedure in $V_0$.}
While there exists some $v\in V_{ii}$ and $c_v \in C_A \otimes \F_2^B, r_v \in  \F_2^A \otimes C_B$ such that $|\hat{Z}+ c_v +r_v| < |\hat{Z}|$, perform the update:
\[ \hat{Z} \leftarrow \hat{Z} + c_v +r_v, \quad \hat{C}_i \leftarrow \hat{C}_i +c_v, \quad \hat{R}_i\leftarrow \hat{R}_i+r_v.\]

Now it may happen that this sequential decoder will get stalled at some point,
and not be able to decrease the Hamming weight $|\hat{Z}|$ by a local
modification on a $Q$-neighbourhood of a vertex of $V_0$. When this happens, we
can consider the subgraph of $\G_0^\square$ induced by the vector $(R_0+C_0)\cap
(R_1+C_1)$ and realise that its minimum degree must be at least
$\delta\Delta/2$ (provided the initial error $\error$ is of sufficently small
weight). Conceptually, this subgraph does not look very different from
a Tanner codeword, and if we follow the blueprint of the proof of the lower
bound on the minimum distance (sketched just after
Theorem~\ref{thm:good-qLDPC}), then it is natural to expect that the stalled 
decoder will be unlocked simply by switching the sequential decoding procedure
to vertices of $V_1$. However, we cannot quite make this work. The issue is that
we again need robustness of the component codes $C_A,C_B$, and this time the
robustness parameter $w=\Delta^{1/2-\eps}$ guaranteed us by random choice falls
just short. To circumvent this problem, we use a more complicated decoding
procedure to unlock the stalled sequential decoder. It consists of two rounds of
parallel decoding.

\paragraph{Parallel decoding procedure.}
\begin{itemize}
\item[--] \textbf{First parallel decoding step, on vertices of $V_1$.} 
Identify vertices of $V_1$ for which there exists $c_v \in C_A \otimes \F_2^B,
r_v \in  \F_2^A \otimes C_B$ and  subsets $A_0\subset A$ and $B_0\subset B$ of
the local coordinate sets that are sufficiently large and 
such that the Hamming weight of $\hat{Z}+ c_v +r_v$ decreases sufficiently
{\em on the reduced coordinate set} $A_0\times B_0$. 
For all vertices $v$ where this is satisfied, update as before:
\[ \hat{Z} \leftarrow \hat{Z} + c_v +r_v, \quad \hat{C}_j \leftarrow \hat{C}_j
+c_v, \quad \hat{R}_i\leftarrow \hat{R}_i+r_v.\]
See Section~\ref{sec:tweaked} for a precise description of the criteria for
updating.
\item[--]
\textbf{Second parallel decoding step, on vertices of $V_0$.}
Identify vertices of $V_0$ for for which there exists $c_v \in C_A \otimes \F_2^B,
r_v \in  \F_2^A \otimes C_B$ such that $|\hat{Z}+ c_v +r_v|<|\hat{Z}|$. Among
all possible choices, maximise the difference $|\hat{Z}|-|\hat{Z}+ c_v +r_v|$.
For all such vertices perform the update
\[ \hat{Z} \leftarrow \hat{Z} + c_v +r_v, \quad \hat{C}_j \leftarrow \hat{C}_j
+c_v, \quad \hat{R}_i\leftarrow \hat{R}_i+r_v.\]
\end{itemize}

What the decoder does by default is apply the sequential decoding procedure.
Whenever this is not possible, it applies the parallel decoding procedure.
It continues until $\hat{Z} = 0$, at which point it stops and outputs
$(\hat{R}_0, \hat{C}_0, \hat{R}_1, \hat{C}_1)$.

A word of comment is in order here. As mentioned above, what happens is that 
when the sequential decoder is stalled, we 
cannot guarantee the existence of a $Q$-neighbourhood $Q(v)$, for $v\in V_1$, on
which we can decrease $|\hat{Z}|$. But this is almost the case: what we can
guarantee is
the existence of vertices $v$ for which one can decrease
$|\hat{Z}|$ on a large subset of $Q(v)$. Furthermore, this will be the case for
most vertices of $V_1$ on whose $Q$-neighbourhoods $\hat{Z}$ has a large Hamming
weight: this is why we apply this tweaked local decoding procedure in parallel
on all possible vertices of $V_1$. After this first parallel decoding step we
cannot guarantee that $|\hat{Z}|$ has decreased, however we can guarantee a
substancial decrease of $|\hat{Z}|$ after the {\em second} parallel decoding step
described above. 

Finally, the output of the decoder gives us a decomposition
$Z=\hat{C}_0+\hat{R}_0+\hat{C}_1+\hat{R}_1$ of the original mismatch
\eqref{eq:mismatch}. For $v\in V_{01}$, the local view $\hat{r}_v$ of $\hat{R}_0$ and the
local view $\hat{c}_v$ of $\hat{C}_1$ give the required local view 
\eqref{eq:evhat} of a candidate global error vector $\hat{\error}$, while for
$v\in V_{10}$ the local view $\hat{e}_v$ of $\hat{\error}$ stems from the local views of
$\hat{R}_0$ and $\hat{C}_1$.

Our main result states that the above decoder will always succeed in producing
an adequate error vector $\hat{\error}$, provided the
initial error weight $|\error|$ is a sufficiently small fraction of the code length.
\begin{theo}\label{thm:main}
There exists a constant $\kappa$, depending only on $\delta$, a lower bound for the
minimum distances of both component codes $C_A$ and $C_B$, such that for large
enough fixed $\Delta$, the above decoding algorithm corrects all error patterns
of weight less than $\kappa n/\Delta^4$ for the quantum Tanner code of length
$n=|Q|$.
\end{theo}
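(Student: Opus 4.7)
The plan is to analyse the decoder via a potential function argument based on the Hamming weight $|\hat{Z}|$ of the residual mismatch, controlling separately (i) that $|\hat{Z}|$ strictly decreases at each step so that the algorithm terminates after $O(|Z|)$ iterations, and (ii) the total weight accumulated into the four pieces $\hat{R}_0,\hat{C}_0,\hat{R}_1,\hat{C}_1$, which by \eqref{eq:evhat} governs the Hamming weight of the output error $\hat{\error}$. As a first preliminary step, one would observe that, since each square belongs to exactly two local views (one in $V_{01}$ and one in $V_{10}$) and the minimum-weight local correction $\eps_v$ has weight $O(|e_v|)$ by the minimum distance of the component codes, the definition \eqref{eq:mismatch} immediately yields $|Z|=O(|\error|)$.

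The sequential procedure is easy to handle: each step strictly decreases $|\hat{Z}|$ by at least $1$ while modifying $\hat{R}_i+\hat{C}_i$ by a vector of weight $O(\Delta^2)$ supported on the $Q$-neighbourhood of a single vertex of $V_0$. The delicate point is what happens when the sequential procedure stalls with $\hat{Z}\ne 0$. In such a configuration, the $w$-robustness of the dual tensor code $C_1^\perp$ applied to the restriction of $\hat{Z}$ to each $Q(v)$ with $v\in V_0$ forces those local views to have weight either zero or at least $w\sim \Delta^{3/2-\eps}$. Viewing the support of $\hat{Z}$ as a set of edges of $\G_1^\square$ and following the template of the minimum distance proof of~\cite{LZ22}, I would then show that the active vertices of $V_1$ (those where $\hat{Z}$ has a non-zero local view) induce a subgraph of the support with minimum degree at least $\delta\Delta/2$. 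This structural statement is the prerequisite for the parallel step.

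The main obstacle is the analysis of the parallel procedure in this stalled regime, because the available robustness parameter $\Delta^{3/2-\eps}$ is not quite sufficient to guarantee, at each active vertex of $V_1$, the existence of a dual tensor codeword that strictly reduces the local weight on the whole of $Q(v)$; this is precisely the gap that separates the present parameters from the optimal $\Delta^2$-robustness of~\cite{DHL22,PK22}. My plan is to proceed in two rounds. First, show that \emph{most} active vertices of $V_1$ nevertheless admit a dual tensor correction that strictly reduces the local weight on a large sub-rectangle $A_0\times B_0\subset A\times B$ of $Q(v)$, as formalised by the tweaked step of Section~\ref{sec:tweaked}; this should follow from robustness applied to the local views combined with the minimum degree statement above. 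Second, show that although applying all these partial corrections in parallel may temporarily increase $|\hat{Z}|$ outside the sub-rectangles, the overflow is concentrated on a controlled number of $V_0$ vertices, where the second parallel round decreases the weight via standard local decoding. The net decrease of $|\hat{Z}|$ by a constant fraction of the active weight is then obtained by a careful double-counting argument that uses the spectral expansion of $\G_A$ and $\G_B$ to bound the overflow, and getting the constants right here is the most technical part of the proof.

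Putting the ingredients together, the algorithm terminates after $O(|Z|)=O(|\error|)$ iterations with $\hat{Z}=0$; the invariant $Z=\hat{R}_0+\hat{C}_0+\hat{R}_1+\hat{C}_1$ then gives a decomposition of the mismatch, and the total weight of these accumulators, which by \eqref{eq:evhat} controls $|\hat{\error}|$, is at most a polynomial-in-$\Delta$ multiple of $|\error|$. Choosing $\kappa$ small enough of order $\delta/\Delta^{5/2-\eps}$ (with the precise exponent dictated by the analysis above) ensures that $|\error+\hat{\error}|$ is strictly smaller than the minimum distance bound of Theorem~\ref{thm:good-qLDPC}, so that $\error+\hat{\error}\in\C_0^\perp$ and the decoder outputs an equivalent error. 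Linearity of the running time follows from the iteration bound together with the fact that each local operation on a $Q$-neighbourhood has size constant in $\Delta$, so it is performed in amortised constant time per affected vertex.
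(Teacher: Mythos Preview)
Your outline has the right overall shape (track $|\hat{Z}|$, separate sequential from parallel, then bound $|\error+\hat{\error}|$ against the minimum distance), but it contains a concrete error and is missing two ingredients that the paper's proof relies on essentially.

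First, the claim that ``$w$-robustness of $C_1^\perp$ applied to the restriction of $\hat{Z}$ to each $Q(v)$ with $v\in V_0$ forces those local views to have weight either zero or at least $w$'' is not correct. The local view $\hat{Z}|_{Q(v)}$ is not a dual tensor codeword: it equals $(R_0+C_0)+(R_1+C_1)$ on $Q(v)$, and nothing prevents it from having small nonzero weight. Robustness is a structural statement about decompositions of genuine codewords, not a minimum-weight statement about arbitrary vectors. The paper never argues this way; instead it fixes a \emph{minimal decomposition} $\mathcal{Z}=(C_0,R_0,C_1,R_1)$ of $\hat{Z}$ and studies the set $S_0$ of \emph{active} vertices of $V_0$ (those where $R_i+C_i\neq 0$). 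The stalled condition then says that on $Q(v)$ the weight of $(R_0+C_0)\cap(R_1+C_1)$ is at least half that of $R_i+C_i$, which is what drives the expander mixing arguments in $\G_0^\square$, $\G_A$, $\G_B$.

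Second, you are missing a second potential. The Hamming weight $|\hat{Z}|$ alone does not suffice: after the first parallel step $|\hat{Z}|$ may increase, and you need a way to bound the total contribution of \emph{all} parallel rounds to $|\hat{R}_0+\hat{C}_1|$. The paper introduces the norm $\|\hat{Z}\|$ (the minimum total number of nonzero rows and columns over all decompositions), proves $\|\hat{Z}''\|\leq\|\hat{Z}\|/2$ across a full parallel procedure (Lemma~\ref{lem:/2}), and uses this geometric decrease to sum the contributions. Relatedly, the entire analysis of the parallel step is conditional on the invariant $|S_0|\leq\frac{\delta}{8\Delta}|V_0|$; maintaining this invariant throughout is precisely what forces the threshold $|\error|\lesssim n/\Delta^4$ in the theorem (your suggested $\Delta^{5/2-\eps}$ does not match). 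Without tracking $\|\hat{Z}\|$ and $|S_0|$ you cannot close the argument.

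Third, your plan for the parallel step (``most active vertices of $V_1$ admit a correction on a sub-rectangle; overflow is handled by expansion'') is too vague to constitute a proof. The paper's mechanism is substantially more intricate: it partitions $S_0$ into \emph{normal} and \emph{exceptional} vertices, introduces the sets $\col_i^\beta,\row_i^\beta$ of heavily loaded columns/rows, defines \emph{agglutinating} vertices of $V_1$, and proves a chain of lemmas (Lemmas~\ref{lem:main}--\ref{lem:Tgamma}, Propositions~\ref{prop:agglu}--\ref{prop:notdecoded}) culminating in Theorem~\ref{thm:tweaked}, which says $|S_0'|\leq\kappa|S_0|/\Delta^{1-4\eps}$. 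Only then does the second parallel step yield $|\hat{Z}''|\leq\kappa|S_0|\Delta^{4\eps}$ (Theorem~\ref{thm:hatZ}), and combining with Proposition~\ref{prop:stalled} ($|\hat{Z}|\geq\kappa|S_0|\Delta^{1/2-\eps}$ when stalled) gives the net decrease. None of this structure is visible in your proposal.
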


Since the decoder needs only to look at
$Q$-neighbour\-hoods on which the mismatch $|Z|$ is nonzero, by
carefully keeping track of this set of vertices, we obtain a decoder that runs in linear time. 

A direct consequence of the analysis is the soundness of the code, a weaker property than local testability that asserts that sufficiently small errors have a syndrome with a weight proportional to that of the error (local testability would require this to hold for arbitrary errors). This result was previously established for quantum Tanner codes in \cite{HL22}.

Interestingly, our decoder can be adapted to work with the asymptotically good codes of Panteleev and Kalachev~\cite{PK21}. In particular, we show that the theorem above still holds for these codes, provided the error patterns have weight at most $\kappa n/\Delta^6$. Along the way, we point out the hidden relation between the quantum Tanner codes and the lifted product codes of~\cite{PK21}.

\paragraph{Additional comments.}

So as to not let the analysis of the decoder become overly burdensome, we have not tried to make the constant $\kappa$ explicit in
Theorem~\ref{thm:main}, nor have we tried to optimise the dependency in
$\Delta$ of the number of correctable errors. For a similar reason, we have left out possible variations on the decoder. In particular, since one has to resort to
parallel decoding when the sequential decoder is stalled, a natural temptation
is to make the decoder fully parallel and replace the sequential decoding step
by a parallel one. We do not anticipate difficulties of a novel nature in doing
so, but our analysis stems rather naturally from a sequential approach to
decoding, and we have tried to limit the flow of technicalities by avoiding
these variations.

As mentioned in~\cite{PK21} and~\cite{LZ22}, the question of alternatives to random choice for
the component codes $C_A,C_B$ remains open. 
Additionally, finding component codes $C_A,C_B$ with better robustness would simplify our
analysis and yield a fully sequential decoder. On the other hand, showing that
decoding a linear number of adversarial errors is possible with reduced
robustness potentially simplifies the search for constructions of adequate
component codes.

\paragraph{Acknowledgements.}
We would like to thank Benjamin Audoux, Alain Couvreur, Shai Evra, Omar Fawzi,
Tali Kaufman, Jean-Pierre Tillich, and Christophe Vuillot for many fruitful discussions on quantum codes over the years. 
We acknowledge support from the Plan France 2030 through the project NISQ2LSQ, ANR-22-PETQ-0006. GZ also acknowledges support from the ANR through the project QUDATA, ANR-18-CE47-0010.

\newpage

\section{Preliminaries}\label{sec:prelim}
\subsection{Expander Graphs}
Let $\G=(V,E)$ be a graph. Graphs will be undirected but may have multiple
edges. 
For $S,T \subset V$, let $E(S,T)$ denote the multiset of edges with one endpoint in
$S$ and one endpoint in $T$.
Let $\G$ be a $\Delta$-regular graph on $n$ vertices, and let
$\Delta=\lambda_1\geq\lambda_2\geq \ldots \geq \lambda_n$ be the eigenvalues of the
adjacency matrix of $\G$. For $n\geq 3$, we define $\lambda(\G):= \max\{|\lambda_i|,
\lambda_i\neq \pm \Delta\}$. 
The graph $\G$ is said to be \emph{Ramanujan} if $\lambda(\G)\leq 2\sqrt{\Delta-1}$.

We recall the following version of the expander mixing lemma (see \textit{e.g.}\
\cite{HLW06}) for bipartite graphs.
\begin{lemma}[Expander mixing lemma] \label{lem:mixing}
Let $\cG$ be a connected $\Delta$-regular bipartite graph on the vertex set $V_0\cup V_1$.
For any pair of sets $S\subset V_0, T \subset V_1$, it holds that
\[ |E(S,T) | \leq \frac{\Delta}{|V_0|} |S| |T| + \lambda(\G) \sqrt{|S| |T|}.\]
\end{lemma}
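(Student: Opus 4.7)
The plan is to give a standard spectral-decomposition proof, adapted to the bipartite setting so that the correct constant $\Delta/|V_0|$ (rather than $\Delta/|V|$) appears. First I would note that $\Delta$-regularity forces $|V_0|=|V_1|=:N$, and identify $|E(S,T)|$ as the bilinear form $\chi_S^{\top} M\, \chi_T$, where $M$ is the adjacency matrix of $\G$ and $\chi_S,\chi_T\in\mathbb{R}^{V}$ are the indicator vectors of $S$ and $T$ (extended by zero off $V_0$, resp.\ $V_1$).

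The key feature to exploit is that $M$ is symmetric and the graph is bipartite, so its spectrum is symmetric about $0$: one has an orthonormal eigenbasis $v_1,\ldots,v_{2N}$ with eigenvalues $\Delta=\lambda_1\geq\cdots\geq\lambda_{2N}=-\Delta$, where $v_1=\frac{1}{\sqrt{2N}}\mathbf{1}_{V}$ (the all-ones vector) and $v_{2N}=\frac{1}{\sqrt{2N}}(\mathbf{1}_{V_0}-\mathbf{1}_{V_1})$. Expanding $\chi_S=\sum_i a_i v_i$ and $\chi_T=\sum_i b_i v_i$, one computes $a_1=|S|/\sqrt{2N}$, $a_{2N}=|S|/\sqrt{2N}$, $b_1=|T|/\sqrt{2N}$, $b_{2N}=-|T|/\sqrt{2N}$, so the contribution of the two extreme eigenvalues to $\chi_S^{\top}M\chi_T=\sum_i \lambda_i a_i b_i$ is $\Delta\cdot\frac{|S||T|}{2N}+(-\Delta)\cdot\left(-\frac{|S||T|}{2N}\right)=\frac{\Delta}{N}|S||T|=\frac{\Delta}{|V_0|}|S||T|$, which is exactly the main term.

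It then remains to bound the residual sum $\sum_{i\neq 1,2N}\lambda_i a_i b_i$. Since $|\lambda_i|\leq\lambda(\G)$ for the remaining indices, Cauchy--Schwarz gives
\[
\Bigl|\sum_{i\neq 1,2N}\lambda_i a_i b_i\Bigr|\leq \lambda(\G)\sqrt{\sum_i a_i^2}\sqrt{\sum_i b_i^2} = \lambda(\G)\,\|\chi_S\|\,\|\chi_T\| = \lambda(\G)\sqrt{|S||T|},
\]
using that $\|\chi_S\|^2=|S|$ and $\|\chi_T\|^2=|T|$. Combining this with the extremal-eigenvalue contribution yields the claimed inequality.

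I expect no real obstacle: the only point that requires some care, and which distinguishes the bipartite statement from the generic expander mixing lemma, is correctly accounting for the $-\Delta$ eigenvalue and its eigenvector $\mathbf{1}_{V_0}-\mathbf{1}_{V_1}$. That contribution doubles the naive main term from $\Delta|S||T|/|V|$ to $\Delta|S||T|/|V_0|$, and one must be careful that this extreme eigenvalue is excluded from the spectral-gap sum.
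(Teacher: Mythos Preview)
The paper does not actually prove this lemma: it is stated as a recalled result with a reference to~\cite{HLW06}, and no argument is given. Your proposal is the standard spectral-decomposition proof and is correct; in particular you handle the bipartite-specific point (the $-\Delta$ eigenvector $\mathbf{1}_{V_0}-\mathbf{1}_{V_1}$, which is what produces the $\Delta/|V_0|$ rather than $\Delta/|V|$ constant) cleanly, and the connectedness hypothesis guarantees that $\pm\Delta$ each have multiplicity one so that all remaining eigenvalues are genuinely bounded by $\lambda(\G)$ as defined in the paper.
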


\subsection{Tanner codes}
\label{subsec:TC}
A binary linear code of length $n$ is an $\f_2$-linear subspace of $\f_2^n$. For
sets $E$ of cardinality $|E|=n$, it will be convenient for us to identify $\f_2^n$
with $\f_2^E$, which we can think of as the space of functions from $E$ to
$\f_2$. Identication with $\f_2^n$ amounts to defining a one-to-one map between $E$ and
$[n]=\{1,2,\ldots ,n\}$, \textit{i.e.}\ a numbering of the elements of $E$.

Let $\G=(V,E)$ be a regular graph of degree $\Delta$, and for any vertex $v$
denote by $E(v)$ the set of edges incident to $v$. 
Assume an identification of $\f_2^{E(v)}$ with $\f_2^\Delta$ for every $v\in V$.
Let $x\in\f_2^{E}$ be a vector indexed by (or a function defined on) the set
$E$. Let us define the {\em local view} of $x$ at vertex $v$ as the subvector
$x_v:=(x_e)_{e\in E(v)}$, \textit{i.e.}\ $x$ restricted to the edge-neighbourhood $E(v)$ of
$v$.

Let $C_0$ be a linear code of length $\Delta$, dimension $k_0=\rho_0\Delta$, and
minimum distance $d_0=\delta_0\Delta$.
We define the Tanner code~\cite{T81} associated to $\G$ and $C_0$ as
\[
\text{Tan}(\G,C_0) :=\{x\in\f_2^E : x_v\in C_0\;\text{for all}\;v\in V\}.
\]
In words, the Tanner code is the set of vectors over $E$ all of whose local
views lie in $C_0$.
By counting the number of linear equations satisfied by the Tanner code, we
obtain
\begin{equation}\label{eq:dimtanner}
\dim \text{Tan}(\G,C_0) \geq (2\rho_0-1)n.
\end{equation}
We also have the bound~\cite{SS96,Gur} on the minimum distance $d$ of the Tanner
code:
\[
d\geq\delta_0 (\delta_0- \lambda(\G)/\Delta)n.
\]
Therefore, if $(\G_i)$ is a family of $\Delta$-regular expander graphs with
$\lambda(\G_i)\leq\lambda <d_0$, and if $\rho_0>1/2$, then the associated family of Tanner
codes has rate and minimum distance which are both $\Omega(n)$, meaning we have
an asymptotically good family of codes, as was first shown in~\cite{SS96}.

\subsection{Quantum CSS codes}

A quantum CSS code is specific instance of a stabilizer code~\cite{got97} that can be defined by two classical codes $\C_0$ and $\C_1$ in the ambient space $\f_2^n$, with the property that $\C_0^\perp \subset \C_1$~\cite{CS96,ste96}. It is a \emph{low-density parity-check} (LDPC) code whenever both $\C_0$ and $\C_1$ are the kernels of sparse parity-check matrices. 
The resulting quantum code $\eQ = (\C_0, \C_1)$ is a subspace of
$(\mathbb{C}_2)^{\otimes n}$, the space of $n$ qubits:
\[ \eQ := \mathrm{Span}\left\{ \sum_{z \in \C_1^\perp} |x+z\rangle \: : \: x\in \C_0 \right\},\]
where $\{ |x\rangle \: : \: x\in \f_2^n\}$ is the canonical basis of $(\mathbb{C}_2)^{\otimes n}$.
The dimension $k$ of the code counts the number of logical qubits and is given by 
\[ k = \text{dim} \, (\C_0/\C_1^\perp) = \text{dim} \, \C_0 + \text{dim} \, \C_1 - n.\]
Its minimum distance is $d = \min (d_X, d_Z)$ with
\[ d_X = \min_{w \in \C_0 \setminus \C_1^{\perp}} |w|, \quad d_Z = \min_{w \in \C_1 \setminus \C_0^\perp} |w|.\]
We denote the resulting code parameters by $\llbracket n,k,d\rrbracket$ and say that a code family $(\Q_n)_n$ is \emph{asymptotically good} if its
parameters are of the form
\[ \llbracket n, k = \Theta(n),d = \Theta(n)\rrbracket.\]

An $n$-qubit Pauli error $E_1 \otimes \ldots \otimes E_n$ with $E_i \in \{ \1, \sigma_X, \sigma_Y, \sigma_Z\}$\footnote{The 1-qubit Pauli matrices are defined by $\1 = \left( \begin{smallmatrix} 1 &0\\0&1 \end{smallmatrix}\right), \sigma_X= \left( \begin{smallmatrix} 0 &1\\1&0 \end{smallmatrix}\right), \sigma_Z= \left( \begin{smallmatrix} 1 &0\\0&-1 \end{smallmatrix}\right)$ and $\sigma_Y = i \sigma_X \sigma_Z$.} is conveniently described by two $n$-bit strings $(e_0, e_1)\in \F_2^n \times \F_2^n$ \textit{via} the mapping 
\[ \1 \mapsto (0,0), \quad \sigma_X \mapsto (1,0), \quad \sigma_Y \mapsto (1,1), \quad \sigma_Z \mapsto (0,1),\]
which forgets global phases. 
The parity-check matrices of $\C_0$ and $\C_1$ give rise to syndrome maps
$\sigma_0, \sigma_1 : \F_2^n \to \F_2^m$ that associate a pair of syndromes
$(\sigma_0(\error_0), \sigma_1(\error_1)) \in \F_2^m \times \F_2^m$ to any
$n$-qubit Pauli error $(\error_0, \error_1)\in \F_2^n \times \F_2^n$.
The decoding problem for a stabilizer code is as follows: given a syndrome
$(\sigma_0(\error_0), \sigma_1(\error_1))$, recover the error up to an element
of the stabilizer group, that is return $(\hat{\error}_0, \hat{\error}_1)$ such
that $\error_0 + \hat{\error}_0 \in \C_1^\perp$ and $\error_1 + \hat{\error}_1 \in \C_0^\perp$. 

While an optimal decoding of \emph{random} errors would typically exploit
possible correlations between $\error_0$ and $\error_1$, it is always possible
to correct both errors independently. Here, we will be concerned with the
adversarial setting where $\error_0$ and $\error_1$ are of sufficiently low
weight, but otherwise arbitrary. In that case, both errors should be decoded
independently, and we will focus on the case where $(\error_0=0, \error_1=e)$ in the following.

\subsection{Tensor codes and dual tensor codes: robustness}
Definitions and results for this section are taken from~\cite{LZ22}, to which we
refer for proofs and comments.

Let $A$ and $B$ be two sets of size $\Delta$. 
We define codes on the ambient space $\f_2^{A\times B}$
that we may think of as the space of matrices whose rows (columns) are indexed by
$A$ (by $B$). If $C_A\subset \f_2^A$ and $C_B\subset\f_2^B$ are two linear codes,
we define the {\em tensor} (or product) code $C_A\otimes C_B$ as the space of
matrices $x$ such that for every $b\in B$ the column vector $(x_{ab})_{a\in A}$
belongs to $C_A$ and for every $a\in A$ the row vector $(x_{ab})_{b\in B}$
belongs to $C_B$. It is well known that $\dim(C_A\otimes
C_B)=\dim(C_A)\dim(C_B)$ and that the minimum distance of the tensor code is $d(C_A\otimes C_B)=d(C_A)d(C_B)$.

Consider the codes $C_A\otimes\f_2^B$ and $\f_2^A\otimes C_B$
consisting respectively of the space of matrices whose columns are codewords of
$C_A$ and whose rows are codewords of $C_B$. We may consider their sum
$C_A\otimes\f_2^B + \f_2^A\otimes C_B$ which is called a {\em dual
tensor} code, since it is the dual code of the tensor code
$C_A^\perp\otimes C_B^\perp=(C_A^\perp\otimes\f_2^B)\cap(\f_2^A\otimes
C_B^\perp)$. It is relatively straightforward to check that $d(C_A\otimes\f_2^B +
\f_2^A\otimes C_B)=\min(d(C_A),d(C_B))$.

\begin{definition}\label{def:robust}
Let $0\leq w\leq\Delta^2$. Let $C_A$ and $C_B$ be codes of length $\Delta$ with minimum distances $d_A$ and
$d_B$. We shall say that the dual tensor code $C=C_A\otimes\f_2^B +
\f_2^A\otimes C_B$ is $w$-{\em robust}, if for any codeword $x\in C$
of Hamming weight $|x|\leq w$, there exist $A'\subset A, B'\subset B$, $|A'|\leq
|x|/d_B$, $|B'|\leq |x|/d_A$, such that $x_{ab}=0$ whenever $a\notin A',
b\notin B'$.
\end{definition}

\begin{proposition}
Let $C_A$ and $C_B$ be codes of length $\Delta$ with minimum distances $d_A$ and
$d_B$, and suppose $C=C_A\otimes\f_2^B +
\f_2^A\otimes C_B$ is $w$-robust with $0<w<d_Ad_B$. Then for any codeword $x\in
C$ such that $|x|\leq w$,  there exist $A'\subset A, B'\subset B$, $|A'|\leq
|x|/d_B$, $|B'|\leq |x|/d_A$ and 
a decomposition $x = c + r$, with $c \in C_{A} \otimes \F_2^{B'}$ and $r \in \F_2^{A'}
\otimes C_B$. 
\end{proposition}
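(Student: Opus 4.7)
The plan is a dimension-counting argument. First, applying the $w$-robustness hypothesis to $x$ yields subsets $A' \subset A$ and $B' \subset B$ of the claimed sizes such that $\supp(x) \subset S := (A' \times B) \cup (A \times B')$. Set $I := A \setminus A'$ and $J := B \setminus B'$. The extra assumption $w < d_A d_B$ gives $|A'| < d_A$ and $|B'| < d_B$: this is the key inequality, since it ensures that no nonzero codeword of $C_A$ (resp.\ $C_B$) is supported on $A'$ (resp.\ $B'$), so the puncturing maps $C_A \to C_A|_I$ and $C_B \to C_B|_J$ are isomorphisms onto their images, of dimensions $k_A$ and $k_B$ respectively. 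The goal is to show that $x$ lies in the subspace $C'' := C_A \otimes \F_2^{B'} + \F_2^{A'} \otimes C_B$. Since $C'' \subset C \cap \F_2^S$ is immediate from supports, and $x \in C \cap \F_2^S$ by hypothesis, it suffices to prove $C'' = C \cap \F_2^S$, which I do by showing both have the same dimension.

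For $\dim C''$, observe that the intersection $C_A \otimes \F_2^{B'} \cap \F_2^{A'} \otimes C_B$ consists of matrices supported on $A' \times B'$ whose columns are $C_A$-codewords of weight $\leq |A'| < d_A$, hence zero; so the sum is direct and $\dim C'' = k_A |B'| + k_B |A'|$.

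For $\dim(C \cap \F_2^S)$, consider the projection $\pi : \F_2^{A \times B} \to \F_2^{I \times J}$ whose kernel is precisely $\F_2^S$, giving $\dim(C \cap \F_2^S) = \dim C - \dim \pi(C)$. Using the injectivity of the puncturings to construct lifts, one checks that $\pi(C_A \otimes \F_2^B) = C_A|_I \otimes \F_2^J$ and $\pi(\F_2^A \otimes C_B) = \F_2^I \otimes C_B|_J$, with intersection equal to $C_A|_I \otimes C_B|_J$ of dimension $k_A k_B$. Hence $\dim \pi(C) = k_A |J| + k_B |I| - k_A k_B$, and combining with $\dim C = k_A \Delta + k_B \Delta - k_A k_B$ yields $\dim(C \cap \F_2^S) = k_A |B'| + k_B |A'| = \dim C''$. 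This forces $C'' = C \cap \F_2^S \ni x$, so $x$ admits the desired decomposition.

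The only place where the hypothesis $w < d_A d_B$ plays a role is in guaranteeing $|A'|<d_A$ and $|B'|<d_B$; this is what both trivializes the intersection defining $C''$ and makes the puncturings injective, thereby letting the lifting argument for $\pi(C)$ go through. Everything else is routine linear algebra and there is no real obstacle to overcome.
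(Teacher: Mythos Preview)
Your proof is correct and takes a genuinely different route from the paper's. The paper argues constructively: starting from an arbitrary decomposition $x=r'+c'$, it observes that $r'$ and $c'$ agree on $(A\setminus A')\times(B\setminus B')$, that this common restriction lies in the punctured tensor code $C_A|_I\otimes C_B|_J$, lifts it to a tensor codeword $t\in C_A\otimes C_B$, and sets $c=c'+t$, $r=r'+t$. The final step (that $c$ is actually supported on $A\times B'$ and not merely vanishes on $I\times J$) then uses the same inequality $|A'|<d_A$ you isolated, though the paper leaves this implicit. Your dimension-counting argument, by contrast, proves the stronger structural fact $C\cap\F_2^S = C_A\otimes\F_2^{B'}+\F_2^{A'}\otimes C_B$ in one stroke, making the role of $w<d_Ad_B$ completely explicit (it is precisely what makes the puncturing maps injective, hence $\dim C_A|_I=k_A$, $\dim C_B|_J=k_B$, and the sum defining $C''$ direct). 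The paper's approach has the mild advantage of being algorithmic: it tells you how to fix a given decomposition. Yours is cleaner and more informative as a statement about the code. One small remark: the equality $\pi(C_A\otimes\F_2^B)=C_A|_I\otimes\F_2^J$ only needs surjectivity of puncturing, which is automatic; injectivity is what you need for the dimension of the intersection $C_A|_I\otimes C_B|_J$ to equal $k_Ak_B$.
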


\begin{proof}
To see this, apply the definition and write $x=r'+c'$, with
$r'_{ab}=c'_{ab}$ for any $(a,b)\in (A\setminus A')\times (B\setminus B')$.
The restrictions of $r'$ and $c'$ to $(A\setminus A')\times
(B\setminus B')$ both belong to the code obtained by tensoring $C_A'$ and
$C_B'$, the punctured codes deduced from $C_A$ and $C_B$ by throwing away
coordinates of $A'$ and $B'$. This code is the same as the punctured code
obtained from 
$C_A\otimes C_B$ by throwing away the coordinates $A'\times B \cup
A\times B'$. Therefore, there exists a tensor codeword of $C_A\otimes C_B=C_{A}
\otimes \F_2^{B}\cap \F_2^A\otimes C_B$ that
coincides with $c'=r'$ on $(A\setminus A')\times
(B\setminus B')$: adding this tensor codeword to both $c'$ and $r'$ yields the
required pair $r,c$ such that $x=r+c$.
\end{proof}

\begin{prop}\label{prop:tensor}
Let $C_A$ and $C_B$ be codes of length $\Delta$ and minimum distances $d_A, d_B$
such that the dual tensor code $C_A\otimes\f_2^B + \f_2^A \otimes C_B$ is
$w$-robust with $w\leq d_Ad_B/2$. Then, any word $x$ close to both the column
and row code is also close to the tensor code: precisely,
if $d(x, C_A \otimes \f_2^{B})+ d(x, \f_2^{A} \otimes C_B) \leq w$ then 
\begin{align}\label{eqn:robust}
d(x, C_A \otimes C_B) \leq \frac{3}{2} \left( d(x, C_A \otimes \f_2^{B})+ d(x, \f_2^{A} \otimes C_B) \right).
\end{align}
\end{prop}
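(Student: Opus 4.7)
Write $\alpha = d(x, C_A \otimes \f_2^B)$ and $\beta = d(x, \f_2^A \otimes C_B)$, and let $c_0 \in C_A \otimes \f_2^B$, $r_0 \in \f_2^A \otimes C_B$ achieve these distances, so $|x-c_0|=\alpha$ and $|x-r_0|=\beta$. The natural candidate for a nearby tensor codeword should lie in $(C_A\otimes\f_2^B)\cap(\f_2^A\otimes C_B) = C_A\otimes C_B$, and the strategy is to modify $c_0$ by a small column-structured vector (equivalently, modify $r_0$ by a small row-structured one) so that the result lands simultaneously in both halves.

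The difference $c_0 - r_0$ is a dual tensor codeword of weight at most $\alpha+\beta\leq w$, so by $w$-robustness (in the strengthened form given by the preceding proposition), one can write $c_0-r_0 = c+r$ with $c \in C_A \otimes \f_2^{B'}$, $r \in \f_2^{A'} \otimes C_B$, where $|A'|\leq(\alpha+\beta)/d_B$ and $|B'|\leq(\alpha+\beta)/d_A$. The key definition is then
\[
t := c_0 - c = r_0 + r,
\]
which lies in $C_A \otimes \f_2^B$ (because $c \in C_A \otimes \f_2^{B'} \subseteq C_A \otimes \f_2^B$) and simultaneously in $\f_2^A \otimes C_B$ (because $r \in \f_2^{A'} \otimes C_B \subseteq \f_2^A \otimes C_B$), hence $t \in C_A \otimes C_B$.

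It remains to bound $|x-t|$. I would split $A\times B$ into the four rectangular regions determined by $A'$ vs.\ $A\setminus A'$ and $B'$ vs.\ $B\setminus B'$. On entries with $b\notin B'$ the vector $c$ vanishes, so $t$ agrees with $c_0$ there, contributing at most $\alpha$ disagreements with $x$; on entries with $a\notin A'$ but $b\in B'$, the vector $r$ vanishes, so $t$ agrees with $r_0$, contributing at most $\beta$ disagreements; on the remaining rectangle $A'\times B'$, one uses only the trivial bound $|A'|\,|B'|$. Summing,
\[
|x-t| \leq \alpha + \beta + |A'|\,|B'| \leq (\alpha+\beta) + \frac{(\alpha+\beta)^2}{d_A d_B}.
\]
Finally, the hypothesis $\alpha+\beta\leq w \leq d_A d_B/2$ yields $(\alpha+\beta)^2/(d_A d_B)\leq (\alpha+\beta)/2$, giving the desired $\tfrac{3}{2}(\alpha+\beta)$ bound.

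The only delicate point is ensuring that the two representations $c_0-c$ and $r_0+r$ of $t$ coincide, which is automatic from the identity $c_0-r_0 = c+r$; the rest is bookkeeping. The factor $3/2$ comes specifically from controlling the $A'\times B'$ region using the quadratic slack $w \leq d_A d_B/2$, so this hypothesis is used exactly once and tightly.
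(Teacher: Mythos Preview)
Your proof is correct. The paper itself does not include a proof of this proposition, stating that results in this section are taken from~\cite{LZ22}; your argument is exactly the natural one, building on the preceding proposition to obtain the structured decomposition $c_0-r_0=c+r$ with $c\in C_A\otimes\f_2^{B'}$, $r\in\f_2^{A'}\otimes C_B$, and then setting $t=c_0-c=r_0+r\in C_A\otimes C_B$ and bounding $|x-t|$ region by region.
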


\begin{definition}\label{def:resistance}
Let $C_A\subset \f_2^A$ and $C_B\subset \f_2^B$. 
For integers $w,p$,
let us say that the dual tensor code $C_{A}\otimes\f_2^{B}+\f_2^{A}\otimes
C_{B}$ is $w$-robust with $p$-resistance to puncturing, 
if
for any  $A'\subset A$ and $B'\subset B$ such that $|A'|=|B'|=\Delta-w'$, with
$w'\leq p$, the dual tensor code
\[
C_{A'}\otimes\f_2^{B'}+\f_2^{A'}\otimes C_{B'}
\]
is $w$-robust.
\end{definition}

We shall need the following result on the robustness of random dual tensor
codes. 

\begin{restatable}{theo}{Probust}
\label{thm:P-robust}
Let $0<\rho_A < 1$ and $0<\rho_B<1$. Let $0<\varepsilon<1/2$ and $1/2+\eps<\gamma <1$. Let $C_A$ be a random code obtained from a
random uniform $\rho_A\Delta\times\Delta$ generator matrix, and let $C_B$ be a
random code obtained from a random uniform $(1-\rho_B)\Delta\times\Delta$
parity-check matrix. 
With probability tending to~$1$ when $\Delta$ goes to
infinity, the dual tensor code
\[
C_{A}\otimes\f_2^{B}+\f_2^{A}\otimes C_{B}
\]
is $\Delta^{3/2-\varepsilon}$-robust with $\Delta^\gamma$-resistance to puncturing.
\end{restatable}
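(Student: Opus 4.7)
The plan is to reduce the $\Delta^\gamma$-resistance to puncturing to a quantitative, tail-bound version of the basic $\Delta^{3/2-\varepsilon}$-robustness of random dual tensor codes, and then to apply a union bound over all admissible pairs $(A',B')$ of retained coordinates. The key enabling observation is that puncturing preserves the random code ensemble up to a negligible shift of parameters, so that the same probabilistic lemma can be reused inside the union bound.

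The first step is to extract a tail bound from the proof of the basic robustness result of~\cite{LZ22}. That proof already enumerates over candidate support patterns of a hypothetical bad codeword $x\in C_A\otimes\f_2^B+\f_2^A\otimes C_B$ of weight $\leq\Delta^{3/2-\varepsilon}$; for each pattern, containment in the dual tensor code imposes linear constraints on $G_A$ and on the parity-check matrix of $C_B$ which are violated except with probability exponentially small in $\Delta^2$. Bookkeeping the cardinality of the enumeration against these per-term exponents, one extracts
\[
\mathbb P\bigl[C_A\otimes\f_2^B+\f_2^A\otimes C_B \text{ is not $\Delta^{3/2-\varepsilon}$-robust}\bigr] \leq \exp\bigl(-\Omega(\Delta^{1+\eta})\bigr)
\]
for some $\eta=\eta(\varepsilon)>0$. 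The second step is the distributional stability under puncturing: the column restriction $G_A|_{A'}$ of a uniform random matrix is again uniform of the appropriate smaller size, and likewise for the parity-check description of $C_B$ restricted to $B'$. Since $|A'|=|B'|=\Delta-w'\geq\Delta-\Delta^\gamma=\Delta(1-o(1))$, the punctured ensemble has essentially the same rate parameters, and the tail bound of the first step transfers to the punctured dual tensor code $C_{A'}\otimes\f_2^{B'}+\f_2^{A'}\otimes C_{B'}$ with only a benign adjustment of constants in the exponent.

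The third step is the union bound. The total number of admissible pairs $(A',B')$ is
\[
\sum_{w'=0}^{\lfloor\Delta^\gamma\rfloor}\binom{\Delta}{w'}^2 \leq 2^{O(\Delta^\gamma\log\Delta)},
\]
and since $\gamma<1<1+\eta$, the product of this enumeration with the per-puncturing failure probability is $2^{O(\Delta^\gamma\log\Delta)}\cdot\exp(-\Omega(\Delta^{1+\eta}))=o(1)$, which proves the theorem.

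The main obstacle is quantifying the first step: while the argument of~\cite{LZ22} gives qualitative robustness with probability $1-o(1)$, promoting it to an exponential tail bound requires revisiting the internal union bound over support patterns and verifying that each term decays at the required rate. There is significant slack available, since any tail bound of the form $\exp(-\omega(\Delta^\gamma\log\Delta))$ suffices to close the outer union bound, and the rank-deficiency events driving the proof of~\cite{LZ22} are genuinely of per-pattern probability of order $\exp(-\Omega(\Delta^2))$; with care one should even obtain $\eta$ arbitrarily close to $1/2$. Once the quantitative basic robustness is in place, the extension to $\Delta^\gamma$-resistance to puncturing introduces no new probabilistic ideas, only the stability-under-puncturing bookkeeping outlined above.
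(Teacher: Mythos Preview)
The paper does not prove this theorem here; it is quoted from~\cite{LZ22} (see the remark following Theorem~\ref{thm:123}), so there is no in-paper argument to compare against. Your three-step strategy---quantitative tail bound for a single instance, distributional stability under puncturing, union bound over all $(A',B')$---is the natural route and presumably matches what~\cite{LZ22} does.

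Step~2 as written contains an error, however. Writing $H$ for the random $(1-\rho_B)\Delta\times\Delta$ parity-check matrix defining $C_B$, restricting $H$ to the column subset $B'$ does \emph{not} give a parity-check matrix for the punctured code: $\ker(H|_{B'})$ is the \emph{shortened} code of $C_B$ on $B'$, in general a strict subcode of the punctured code $C_{B'}$. Since robustness is a constraint on \emph{every} codeword of $C_{A'}\otimes\f_2^{B'}+\f_2^{A'}\otimes C_{B'}$, establishing it with $\ker(H|_{B'})$ in place of $C_{B'}$ is insufficient. The repair is to condition on $H|_{B\setminus B'}$ having full column rank (this fails with probability at most $2^{-(1-\rho_B)\Delta+w'}$, negligible inside your union bound); on that event, row-reduce $H$ to the block form $\left(\begin{smallmatrix}I_{w'}&M_1\\0&M_2\end{smallmatrix}\right)$ over the column partition $(B\setminus B',\,B')$ and check directly that $C_{B'}=\ker M_2$, with $M_2$ genuinely uniform of size $((1-\rho_B)\Delta-w')\times(\Delta-w')$. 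Only with this extra step does your ensemble-preservation claim hold. As for Step~1, you rightly flag it as the remaining obligation; just note that observing each rank-deficiency event has probability $\exp(-\Omega(\Delta^2))$ is not by itself enough---you must also verify that the enumeration in~\cite{LZ22} runs over at most $\exp(O(\Delta^{1+\eta'}))$ configurations for some $\eta'<\eta$, which you currently assert rather than check.
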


\section{Detailed description of quantum Tanner codes}
\label{sec:details}

We recall the ingredients that enable us to define the family of asymptotically good quantum
LDPC codes introduced in~\cite{LZ22}. 
The only difference with~\cite{LZ22} is that we prefer to work here with a quadripartite version of the Cayley complex since it gives a simpler description of the decoding algorithm. 
This quadripartite version was
mentioned in passing in~\cite{LZ22}, 
and is not much more than a particular instance of the bipartite version,
and is also suited to the lower bound of Theorem~\ref{thm:good-qLDPC}
 on the minimum distance.
Below we discuss the quadripartite structure in some more detail.

\subsection{Left-right Cayley complexes (quadripartite version)}
\label{subsec:LRCayley}

The square complex we shall rely on for the construction first appeared in \cite{PK21} as a balanced product of double covers of non-bipartite Cayley graphs. For the sake of simplicity, we will rather use the language of left-right Cayley complexes in their quadripartite version.
A {\em left-right Cayley complex} $X$ is introduced in~\cite{DEL21} from a group
$G$ and two sets of generators $A=A^{-1}$ and $B=B^{-1}$. As in~\cite{DEL21} we
will restrict ourselves, for the sake of simplicity, to the case
$|A|=|B|=\Delta$. The complex is made up of vertices, $A$-edges, $B$-edges, and
squares. The vertex set consists of four copies of the group $G$ in the
quadripartite version, $V = V_{00} \cup V_{10} \cup V_{01}\cup V_{11}$ with
$V_{ij} = G \times \{ij\}$. 
The advantage of this quadripartite version, also considered in \cite{PK21} and \cite{gol21}, is that it does not require any additional assumption on the choice of group and generators, for instance that $ag \ne gb$ for all $g\in G, a\in A, b \in B$, as in \cite{DEL21}.
We will also use the notation $V_0:=V_{00}\cup
V_{11}$ and $V_1:=V_{01}\cup V_{10}$.
The $A$-edges are pairs of vertices of the form $\{(g,i0) ,(ag,i1)\}$ and $B$-edges are of
the form $\{(g,0j),(gb,1j)\}$ for $g\in
G,a\in A,b\in B$, $i,j=0,1$. We denote by $E_A$ and $E_B$ these two edge sets. The associated graphs are denoted by $\G_A = (V, E_A)$ and $\G_B = (V,E_B)$.
 A {\em square} is a set of four vertices of the form $\{(g,00),(ag,01),(gb,10),(agb,11)\}$.
The set of squares (or quadrangles) of the complex is denoted by $Q$.
Every vertex is incident to exactly $\Delta^2$ squares. For a vertex $v$, the set of incident squares is called the $Q$-neighbourhood,
and denoted by $Q(v)$. 

The sets of generators $A$ and $B$ will be chosen so that the Cayley graphs
$\Cay(G,A)$ and $\Cay(G,B)$ are non-bipartite Ramanujan graphs.
It should be understood that when writing $\Cay(G,A)$ we implicitely mean the
Cayley graph defined by left multiplication by
elements of $A$, while $\Cay(G,B)$ stands for the Cayley graph defined by
right multiplication by elements of $B$. The sets $A$ and $B$ could in principle
be chosen to be identical, but we keep a distinct notation for both sets, in
particular in order to allow the above abuse of notation to be non-confusing.

We see that the subset of edges of $E_A$ that connect vertices of $V_{00}$ to
vertices of $V_{01}$ make up a double cover of the Cayley graph $\Cay(G,A)$,
the edges $E_A$ that connect $V_{10}$ to $V_{11}$ make up a second copy of the
same double cover. Therefore, the graph $\G_A$ is a disjoint union of two copies
of the double cover of $\Cay(G,A)$. Similarly, $\G_B$ is a disjoint union of two
copies of the double cover of $\Cay(G,B)$. We will regularly talk about
expansion in $\G_A$ (or $\G_B$) to mean expansion in either of the connected
components of $\G_A$ ($\G_B$).

Let us introduce one additional graph that exists on the complex $X$,
and that we denote by $\G^\square$. This graph
puts an edge between all pairs of vertices of the form $\{(g,i),(agb,i)\}$, $g\in G,a\in A,b\in
B, i=0,1$. The graph $\G^\square$ is therefore made up of two connected
components, on $V_0$ and $V_1$, that we denote by $\G_0^\square$ and
$\G_1^\square$. We note that $\G^\square$ is regular of degree $\Delta^2$, and
may have multiple edges. 

If $\Cay(G,A)$ and $\Cay(G,B)$ are Ramanujan, then $\G^\square$ inherits some
of their expansion properties. Specifically:

\begin{lemma}\label{lem:lambda}
Assume that $\Cay(G,A)$ and $\Cay(G,B)$ are Ramanujan graphs, then 
\[ 
\lambda(\G_0^\square) \leq 4\Delta, \quad \lambda(\G_1^\square)\leq 4\Delta.\]
\end{lemma}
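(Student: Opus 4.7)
The plan is to realise $\G_0^\square$ as a bipartite graph between the two copies of $G$ making up $V_0=V_{00}\cup V_{11}$ and to identify its biadjacency operator with the product of two commuting Cayley operators on $\f_2^G$ (viewed over $\mathbb{R}$). Indeed, by the very definition of the square set $Q$, the number of edges in $\G_0^\square$ joining $(g,00)$ to $(g',11)$ equals $\#\{(a,b)\in A\times B : g'=agb\}$. Let $M_A$ and $M_B$ denote respectively the adjacency operators of the left Cayley graph $\Cay(G,A)$ and of the right Cayley graph $\Cay(G,B)$, acting on $\mathbb{R}^G$. Because left and right multiplications commute, $M_A$ and $M_B$ commute; because $A=A^{-1}$ and $B=B^{-1}$, both operators are symmetric. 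The biadjacency operator of the bipartite graph $\G_0^\square$ is exactly $M=M_AM_B$, which is therefore symmetric and admits, together with $M_A,M_B$, a joint orthonormal eigenbasis $(f_i)$ with $M_Af_i=\lambda_i f_i$ and $M_Bf_i=\mu_i f_i$.

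The eigenvalues of the $\Delta^2$-regular bipartite graph $\G_0^\square$ are then the numbers $\pm\sigma_i$, where $\sigma_i=|\lambda_i\mu_i|$ are the singular values of $M$. I then use that the Ramanujan hypothesis forces $\Cay(G,A)$ to be both connected and non-bipartite, which means that $\Delta$ is a simple eigenvalue of $M_A$ (with $\mathbf{1}$ as its unique eigenvector, up to scalar) and that every other eigenvalue of $M_A$ satisfies $|\lambda|\leq 2\sqrt{\Delta-1}$. The same is true of $M_B$, with the very same eigenvector $\mathbf{1}$ realising the top eigenvalue $\Delta$. Hence any joint eigenvector $f_i$ not proportional to $\mathbf{1}$ is orthogonal to $\mathbf{1}$, and therefore lies outside the top eigenspace of \emph{both} $M_A$ and $M_B$. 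This forces simultaneously $|\lambda_i|\leq 2\sqrt{\Delta-1}$ and $|\mu_i|\leq 2\sqrt{\Delta-1}$, whence
\[
|\lambda_i\mu_i|\leq 4(\Delta-1)\leq 4\Delta .
\]

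Since the only $f_i$ proportional to $\mathbf{1}$ contributes the pair of eigenvalues $\pm\Delta^2$ (which are precisely the ones excluded in the definition of $\lambda(\G_0^\square)$), every remaining eigenvalue of $\G_0^\square$ is bounded in absolute value by $4\Delta$, giving $\lambda(\G_0^\square)\leq 4\Delta$. The graph $\G_1^\square$ is defined in identical fashion between $V_{01}$ and $V_{10}$, and the very same argument yields $\lambda(\G_1^\square)\leq 4\Delta$.

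The only delicate point in the argument is the \emph{simultaneous} smallness of $\lambda_i$ and $\mu_i$ on any non-trivial joint eigenvector. This is not automatic from the Ramanujan bound on $M_A$ alone: without the observation that the top eigenspaces of $M_A$ and $M_B$ \emph{coincide} (both being $\mathrm{span}(\mathbf{1})$, thanks to connectedness and non-bipartiteness), one would only obtain the much weaker bound $|\lambda_i\mu_i|\leq 2\Delta\sqrt{\Delta-1}$, of order $\Delta^{3/2}$. Pinning down that both Cayley graphs have the all-ones function as their unique Perron eigenvector is therefore the key conceptual step.
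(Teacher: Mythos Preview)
Your argument is correct and matches the paper's sketch: both factor the (bi)adjacency operator of $\G_0^\square$ as the product of the two commuting Cayley operators and read off the spectrum from a joint eigenbasis. One small correction: under the paper's definition of $\lambda(\G)$ (which already excludes $\pm\Delta$), the Ramanujan hypothesis alone does \emph{not} force non-bipartiteness; rather, the paper explicitly assumes in Section~\ref{subsec:LRCayley} that $\Cay(G,A)$ and $\Cay(G,B)$ are chosen to be non-bipartite Ramanujan graphs, and that standing hypothesis is what you need to invoke to ensure that every joint eigenvector orthogonal to $\mathbf{1}$ has both $|\lambda_i|\leq 2\sqrt{\Delta-1}$ and $|\mu_i|\leq 2\sqrt{\Delta-1}$.
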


The proof follows from the fact that the adjacency matrix of $\G^\square$ is the
product of the adjacency matrices of $\G_A$ and $\G_B$, and that these two
adjacency matrices commute, by definition of the square complex. See~\cite{LZ22}
for a little more detail.

\subsection{Labelling $Q$-neighbourhoods}
We will define Tanner codes on $\G_0^\square$ and $\G_1^\square$, which implies a labelling of
the coordinates in every $Q$-neighbourhood $Q(v)$. There is a natural labeling
of $Q(v)$ by the set $A\times B$, namely a one-to-one map $\phi_v~: A\times B
\to Q(v)$,  which we now state explicitely.

We set
\begin{align*}
\text{for}\; v=(g,00)\in V_{00},\quad& \phi_v(a,b) =
\{(g,00),(ag,01),(gb,10),(agb,11)\},\\
\text{for}\; v=(g,01)\in V_{01},\quad& \phi_v(a,b) =
\{(g,01),(a^{-1}g,00),(gb,11),(a^{-1}gb,10)\},\\
\text{for}\; v=(g,10)\in V_{10},\quad& \phi_v(a,b) =
\{(g,10),(ag,11),(gb^{-1},00),(agb^{-1},01)\},\\
\text{for}\; v=(g,11)\in V_{11},\quad& \phi_v(a,b) =
\{(g,11),(a^{-1}g,10),(gb^{-1},01),(a^{-1}gb^{-1},00)\}.
\end{align*}
The map $\phi_v$ thus defined is obviously one-to-one, and one easily checks
that:

\medskip

{\em
Any two vertices $v=(g,i0)$ and $w=(gb,i1)$, $i=0,1$, that are connected through
a $B$-edge (labelled $b$), have a common ``column'', \textit{i.e.}\ their
$Q$-neighbourhoods share exactly $\Delta$ squares that are labelled $(a,b), a\in
A$, in both $Q(v)$ and $Q(w)$.
}
\medskip

Similarly,

\medskip

{\em
Any two vertices $v=(g,0i)$ and $w=(ag,1i)$, $i=0,1$, that are connected through
an $A$-edge (labelled $a$), have a common row, \textit{i.e.}\ their
$Q$-neighbourhoods share exactly $\Delta$ squares that are labelled $(a,b), b\in
B$, in both $Q(v)$ and $Q(w)$.
}

\medskip

The situation is illustrated on Figure~\ref{fig:code}. Summarising, any two
vertices connected by $B$-edge (an $A$-edge) have a common column (row) in their $Q$-neighbourhoods,
that is labelled by the same $b\in B$ ($a\in A$). This is the reason for the
possibly intriguing inversions in the definition of $\phi_v$: without these
inversions the $Q$-neighbourhoods of two neighbouring vertices would still share
a common row or a common column, but their indexes in their respective local
views would be inverse of each other. This would still be manageable but
slightly less convenient.

\subsection{Local codes}

The constraints of a Tanner code consist of local constraints from small codes enforced on the local view of each vertex. 
For quantum Tanner codes, now that all local $Q$-neighbourhoods are isomorphic
to $A \times B$, we may put local constraints that are codewords of the tensor
codes $C_A \otimes C_B$ and $C_A^\perp \otimes C_B^\perp$.

Recall that the generators of the quantum Tanner code correspond to a basis of $C_A \otimes C_B$ in each local view of $V_{00} \cup V_{11}$ (for the $\sigma_Z$-type generators) and to a basis of $C_A^\perp \otimes C_B^\perp$ in each local view of $V_{01} \cup V_{10}$ (for the $\sigma_X$-type generators). 
The classical code $\C_0\subset \F_2^Q$ correcting $\sigma_X$-type errors is the Tanner code on the graph $\G_0^\square$ with local constraints corresponding to the dual tensor code $(C_A \otimes C_B)^\perp = C_A^\perp \otimes \F_2^B + \F_2^A \otimes C_B^\perp$.
With the notation of Section~\ref{subsec:TC}, $\C_0=\text{Tan}(\G_0^\square,
C_A^\perp \otimes \F_2^B + \F_2^A \otimes C_B^\perp)$.
Similarly, the classical code $\C_1\subset\F_2^Q$ correcting $\sigma_Z$-type errors is the Tanner
code on the graph $\G_1^\square$ with local constraints corresponding to the
dual tensor code $(C_A^\perp \otimes C_B^{\perp})^\perp = C_A \otimes \F_2^B +
\F_2^A \otimes C_B$, \textit{i.e.}\ $\C_1=\text{Tan}(\G_1^\square, C_A \otimes \F_2^B +
\F_2^A \otimes C_B)$.

\paragraph{Summary.}
A large enough $\Delta$ is chosen, together with an infinite family of groups
$G$ with generating sets $A,B$, $|A|=|B|=\Delta$, such that the left Cayley
graph $\Cay(G,A)$ and the right Cayley graph $\Cay(G,B)$ are Ramanujan. The
quadripartite left-right
square complex $X$ is defined by $G,A,B$.

For the conditions for the component codes that together with the above square
complexes $X$ will yield asymptotically good quantum codes, we recall Theorem~16 from~\cite{LZ22}.

\begin{theo}\label{thm:123}
Fix $\rho\in(0,1/2)$, $\eps \in (0,1/2)$, $\gamma\in (1/2+\eps,1)$ and $\delta>0$. 
If $\Delta$ is large enough and $C_A$ and $C_B$ are codes of length $\Delta$
such that
\begin{enumerate}
\item $0<\dim C_A \leq\rho\Delta$ and $\dim C_B=\Delta-\dim C_A$,
\item the minimum distances of $C_A,C_B,C_A^\perp,C_B^\perp$ are all
$\geq\delta\Delta$,
\item \label{cond3} both dual tensor codes $C_0^\perp=(C_A\otimes C_B)^\perp$ and
$C_1^\perp=(C_A^\perp\otimes C_B^\perp)^\perp$ are $\Delta^{3/2-\eps}$-robust
with $\Delta^\gamma$-resistance to puncturing (see
Definition~\ref{def:resistance}),
\end{enumerate}
then the quantum code $\eQ=(\C_0,\C_1)$ has length $|Q|$, dimension at least $(1-2\rho)^2|Q|$
and minimum distance at least $|Q|\delta/4\Delta^{3/2+\eps}$.
\end{theo}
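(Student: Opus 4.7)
The plan is to verify the three claimed parameters separately. Length is immediate from the construction, since both $\C_0$ and $\C_1$ are subspaces of $\F_2^Q$. For the dimension, I apply the Tanner-code bound \eqref{eq:dimtanner} to each $\C_i = \text{Tan}(\G_i^\square, C_i^\perp)$: the local codes $C_0^\perp = (C_A\otimes C_B)^\perp$ and $C_1^\perp = (C_A^\perp \otimes C_B^\perp)^\perp$ both have rate $1 - \rho(1-\rho)$, so $\dim \C_i \geq (1 - 2\rho(1-\rho))|Q|$. The CSS dimension formula $\dim\C_0 + \dim\C_1 - |Q|$ then yields at least $(1-2\rho)^2 |Q|$ logical qubits.

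For the minimum distance, by the $C_A \leftrightarrow C_B$ symmetry it suffices to lower-bound $d_Z = \min_{\mathbf{x} \in \C_1 \setminus \C_0^\perp} |\mathbf{x}|$. I argue by contradiction: assume $\mathbf{x} \in \C_1$ satisfies $0 < |\mathbf{x}| < |Q|\delta/(4\Delta^{3/2+\eps})$ and show $\mathbf{x} \in \C_0^\perp$ by iteratively expressing it as a sum of $C_0$-generators. Viewing $\mathbf{x}$ as its support of squares, equivalently a subgraph of $\G_1^\square$, the Tanner constraints force each nonzero local view $x_v \in \F_2^{A\times B}$ at a vertex $v \in V_1$ to be a nontrivial dual tensor codeword of $C_1^\perp$, hence of weight at least $\delta\Delta$. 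Thus this subgraph has minimum degree at least $\delta\Delta$ on its vertex set.

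Using Lemma~\ref{lem:lambda} and the expander mixing lemma in $\G_1^\square$, the smallness assumption on $|\mathbf{x}|$ forces the average weight of the nonzero local views $x_v$ at $V_1$ to lie well below the robustness threshold $\Delta^{3/2-\eps}$. For each such $v$, the robustness of $C_1^\perp$ then yields a decomposition $x_v = c_v + r_v$ with $c_v \in C_A \otimes \F_2^{B_v}$, $r_v \in \F_2^{A_v} \otimes C_B$ for small $A_v \subset A, B_v \subset B$. By the quadripartite structure of $X$, every column of $c_v$ and every row of $r_v$ also appears verbatim as a column or row of the local view at some neighbouring vertex of $V_0$. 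A second application of the expander mixing lemma, this time inside the graphs $\G_A$ and $\G_B$, forces most of these columns and rows to cluster on a small set of $V_0$-vertices, filling their local views everywhere except on a few punctured coordinates.

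The crux of the argument, and the point at which Condition~\ref{cond3} (robustness \emph{with resistance to puncturing}) is indispensable, is the local analysis at a clustered vertex $w \in V_0$: on the punctured local view the restriction of $\mathbf{x}$ still lies in the punctured dual tensor code $C_{A'}\otimes \f_2^{B'} + \f_2^{A'}\otimes C_{B'}$, which by hypothesis remains $\Delta^{3/2-\eps}$-robust. Applying Proposition~\ref{prop:tensor} and then unpuncturing, one extracts a nonzero $C_0$-generator $g$ at $w$ whose support lies mostly inside the support of $\mathbf{x}$, giving $|\mathbf{x}+g|<|\mathbf{x}|$. Iterating this weight-reducing step (legitimate since $g\in\C_0^\perp \subseteq \C_1$ preserves $\mathbf{x}\in\C_1$) produces a strictly decreasing sequence of weights, which must terminate at $\mathbf{x} = 0$; hence the original $\mathbf{x}$ was in $\C_0^\perp$, contradicting $\mathbf{x} \in \C_1 \setminus \C_0^\perp$. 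The main obstacle I expect is bookkeeping of constants: the exponents in the assumed weight bound $|Q|\delta/(4\Delta^{3/2+\eps})$, the robustness parameter $\Delta^{3/2-\eps}$, and the puncturing parameter $\Delta^\gamma$ must be lined up so that the clustering and puncturing estimates leave enough room for the robustness hypothesis to be applicable at every iteration.
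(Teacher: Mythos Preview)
Your proposal is essentially correct and matches the approach of~\cite{LZ22}, which is what the paper invokes for this theorem (no self-contained proof is given here; only the sketch following Theorem~\ref{thm:good-qLDPC} in the overview). Your dimension computation via \eqref{eq:dimtanner} and the distance argument---minimum degree $\delta\Delta$ from the dual tensor code, average local weight controlled by expansion in $\G_1^\square$, robustness to decompose local views into rows and columns, transfer of these to $V_0$-views and clustering via expansion in $\G_A,\G_B$, then local extraction of a weight-reducing $C_0$-generator using robustness with resistance to puncturing---all track that blueprint faithfully.

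One small correction: the symmetry that lets you treat only $d_Z$ is not ``$C_A\leftrightarrow C_B$'' but rather $(C_A,C_B)\leftrightarrow(C_A^\perp,C_B^\perp)$ together with $V_0\leftrightarrow V_1$, which swaps $\C_0\leftrightarrow\C_1$; the hypotheses of the theorem are invariant under this swap, so the argument for $d_X$ is indeed identical.
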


It was proved in~\cite{LZ22} (see also Theorem~\ref{thm:P-robust}), that for any fixed $\rho$ and $\delta >0$ such that
$-\delta\log_2\delta-(1-\delta)\log_2(1-\delta)<\rho$, then for any fixed
$\eps,\gamma$ and $\Delta$ large enough, randomly choosing $C_A$ and $C_B$ with
the required rates will yield codes that satisfy conditions 2 and 3 in
Theorem~\ref{thm:123}  with high probability. In the sequel we will set,
somewhat arbitrarily, $\gamma=1-\eps$, so as to minimise the number of
constants, and we will naturally assume that the codes $C_A$ and $C_B$ satisfy
the conditions of Theorem~\ref{thm:123}.


\section{The decoding strategy}
\label{sec:decoder}

We recall that we consider without loss of generality a bit-flip error $\error \in \F_2^\Q$
and wish to correct it with the help of the classical Tanner code $\C_1$. The
goal of the decoder is to output some guess $\hat{\error} \in \F_2^\Q$ and it is
successful if $\error + \hat{\error} \in \C_0^\perp$.

For a vertex $v\in V$, denote by $e_v$ the local view of $\error$ on $Q(v)$,
i.e. its restriction to $Q(v)$, which we also extend back to $\F_2^Q$ by padding
it with $0$s.
The error $\error$ can be identified with the collection of local views $\error
= \{e_v\}_{v \in V_1}$ and we note that they satisfy $\sum_{v \in V_{01}} e_v =
\sum_{v \in V_{10}} e_v$, since $(Q(v))_{v\in V_{01}}$ and $(Q(v))_{v\in
V_{10}}$ are both partitions of $Q$. The Hamming weight of $\error$ is
\[ |\error| = \sum_{v \in V_{01}} |e_v| = \sum_{v \in V_{10}} |e_v|.\]
As observed in~\cite{LZ22}, a possible approach to decoding a Tanner code is to
consider the \emph{mismatch} of the error $\error$. It is defined as follows.
For each vertex $v \in V_1$, one can compute an error $\eps_v$ of minimal Hamming weight yielding the corresponding local syndrome. This gives a decomposition of the local views of the error $e_v = \eps_v + c_v + r_v$ with $c_v \in C_A\otimes \F_2^B, r_v = \F_2^A \otimes C_B$, and $\eps_v$ of minimal Hamming weight. 
In the case where $e_v = \eps_v$ for all $v\in V_1$, then the decoder has succeeded in recovering the true error. In general, however, we have $e_v \ne \eps_v$ for some $v \in V_1$. 
The mismatch vector in $\F_2^Q$ defined as $Z := \sum_{v \in V_1} \eps_v$ then characterises the inconsistency between the local views in $V_{10}$ and $V_{01}$.  From $\sum_{v \in V_1} e_v = 0$, we obtain that
\[ Z = \sum_{v \in V_1} r_v +c_v.\]
We observe that the minimality of $\eps_v$ implies that $|e_v + c_v + r_v| \leq |e_v|$ and therefore $|c_v + r_v| \leq 2|e_v|$. This immediately shows that 
\[ |Z| \leq \sum_{v \in V_1} |r_v + c_v| \leq 2 \sum_{v \in V_1} |e_v| = 4
|\error|.\]
We note that the quantum Tanner codes are \textit{a priori} not locally testable, and therefore the Hamming weight of the mismatch can be much smaller than that of the error (otherwise a simple test for detecting an error would be to sample bits of the mismatch). 

The idea behind the decoder will be to find a decomposition $\{\hat{r}_v, \hat{c}_v\}_{v \in V_1}$ such that $Z =  \sum_{v \in V_1} \hat{r}_v +\hat{c}_v$.
In that case, the decoder will return the following guess for the decomposition of the error: $\{\hat{e}_v\}_{v \in V_1}$ with 
\[ \hat{e}_v := \eps_v + \hat{r}_v + \hat{c}_v.\]
In particular, the vectors $\error$ and $\hat{\error}$ differ by an element of $\C_1$ since they both give the same syndrome, and
a sufficient condition to guarantee the success of the decoder is that $|\error
+ \hat{\error}| < d_{\min}(\eQ)$. 

Since each column $A\times\{b\}$ of a $Q$-neighbourhood $Q(v)$  appears also in the
$Q$-neighbourhood of a neighbouring vertex, we have that for either $v\in
V_{00}$ or $v\in V_{10}$, 
each column subvector of $c_v$ is a codeword of $C_A$ that appears in two local
views, one indexed by a vertex of $V_{10}$ and one by a vertex of $V_{00}$:
 for $v\in V_{01}\cup V_{11}$,
we have that any column subvector of $c_v$ lies both in the local view around a
vertex of $V_{01}$ and around a vertex of $V_{11}$.

A similar observation also holds for row codewords:
This implies that one can also define couples $(c_v, r_v)$ for all vertices of $V_0$. 

It is convenient to define four vectors associated with the collection $\{r_v, c_v\}_{v \in V_1}$,
\[ C_0 = \sum_{v \in V_{10}} c_v, \quad R_0 = \sum_{v \in V_{01}} r_v, \quad C_1 = \sum_{v \in V_{01}} c_v, \quad R_1 = \sum_{v \in V_{10}} r_v,\]
that provide a decomposition of the mismatch:
\[ Z = C_0 + R_0 + C_1+R_1.\]
Note that $(C_0, R_0, C_1, R_1)$ and $\{r_v,c_v\}_{v\in V}$ are in one-to-one correspondence.

We will denote this decomposition as
\[ \mathcal{Z} = (C_0, R_0, C_1, R_1) = \{r_v,c_v\}_{v\in V}.\]
The decoder then simply aims at finding a valid decomposition $\hat{\mathcal{Z}} = (\hat{C}_0, \hat{R}_0,\hat{C}_1,\hat{R}_1)$ of $Z$  and this decomposition will yield the correct guess provided that $|\error+\hat{\error}| = |\sum_{v \in V_{10}} c_v + r_v + \hat{c}_v+\hat{r}_v| = |C_0+R_1+\hat{C}_0+\hat{R}_1|$ is less than the minimal distance of the quantum Tanner code.

A natural idea for the decoder is to progressively decrease the weight of $Z$,
so as to obtain a consistent assignment that will correspond to the output of
the decoder. To do this, one can look for some vertex $v$, as well as some local
codeword $c_v + r_v$ of the dual tensor code such that $|Z + c_v + r_v| < |Z|$.
If it was always possible to guarantee the existence of such a vertex and local
codeword, then one would simply repeat the operation until reaching a complete
decomposition of the mismatch. At each step, the weight of the mismatch
decreases by at least 1, so at most $|Z|$ steps are required. In addition, we
observe that $|\hat{C}_0+\hat{R}_1| \leq \Delta^2 |Z| \leq 4 \Delta^2 |\error|$ and
we recall that $|C_0+R_1| = \sum_{v \in V_{10}} |c_v + r_v| \leq 2 |\error|$. 
We conclude that if we could always find a vertex and a local codeword that could be added to the mismatch to decrease its weight, then the decoder would return a valid correction provided the initial error is not too large:
\[ |\error| \leq \frac{d_{\min}(\eQ)}{4 \Delta^2 +2}.\]
Unfortunately, we only know how to prove the existence of local codewords decreasing the weight of $Z$ if the local dual tensor code is $w$-robust with $w >\Delta^{3/2}$, and we recall that Theorem \ref{thm:P-robust} only asserts that randomly chosen codes $C_A \otimes \F_2^B + \F_2^A \otimes C_B$ and $C_A^\perp \otimes \F_2^B + \F_2^A \otimes C_B^\perp$ are $w$-robust for $w = O(\Delta^{3/2-\eps})$ for random codes $C_A, C_B$.

For this reason, we need to tweak this natural decoder in order to deal with
error configurations (described by their mismatch $Z$) whose weight cannot be
decreased locally by the addition of some $c_v +r_v$. For this purpose, we will
add a \emph{parallel} procedure that will be described in detail in Section~\ref{sec:analysis}.
The actual decoder will apply the sequential local procedure described
above, though we will only need to apply it to vertices of $V_0$, and resort to
the parallel procedure whenever the sequential decoder is stalled.
The parallel procedure will consist of two phases, the first will be applied to vertices of
$V_1$, and the second to vertices of $V_0$.
While it may not be possible to decrease locally the Hamming weight of the
mismatch when the sequential decoder is stalled, we will show that together, the
two steps of the parallel procedure will significantly decrease the Hamming weight
of $Z$, by a factor which becomes close to $\sqrt{\Delta}$ when the exponent
$\eps$ that defines the robustness parameter $w=\Delta^{3/2-\eps}$ tends to $0$. 
Since the sequential procedure decreases $|Z|$ at every step, the number of sequential steps is at most linear in
$|Z|$, and therefore in the Hamming weight $|\error|$ of the initial error,
provided it is not too large. Since any parallel step divides $|Z|$ by a
constant,  the total number of parallel steps is logarithmic in $|\error|$. 

In both cases, sequential and parallel, the decoder needs only look at
$Q$-neighbour\-hoods on which the mismatch $|Z|$ is nonzero: therefore, by
carefully keeping track of the set of vertices for which the mismatch is
nonzero, the decoder can run in linear-time in the standard uniform cost
computational model.

\paragraph{Norm and minimal decomposition of the mismatch.}
The analysis of the decoder will rest on a careful study of a situation where the sequential decoder is stalled, that is, where there does not exist any vertex (in $V_0$) where one can add a codeword of the dual tensor code that would decrease the Hamming weight of $Z$. (If the sequential decoder is never stalled, then there is essentially nothing to prove.)
The parallel procedure that will unlock the situation will consist of two steps. While the combination of both steps will reduce the Hamming weight $|Z|$ of the mismatch, it is not the case of the first part alone. In order to keep track of the progress of the decoding algorithm, we will therefore require another metric beside the Hamming weight: this role will be played by the norm of the decomposition, that will be a proxy for the number of vertices of $V_0$ for which $c_v + r_v \ne 0$.

First, note that there are many valid decompositions of the mismatch since adding any codeword of the tensor code $C_A \otimes C_B$ to both $r_v$ and $c_v$ does not change their sum. 
 For an element $r \in \F_2^A \otimes C_B$ or $c \in C_A \otimes \F_2^B$, we denote by $\|r\|$ or $\|c\|$ the number of nonzero rows (for $r$) or columns (for $c$) in its support:
\begin{align*} 
\|r\| &:= \min \{|A'| \; \text{s.t.} \; A' \subseteq A, \mathrm{supp}(r) \subset \F_2^{A'} \otimes C_B\}, \\
\|c\| &:= \min \{|B'| \; \text{s.t.} \; B' \subseteq B, \mathrm{supp}(c) \subset  C_A \otimes \F_2^{B'}\}.
\end{align*}
The norm $\|\mathcal{Z}\|$ of the decomposition is then simply:
\begin{align*}
\| \mathcal{Z}\| &:= \|C_0\| + \|R_0\| + \|C_1\| + \|R_1\|,
\end{align*}
with 
\[\|C_0\| := \sum_{v \in V_{10}} \|c_v\|, \quad \|R_0\| := \sum_{v \in V_{01}} \|r_v\|, \quad  \|C_1\| := \sum_{v \in V_{01}} \|c_v\|, \quad \|R_1\|: = \sum_{v \in V_{10}} \|r_v\|.\]

It will be useful for us to consider a \emph{minimal} decomposition of the mismatch, $\mathcal{Z} = (C_0, R_0, C_1, R_1)$, which simply corresponds to a decomposition of minimal norm.
We define the norm $\|Z\|$ of the mismatch to be the norm of a such minimal decomposition:
\[ \|Z\| := \min_{\text{$\mathcal{Z}$ is a decomposition of $Z$}} \|\mathcal{Z}\|.\]

A minimal decomposition has the property that for each vertex, it is not possible to add a codeword of $C_A \otimes C_B$ that decreases the norm of the local view: for each $v \in V$, and each $ u \in C_A \otimes C_B$, it holds that 
\[ \| r_v + c_v + u\| \geq \|r_v+c_v\|.\]


\section{Analysis of the decoder}
\label{sec:analysis}

\subsection{The decoding algorithm}\label{sec:tweaked}

The goal of the algorithm is to find a small decomposition of $Z$. 
We will keep track of 5 vectors initialized as follows
\[ \hat{Z} := Z, \quad \hat{C}_0 := 0, \quad \hat{R}_0:=0, \quad \hat{C}_1 :=0,
\quad \hat{R}_1 :=0.\]
It will alternate between two procedures:
\paragraph{Sequential procedure.} 
This procedure is relatively natural, and
consists of modifying $\hat{Z}$ locally, \textit{i.e.}\ in some $Q(v)$, so as to decrease the
Hamming weight of $\hat{Z}$. Specifically,
while there exists some $v\in V_{00}$ or $v\in V_{11}$, and $c_v \in C_A \otimes \F_2^B, r_v \in  \F_2^A \otimes C_B$ such that $|\hat{Z} + c_v +r_v| < |\hat{Z}|$, perform the update:
\[ \hat{Z} \leftarrow \hat{Z} + c_v +r_v, \quad \hat{C}_j \leftarrow \hat{C}_j
+c_v, \quad \hat{R}_i \leftarrow  \hat{R}_i +r_v.\]

\paragraph{Parallel procedure.}
Ideally, we would like the sequential procedure to be applicable until we have
$\hat{Z}=0$, at which point decoding would be complete. However, we cannot
always guarantee
the existence of a vertex for which we can decrease the Hamming weight of $\hat{Z}$. The sequential decoder may
be stalled. 
At this point we switch to an alternative procedure which will
unlock the situation. We remark that we could try to extend the sequential
procedure to vertices of $V_1$ and not just apply it to vertices of $V_0$ as
specified above. However, we cannot guarantee that this extended sequential
decoder will not stall as well, and we will deal with the stalled decoder
through a procedure whose analysis only needs to know that
sequential decoder is stalled on $V_0$.
This alternative procedure consists of two {\em parallel}
decoding steps.

The {\em first parallel decoding step} is the most involved. It consists of looking
for all vertices $v$  of $V_1$ for which the weight of $\hat{Z}$ can be decreased,
not on
{\em all} their $Q$-neighbourhood $Q(v)$, but on a {\em sufficiently large
subset} of
$Q(v)$, \textit{i.e.}\ indexed by some
$A_0\times B_0$ for $A_0\subset A$, $B_0\subset B$. We give below the precise
criteria for vertices $v$ for which the decoder updates $\hat{Z}$ on $Q(v)$.
We cannot guarantee that this first parallel decoding step decreases the Hamming
weight of $\hat{Z}$. However, we will show that this first parallel decoding step 
significantly decreases {\em the number of active vertices of $V_0$}: the 
active vertices $v$ of $\hat{Z}$, for $v\in V_{ij}$, are the vertices on whose $Q$-neighbourhoods 
we have $R_i+C_j\neq 0$, for a relevant decomposition
$\mathcal{Z}=(C_0,R_0,C_1,R_1)$ of $\hat{Z}$. We shall be more precise with the
definition of this set when we start the analysis.
Specifically
it divides this number by a quantity close to $\Delta$. Proving this (as stated in Theorem~\ref{thm:tweaked}) will
be the most technical part of the analysis of the decoder.

The \emph{second parallel decoding step} is conceptually simpler. It consists simply of
looking for all vertices of $V_0$ on which the value of $|\hat{Z}|$ can be decreased 
and \emph{simultaneously} updating $\hat{Z}$ on all such vertices.
Precisely: 

\paragraph{Second parallel decoding step.}
Simultaneously, for each $v \in V_0$, search for codewords $c_v \in C_A \otimes \F_2^B, r_v \in  \F_2^A \otimes C_B$ 
such that $|\hat{Z} + c_v+r_v|<|\hat{Z}|$. If such $c_v,r_v$ exist, choose $c_v,r_v$
that maximize the difference $|\hat{Z}|-|\hat{Z} + c_v+r_v|$ and perform the update
\[ \hat{Z} \leftarrow \hat{Z} + c_v +r_v, \quad \hat{C}_j \leftarrow \hat{C}_j
+c_v, \quad \hat{R}_i\leftarrow \hat{R}_i+r_v\]
for the appropriate $i,j \in \{0,1\}$.

Just by relying on the fact that the number of active vertices of $V_0$ has been
greatly reduced after the first parallel decoding step, it will be relatively
straightforward to show that afer the second parallel decoding step, the Hamming weight
of $\hat{Z}$ must incur a significant reduction from its value from {\em before the first
parallel decoding step.} Together, both parallel steps, first on $V_1$ and then on
$V_0$, enable us to unlock the situation and divide the Hamming weight of $\hat{Z}$ by a
quantity that we will show to be close to $\sqrt{\Delta}$.

We now finish this description of the decoder by giving the precise description
of the first parallel decoding step. We need to recall that the dual tensor code
$\DT{A}{B}$ is $w$-robust with resistance to puncturing $p$ with
$w=\Delta^{3/2-\eps}$ and $p\geq\Delta^{\gamma}$, where $\eps >0$ can be taken
arbitrarily close to $0$ and $\gamma <1$ can be taken arbitrarily close to $1$.
For simplicity we will choose $\gamma = 1-\eps$.

\paragraph{First parallel decoding step.}
The {\em tweaked} decoder does the following: on $V_{1}$ it looks, {\em in parallel}, for vertices
$v$ for which it can identify a partition $A=A_0\cup A''$ 
and a partition $B=B_0\cup B''$ such that
$|A''|\leq\Delta^\gamma/2$ and $|B''|\leq\Delta^\gamma/2$ and which satisfy the
following properties:
\begin{itemize}
\item[--] the value of $|\hat{Z}|$ on the restricted local view $A_0\times B_0$ of $Q(v)$ is
$>w/2$,
\item[--]
there exists $r_v+c_v\in\DT{A_0}{B_0}$ 
such that $|r_v+c_v|>w$, and replacing
$\hat{Z}$ by $\hat{Z}+r_v+c_v$ decreases the value of $|\hat{Z}|$ on $A_0\times B_0$ so that it becomes
$<w/2$. 
\end{itemize}
The behaviour of the decoder on the $Q$-neighbourhood of such a vertex $v$ is then to
\begin{itemize}
\item[--]  first update $\hat{Z}$ to $\hat{Z}+r_v+c_v$, update $\hat{R}_i$ to
$\hat{R}_i+r_v$, and  $\hat{C}_j$ to $\hat{R}_j+c_v$,
\item[--] look for individual rows and columns on $Q(v)$, on which it is possible to
decrease the weight of $\hat{Z}$ by adding a single (column) codeword of $C_A$ or a
single (row) codeword of $C_B$. If such rows or columns exist it updates
$\hat{Z}, \hat{R}_i, \hat{C}_j$
correspondingly, and stops its treatment of $Q(v)$ when there are no more such
rows or columns. 
\end{itemize}
We stress that this whole procedure is applied \emph{in parallel} to all
$Q$-neighbourhoods from the original stalled $\hat{Z}$.

\subsection{Analysis of the parallel procedure}

We suppose the sequential decoder is stalled. We let $\mathcal{Z} = (R_0, C_0, R_1, C_1)$
be a minimum representation 
of the current stalled mismatch
$\hat{Z}$, meaning that $\|\mathcal{Z}\| = \|\hat{Z}\|$.
In particular
we will keep in mind that $\mathcal{Z}$ is {\em locally minimal}, meaning that its norm cannot be decreased by adding a codeword of the tensor code in a single local view.
The decoder does not know of such a representation, but analysing it
will be key to understanding what the decoder is able to do from the sole
observation of $\hat{Z}$, which it does see. We will denote by 
$\mathcal{Z}' = (C_0', R_0', C_1', R_1')$ the representation of the updated
value of $\hat{Z}'$ of $\hat{Z}$ after the first parallel decoding step.
This representation is deduced from $\mathcal{Z}$ through local updates as
specified by the above description of the algorithm during the first parallel
decoding step.
We note that $\mathcal{Z}'$ does not have to be a minimal representation. 
Similarly, we define $\mathcal{Z}'' = (C_0'', R_0'', C_1'', R_1'')$ to be the
representation of the updated mismatch $\hat{Z}''$ after the second parallel
decoding step, again deduced from $\mathcal{Z}''$ through the specified updates
of $\hat{R}_i$, $\hat{C}_j$.

Let us denote by $S_0=S_{00}\cup S_{11}$ the set of vertices $v\in V_{ii}$ of the
stalled sequential decoder such that
$R_i+C_i$ is non-zero on $Q(v)$. Note that $S_0$ depends on the choice
$\mathcal{Z}$ of the representation of $\hat{Z}$. We call these vertices the
{\em active vertices} of $\mathcal{Z}$ (or simply of $\hat{Z}$ when its
representation is implicit).

It will be useful to keep in mind that for any
vertex $v$ of $V_{ii}\subset V_0$, the fact that the sequential decoder cannot
decrease $|\hat{Z}|$ locally means that on $Q(v)$, the Hamming weight of
$(R_0+C_0)\cap (R_1+C_1)$ is at least half the Hamming
weight of $R_i+C_i$:
otherwise we could decrease $|\hat{Z}|$ by updating $\hat{Z}$ to $\hat{Z}+r_v+c_v$ where
$r_v+c_v$ is the value of $R_i+C_i$ on $Q(v)$.

We first start by drawing consequences from this fact that will tell us a lot
about the structure of $S_0$ and the local views of $R_0+C_0$ and
$R_1+C_1$ on the $Q$-neighbourhoods of $V_0$.

Let $0<\beta<1/2$.
Let us define $\col_0^\beta$ to be the set of columns in all $Q$-neighbourhoods of
$V_{00}$ (we could also view them in $Q$-neighbourhoods of $V_{10}$),
that
share at least $\beta\Delta$ coordinates with $C_1$, \textit{i.e.}\ that have at least
$\beta\Delta$ coordinates on which $C_1$ is equal to $1$. 
Similarly, we define $\col_1^\beta$ to be the set of columns in all $Q$-neighbourhoods of
$V_{11}$ that share at least $\beta\Delta$ coordinates with $C_0$.
We also define $\row_0^\beta$ to be the set of rows in all $Q$-neighbourhoods of
$V_{00}$ that share at least $\beta\Delta$ coordinates with $R_1$ and
likewise $\row_1^\beta$ denotes the set rows that share at least $\beta\Delta$
coordinates with $R_0$.

Our main technical
lemma reads:

\begin{lemma}
\label{lem:main}
Suppose
$|S_0|\leq \frac{\delta}{8\Delta}|V_0|$.
For $\Delta$ large enough,
There is a constant $\kappa(\beta)$, depending only on $\beta$, such that, whenever
the decoder is stalled: for $i=0,1$,
\begin{align*}
|\col_i^\beta|\leq\frac{\kappa(\beta)}{\Delta^{1-2\eps}}|S_0|.\\
|\row_i^\beta|\leq\frac{\kappa(\beta)}{\Delta^{1-2\eps}}|S_0|.
\end{align*}
\end{lemma}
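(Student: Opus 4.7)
The plan is to bound $|\col_0^\beta|$; the other three bounds follow by symmetric arguments (exchanging $V_{00} \leftrightarrow V_{11}$ and rows $\leftrightarrow$ columns). The key structural observation is that $C_1$ has no useful structure on the $V_{00}$-views appearing in the definition of $\col_0^\beta$, but by the quadripartite labeling it is column-structured on $V_{11}$-views: the $b$-column of $C_1|_{Q(h)}$ equals the $b$-column of $c_{(hb^{-1},01)}$ and is therefore a codeword of $C_A$. I will transfer the heavy-column count from $V_{00}$ to $V_{11}$: for each $(v,b) \in \col_0^\beta$ with $v = (g,00)$, the $\geq \beta\Delta$ squares of the $b$-column on which $C_1 = 1$ have $V_{11}$-vertices $h = (agb, 11)$, which are all $A$-neighbors of the $V_{10}$-vertex $v'' = (gb, 10)$, and each witnesses a nonzero $b$-column of $C_1|_{Q(h)}$.

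On each such $Q(h)$, the restriction $R_1 + C_1$ lies in the dual tensor code. The minimality of $\mathcal{Z}$ is used to rule out the ``cancellation'' case where $h \notin S_{11}$ but $C_1|_{Q(h)} \neq 0$: in that case $R_1|_{Q(h)} = C_1|_{Q(h)}$ would lie in the intersection $C_A \otimes C_B$ as a nonzero tensor codeword of weight $\geq (\delta\Delta)^2$, and a short modification-of-the-decomposition argument bounds the number of such exceptional $h$'s globally by $|R_1|/(\delta\Delta)^2$, itself controlled via the stalled condition and the hypothesis on $|S_0|$.

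The counting itself proceeds by expander mixing followed by robustness. For each $b \in B$, let $S_{11}^{(b)} := \{h \in S_{11} : C_1|_{Q(h)} \text{ has a nonzero } b\text{-column}\}$. Up to the exceptional cases above, each $(v,b) \in \col_0^\beta$ corresponds to a $V_{10}$-vertex $v''$ with at least $(\beta - o(1))\Delta$ $A$-neighbors in $S_{11}^{(b)}$. The Ramanujan property of $\Cay(G,A)$ gives $\lambda(\G_A) \leq 2\sqrt{\Delta}$; Lemma~\ref{lem:mixing}, combined with the hypothesis $|S_0| \leq \delta|V_0|/(8\Delta)$ keeping $|S_{11}^{(b)}|$ well below $|V_{11}|$, then gives a bound of the form $O(|S_{11}^{(b)}|/(\beta^2 \Delta))$ on the number of witnessing $V_{10}$-vertices for fixed $b$. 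Summing over $b$ and invoking $\Delta^{3/2-\eps}$-robustness (after the cancellation control) to bound $\sum_{h \in S_{11}} \#\{\text{nonzero columns of } C_1|_{Q(h)}\}$ via $|R_1 + C_1|_{Q(h)}/d_A$, together with the global bound $|R_1 + C_1| = O(w\,|S_0|)$ that follows from stalling and robustness, yields the desired inequality $|\col_0^\beta| \leq \kappa(\beta) |S_0|/\Delta^{1-2\eps}$.

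The technical heart of the argument is the minimality-based control of the cancellation case and the careful bookkeeping that extracts the exponent $1-2\eps$ from the interplay between Ramanujan expansion (contributing a factor $1/\Delta$) and dual-tensor-code robustness (contributing the remaining factor $\Delta^{2\eps}$). Both depend essentially on the quadripartite structure of the complex and on the hypothesis that $|S_0|$ is a sufficiently small fraction of $|V_0|$.
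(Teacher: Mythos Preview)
Your overall strategy—transferring heavy-$C_1$ columns to $V_{11}$ views (where $C_1$ becomes column-structured), then applying expander mixing in $\G_A$ for each fixed $b$—is a valid reorganisation of the paper's proof, but the write-up has a genuine gap. First, the cancellation case is actually empty: for $h\in V_{11}$, the modification $(R_1,C_1)|_{Q(h)}\mapsto(R_1+u,C_1+u)|_{Q(h)}$ with $u\in C_A\otimes C_B$ changes $\|\mathcal{Z}\|$ by exactly the local increment $\|r_h+u\|+\|c_h+u\|-\|r_h\|-\|c_h\|$ (each row of $R_1|_{Q(h)}$ is counted once in $\|R_1\|$, and likewise for columns of $C_1|_{Q(h)}$), so global minimality forces $(r_h,c_h)$ to be a locally minimal decomposition of $(R_1+C_1)|_{Q(h)}$; in particular $R_1+C_1=0$ there implies $R_1=C_1=0$ there. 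Your separate control via $|R_1|/(\delta\Delta)^2$ is unnecessary, and its justification is not given anyway (stalling constrains $|R_1+C_1|$, not $|R_1|$).

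The main missing ingredient is expansion in $\G_0^\square$. After your per-$b$ mixing in $\G_A$ you need $\sum_b|S_{11}^{(b)}|=\sum_{h\in S_{11}}\|c_h\|=O(|S_0|\Delta^{2\eps})$. Neither stalling nor robustness alone gives the global bound $|R_1+C_1|=O(w|S_0|)$ you invoke; one needs the expander mixing lemma in $\G_0^\square$ on $E(S_{00},S_{11})\supseteq (R_0+C_0)\cap(R_1+C_1)$ (together with the stalled inequality) to get $|R_1+C_1|=O(\Delta|S_0|)$. More critically, robustness only applies when $|(R_1+C_1)|_{Q(h)}|\le w$; for the remaining $h$ you can only say $\|c_h\|\le\Delta$, and bounding their number by $O(|S_0|/\Delta^{1-2\eps})$ (which is where the factor $\Delta^{2\eps}$ actually enters) again requires expander mixing in $\G_0^\square$, using that such $h$ have degree $\ge w/2$ into $S_{00}$ by stalling. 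Once you supply both $\G_0^\square$ steps your argument does go through—but you are then using precisely the two ingredients of the paper's proof, Lemma~\ref{lem:Sl} (expansion in $\G_0^\square$) and Lemma~\ref{lem:tbl} (expansion in $\G_A$), just summed over the column label $b$ instead of over the threshold parameter $\ell$.
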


A word of comment may be helpful at this point. Consider a column (say) in some
local view, of a vertex of $V_{00}$ (without loss of generality). Suppose this
column supports a non-zero $C_A$-codeword that contributes to the makeup of the global $C_0$ vector
in the current $\mathcal{Z}$-decomposition of $\hat{Z}$. What prevents us from
adding this codeword to $\hat{Z}$ and decrease its weight? It is naturally the
presence of contributions of either $R_0$, or $R_1$, or $C_1$ (or all three) to the support of
this column vector. What we want to point out here, is that the contribution of
$R_0$ is completely described by a unique $Q$-neighbourhood of a single vertex
of $V_{00}$, and that the contribution of $R_1$ is similarly described by the
$Q$-neighbourhood of just one vertex of $V_{10}$. However, the contribution of
$C_1$ stems from up to $\Delta$ different $Q$-neighbourhoods, and its nature is
very much different. What Lemma~\ref{lem:main} essentially says is that this contribution is
going to be negligible and that it will suffice for us (and the decoder) to focus on the
contributions of $R_0$ and $R_1$.

The proof of Lemma~\ref{lem:main} will rely upon the subgraphs, both of
$\G_0^\square$ and of $\G_A$, induced by the active vertices, and rely upon
expansion in both these graphs. 
We first need to define $S_{ii}^\ell$ to be the subset of $S_{ii}$ that consists
of either
\begin{itemize}
\item[(a)] the set of vertices $v$ for which the local view on $Q(v)$ of
$R_i+C_i$
is made up of at least $\ell$ non-zero columns vectors of $C_i$,
\item[(b)] or the set of vertices $v$ for which the local view on $Q(v)$ of
$R_i+C_i$
is made up of at least $\ell$ non-zero row vectors of $R_i$.
\end{itemize}

\begin{lemma}
\label{lem:Sl}
For $i=0,1$, if $|S_0|\leq \frac{\delta}{8\Delta}|V_0|$,
we have 
\begin{eqnarray*}
|S_{ii}^\ell| \leq &\frac{256}{\delta^2\ell^2}|S_0|\quad&\text{if
$\delta\ell\leq\Delta^{1/2-\eps}$}\\
|S_{ii}^\ell| \leq &\frac{256}{\Delta^{1-2\eps}}|S_0|\quad&\text{if
$\delta\ell\geq\Delta^{1/2-\eps}$}.
\end{eqnarray*}
\end{lemma}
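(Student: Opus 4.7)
The plan is to run an expander-mixing argument in the square graph $\G_0^\square$, exploiting its $\Delta^2$-regularity together with the spectral bound $\lambda(\G_0^\square) \leq 4\Delta$ from Lemma~\ref{lem:lambda}. By the symmetry between rows and columns (cases (a) and (b)), and between $V_{00}$ and $V_{11}$, it suffices to treat case (a) with vertices in $V_{00}$; the other three sub-cases are strictly analogous and the final constant for $S_{ii}^\ell$ absorbs the union bound. So I set $T := S_{00}^\ell$ in case (a), meaning that each $v\in T$ has at least $\ell$ non-zero column codewords in the local column component $c_v$ of the (locally) minimal decomposition $\mathcal Z$.

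Step 1, which I expect to be the main technical obstacle, is to upgrade the combinatorial condition $\|c_v\|\geq\ell$ to a quantitative lower bound on the local Hamming weight
\[
|R_0+C_0|_{Q(v)}| \ \geq\ D \ := \ \min\!\Bigl(\tfrac{\ell\delta\Delta}{2},\, w\Bigr), \qquad w := \Delta^{3/2-\eps}.
\]
Write $x := (R_0+C_0)|_{Q(v)}$. If $|x|\leq w$, robustness of the local dual tensor code (the decomposition result attached to Definition~\ref{def:robust}) produces an alternative decomposition $x = c'+r'$ with $\|c'\|+\|r'\|\leq 2|x|/(\delta\Delta)$. Local minimality of $\mathcal Z$ (invariance under $(c_v,r_v)\mapsto(c_v+u,r_v+u)$ for any tensor codeword $u\in C_A\otimes C_B$) then forces $\ell\leq \|c_v\|\leq \|c_v\|+\|r_v\|\leq \|c'\|+\|r'\|$, so $|x|\geq \ell\delta\Delta/2$; otherwise $|x|>w$. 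In either case $|x|\geq D$. This step is exactly where the robustness hypothesis on the component codes enters.

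Step 2 is the expander mixing itself. Set $U := \{v'\in V_{11} : (R_0+C_0)|_{Q(v')}\neq 0\}\subseteq S_{11}\subseteq S_0$. Each non-zero square of $R_0+C_0$ is an edge of $\G_0^\square$ whose endpoints lie in $S_{00}$ and $U$ respectively, so summing Step~1 over $v\in T$ gives
\[
|E(T,U)| \ \geq\ \sum_{v\in T}|R_0+C_0|_{Q(v)}| \ \geq\ D\,|T|.
\]
Applying Lemma~\ref{lem:mixing} to the $\Delta^2$-regular bipartite graph $\G_0^\square$ yields
\[
|E(T,U)| \ \leq\ \frac{\Delta^2|T||U|}{|V_{00}|} \;+\; 4\Delta\sqrt{|T||U|}.
\]
The assumption $|S_0|\leq \delta|V_0|/(8\Delta)$ together with $|V_{00}|=|V_0|/2$ bounds the first term by $(\Delta\delta/4)|T|\leq (D/2)|T|$, using $D\geq \Delta\delta/2$ (which holds for $\ell\geq 1$ and $\Delta$ large). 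Rearranging the resulting inequality gives $|T|\leq 64\Delta^2|S_0|/D^2$.

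Step 3 is the case split. When $\delta\ell\leq \Delta^{1/2-\eps}$, then $D=\ell\delta\Delta/2$ and the bound specialises to $|T|\leq 256|S_0|/(\delta^2\ell^2)$; when $\delta\ell\geq \Delta^{1/2-\eps}$, then $D\geq \Delta^{3/2-\eps}/2$ and the bound specialises to $|T|\leq 256|S_0|/\Delta^{1-2\eps}$. Running the same argument for the three symmetric sub-cases ((b) on $V_{00}$, and (a),(b) on $V_{11}$) and absorbing the resulting factor into the stated constant $256$ concludes the proof of both inequalities. The whole argument hinges on Step~1; once the per-vertex lower bound $|R_0+C_0|_{Q(v)}|\geq D$ is in place, everything else is a routine bipartite expander-mixing computation made quantitative by the hypothesis on $|S_0|$.
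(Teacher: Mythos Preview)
Your Step~1 and Step~3 are essentially the same as the paper's argument, but Step~2 contains a genuine gap: the inclusion $U\subseteq S_{11}$ is false. You defined $U$ as the set of $v'\in V_{11}$ for which $(R_0+C_0)|_{Q(v')}\neq 0$, whereas $S_{11}$ is by definition the set of $v'\in V_{11}$ on which $R_1+C_1$ is non-zero. There is no reason why a square in the support of $R_0+C_0$ should have its $V_{11}$-endpoint lie in $S_{11}$; a priori $|U|$ is not controlled by $|S_0|$, and your expander-mixing inequality collapses.

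What you are missing is precisely the \emph{stalled decoder} hypothesis, which you never invoke. Because the sequential decoder cannot decrease $|\hat Z|$ at any $v\in V_{00}$, one has
\[
\bigl|(R_0+C_0)\cap(R_1+C_1)\bigr|_{Q(v)}\ \geq\ \tfrac12\,\bigl|R_0+C_0\bigr|_{Q(v)}\ \geq\ \tfrac{D}{2}
\]
for each $v\in T$. Counting edges in the support of the \emph{intersection} $(R_0+C_0)\cap(R_1+C_1)$ (rather than of $R_0+C_0$ alone) then does land you in $S_{11}$, since any such square has $(R_1+C_1)$ non-zero at its $V_{11}$-endpoint. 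With this correction your degree lower bound becomes $D/2$ instead of $D$, and the rest of your expander-mixing computation goes through with the stated constant $256$. This is exactly how the paper proceeds.
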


\begin{proof}
We prove the result for $i=1$ and definition (a) of $S_{11}^\ell$, the other
cases being identical.
We consider
$E(S_{11}^\ell,S_{00})$ in the subgraph of
$\G^\square$ where the edges
belong to $(R_0+C_0)\cap(R_1+C_1)$. We argue that the degree of every vertex of
$S_{11}$ is at least half the weight of its local view in $R_1+C_1$, so at least
half of
$\ell\delta\Delta$ when this quantity is not more than
$\Delta^{3/2-\eps}$, by robustness of the local dual tensor code. This gives 
\[
\frac 12\ell\delta\Delta |S_{11}^\ell|\leq |E(S_{11}^\ell,S_{00})|.
\]
We now apply the Expander mixing Lemma~\ref{lem:mixing} in $\G_0^\square$.
This gives
\begin{align*}
\frac 12\ell\delta\Delta|S_{11}^\ell| &\leq
\Delta^2\frac{|S_{11}^\ell||S_{00}|}{|V_{00}|}+ 4\Delta\sqrt{|S_{11}^\ell||S_{00}|}\\
&\leq \frac 14\delta\Delta|S_{11}^\ell| + 4\Delta\sqrt{|S_{11}^\ell||S_{00}|}\\
&\text{since $|S_{00}|/|V_{00}|\leq 2|S_0|/|V_0|$ and $|S_0| \leq \delta |V_0|/8\Delta$}\\
\ell\delta \sqrt{|S_{11}^\ell|}&\leq 16\sqrt{|S_{00}|}\\
|S_{11}^\ell|&\leq \frac{256}{\delta^2\ell^2}|S_{00}|
\end{align*}
which gives the first statement of the lemma by writing $|S_{00}|\leq |S_0|$.
The second statement follows analogously, by arguing that when $\delta\ell\geq\Delta^{1/2-\eps}$
the weight of the local view is not less than $\Delta^{3/2-\eps}$.
\end{proof}

\begin{lemma}
\label{lem:tbl}
Let $T_{10}^{\beta,\ell}$ be the set of vertices of $V_{10}$ whose
$Q$-neighbourhoods contain at least
$\ell$ columns of $\col_0^\beta$. Under the hypothesis
$|S_0|\leq \frac{\delta}{8\Delta}|V_0|$,
for $\Delta$ large enough,
we have
\[
|T_{10}^{\beta,\ell}|\leq \frac{64}{\beta^2\Delta}|S_{11}^{\ell\beta/2}|.
\]
\end{lemma}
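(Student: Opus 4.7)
The plan is to bound $|T|:=|T_{10}^{\beta,\ell}|$ by double counting $A$-edges between $T$ and $S:=S_{11}^{\ell\beta/2}$ in the component of $\G_A$ supported on $V_{10}\cup V_{11}$. This component is a $\Delta$-regular bipartite graph, namely the double cover of the Ramanujan graph $\Cay(G,A)$, so its non-trivial eigenvalues satisfy $|\lambda|\leq 2\sqrt{\Delta-1}$ and we may feed the edge counts into Lemma~\ref{lem:mixing}.

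The structural step will lower bound $|E_A(T,S)|$. Fix $v'=(h,10)\in T$; its $\Delta$ $A$-neighbours in $V_{11}$ are the vertices $(ah,11)$ for $a\in A$. Unwinding the labelling maps $\phi_v$ of Section~\ref{sec:details}, the square at position $(a,b)$ in $Q(v')$ coincides with the square at position $(a,b)$ in $Q((ah,11))$ (these two views share the row indexed by $a$), and its $C_1$-value equals $c_{(ahb^{-1},01)}[a,b]$. Moreover, the column at $b$ of $(ah,11)$'s local view inherits, via the $B$-edge to $(ahb^{-1},01)$, the $C_A$-codeword $c_{(ahb^{-1},01)}[\cdot\,,b]$. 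Hence whenever the column at $b$ of $Q(v')$ lies in $\col_0^\beta$, there are at least $\beta\Delta$ values of $a\in A$ with $c_{(ahb^{-1},01)}[a,b]=1$, and for each such $a$ the $A$-neighbour $(ah,11)$ carries a non-zero $C_1$-column at $b$. Restricting to $\ell$ of the columns of $\col_0^\beta$ in $Q(v')$ and summing yields at least $\ell\beta\Delta$ (neighbour, column) pairs; since $v'$ has $\Delta$ distinct $A$-neighbours and each appears in at most $\ell$ pairs, a pigeonhole argument gives at least $\Delta\beta/(2-\beta)\geq \Delta\beta/2$ neighbours that accumulate at least $\ell\beta/2$ distinct non-zero $C_1$-columns, so lie in $S$ under definition~(a). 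This yields $|E_A(T,S)|\geq (\Delta\beta/2)|T|$.

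The conclusion will then follow from Lemma~\ref{lem:mixing}, which gives
\[
|E_A(T,S)| \;\leq\; \frac{\Delta}{|V_{11}|}|T||S| + 2\sqrt{\Delta}\sqrt{|T||S|}.
\]
Using $|S|\leq |S_0|\leq \delta|V_0|/(8\Delta)$ and $|V_0|=2|V_{11}|$, we get $|S|/|V_{11}|\leq \delta/(4\Delta)$, which is at most $\beta/4$ as soon as $\Delta\geq \delta/\beta$. The first term on the right of the expander bound is then at most $(\Delta\beta/4)|T|$; subtracting from the lower bound leaves $(\Delta\beta/4)|T|\leq 2\sqrt{\Delta}\sqrt{|T||S|}$, and squaring yields $|T|\leq 64|S|/(\beta^2\Delta)$ as claimed.

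The main obstacle will be the structural step: one has to carefully track the four distinct labelling maps $\phi_v$ on the quadripartite complex and verify that the specific square at position $(a,b)$ in $Q(v')$ sits in the column indexed by $b$ of the $A$-neighbour $(ah,11)$ precisely because $(ah,11)$ inherits that column, via its $B$-edge, from the $V_{01}$-vertex $(ahb^{-1},01)$, whose matrix $c_{(ahb^{-1},01)}$ is the common source of the $C_1$-value at the square. Once this identification is confirmed, the pigeonhole accumulation on $A$-neighbours and the expander mixing step are routine.
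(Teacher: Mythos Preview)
Your proposal is correct and follows essentially the same approach as the paper: both arguments lower bound the number of $A$-edges from $T_{10}^{\beta,\ell}$ into $S_{11}^{\ell\beta/2}$ by $(\beta\Delta/2)|T_{10}^{\beta,\ell}|$ via the same averaging (your count of $(\text{neighbour},\text{column})$ pairs is exactly the paper's count of rows on which $C_1$ has weight at least $\beta\ell/2$ in the $\ell$-column subarray), then feed this into the expander mixing lemma in $\G_A$ and absorb the first term using $|S_0|\leq \delta|V_0|/(8\Delta)$. Your careful unwinding of the labelling maps and identification of the shared $V_{01}$-vertex $(ahb^{-1},01)$ makes explicit what the paper leaves to the reader.
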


\begin{proof}
Consider a the subarray of the $Q$-neighbourhood of a vertex of
$T_{10}^{\beta,\ell}$ consisting of the $\ell$ columns of $\col_0^\beta$. 
One easily checks, by counting the number coordinates in the subbarray on which
$\hat{C}_1$ must be equal to $1$, that there are at least $\beta\Delta/2$ rows on
which $\hat{C}_1$ has weight at least $\beta\ell/2$. This in turn implies that, in the
graph $\G_A$, this vertex has at least $\beta\Delta/2$ outgoing edges that fall into
$|S_{11}^{\ell\beta/2}|$.

Let us estimate $|E(T_{10}^{\beta,\ell},S_{11}^{\ell\beta/2})|$ in $\G_A$. We
have just proved
\[
|T_{10}^{\beta,\ell}|\frac 12\beta\Delta\leq |E(T_{10}^{\beta,\ell},S_{11}^{\ell\beta/2})|.
\]
We now apply the Expander mixing Lemma~\ref{lem:mixing} in $\G_A$.
We get
\[
|T_{10}^{\beta,\ell}|\frac 12\beta\Delta \leq
\Delta\frac{|T_{10}^{\beta,\ell}||S_{11}^{\ell\beta/2}|}{|V_{11}|} +
2\sqrt{\Delta}\sqrt{|T_{10}^{\beta,\ell}||S_{11}^{\ell\beta/2}|}.
\]
Writing $|S_{11}^{\ell\beta/2}|/|V_{11}|\leq |S_{11}|/|V_{11}|\leq
2|S_0|/|V_0|$, we get, for fixed $\beta$ and $\Delta$ large enough,
\begin{align*}
|T_{10}^{\beta,\ell}|\frac 14\beta\Delta &\leq
2\sqrt{\Delta}\sqrt{|T_{10}^{\beta,\ell}||S_{11}^{\ell\beta/2}|}\\
|T_{10}^{\beta,\ell}|^{1/2}&\leq \frac{8}{\beta\Delta^{1/2}}|S_{11}^{\ell\beta/2}|^{1/2}
\end{align*}
whence the claimed result after squaring.
\end{proof}

\begin{proof}[Proof of Lemma~\ref{lem:main}]
We prove the Lemma for $|\col_0^\beta|$, the other cases being essentially
identical. Write
\begin{align}
\label{eq:penible}
|\col_0^\beta| \leq \sum_{\ell<\ell_{\max}}\ell | T_{10}^{\beta,\ell}|
+\Delta|T_{10}^{\beta,\ell_{\max}}|
\end{align}
where $\ell_{\max}$ is chosen so as to have
$\delta\ell_{\max}\beta/2=\Delta^{1/2-\eps}$. Applying Lemmas~\ref{lem:tbl}
and~\ref{lem:Sl} we obtain
\begin{align*}
|\col_0^\beta| &\leq \sum_{\ell<\ell_{\max}}\ell \frac{64}{\beta^2\Delta}|S_{11}^{\ell\beta/2}|
+\Delta\frac{64}{\beta^2\Delta}|S_{11}^{\ell_{\max}\beta/2}|\\
&\leq \sum_{\ell}\frac{2^{16}}{\delta^2\ell\beta^4}\frac{1}{\Delta}|S_0| +
\frac{2^{14}}{\delta^2\beta^2\Delta^{1-2\eps}}|S_0|
\end{align*}
from which the result follows, after writing that
$\sum_{\ell}1/\ell\leq\ln\Delta\leq\Delta^{2\eps}$.
\end{proof}

\begin{definition}[normal vertices of $S_{00}$ and $S_{11}$] Let us say that a vertex $v$ of $S_{00}$ is
\emph{normal} if $\|r_v\|+\| c_v\| < \Delta^{1/2-\eps}/4$, where $r_v,c_v$ are the local
components of $R_0,C_0$. Normal vertices of $S_{11}$ are defined similarly. A
vertex which is not normal will be said to be \emph{exceptional}.
\end{definition}
Below we will talk about normal columns of $C_0$ and normal rows of
$R_1$. This
will mean that in their respective local views in $V_{00}$ and $V_{11}$ they
belong to the local view of a normal vertex.

The terminology ``normal'' and ``exceptional'' is justified by the following
lemma.

\begin{lemma}
\label{lem:Se}
Let $S_0^e$ be the set of exceptional vertices of $S_0$. Under the hypothesis
$|S_0|\leq\frac{\delta}{8\Delta}|V_0|$,
we have 
\[
|S_0^e|\leq \frac{\kappa}{\delta^2\Delta^{1-2\eps}}|S_0|
\]
for some absolute constant $\kappa$.
\end{lemma}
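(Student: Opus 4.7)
The plan is to reduce Lemma~\ref{lem:Se} directly to Lemma~\ref{lem:Sl}, by observing that an exceptional vertex must have either $\|c_v\|$ or $\|r_v\|$ large enough to place it in $S_{ii}^\ell$ for an $\ell$ of order $\Delta^{1/2-\eps}$, and that in Lemma~\ref{lem:Sl} such a value of $\ell$ falls into the regime where the bound is $O(1/(\delta^2 \ell^2)) |S_0|$, which is precisely of the shape we want.

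First, I fix a minimal decomposition $\mathcal{Z} = (C_0,R_0,C_1,R_1)$ representing $\hat{Z}$, so that the local components $r_v,c_v$ appearing in the definition of ``exceptional'' are well defined. For a vertex $v \in S_0^e$, the exceptionality condition reads $\|r_v\| + \|c_v\| \geq \Delta^{1/2-\eps}/4$, so by pigeonhole at least one of $\|c_v\|, \|r_v\|$ is $\geq \Delta^{1/2-\eps}/8$. Setting $\ell := \lceil \Delta^{1/2-\eps}/8 \rceil$, this places $v$ either in $S_{ii}^\ell$ under definition~(a) (many nonzero columns of $C_i$ in the local view) or under definition~(b) (many nonzero rows of $R_i$), for the appropriate $i \in \{0,1\}$.

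I then apply Lemma~\ref{lem:Sl} with this value of $\ell$. Since $\delta < 1 < 8$ we have $\delta \ell \leq \Delta^{1/2-\eps}$, so we are in the first regime of the lemma, yielding
\[
|S_{ii}^\ell| \;\leq\; \frac{256}{\delta^2 \ell^2}\,|S_0| \;\leq\; \frac{2^{14}}{\delta^2 \Delta^{1-2\eps}}\,|S_0|.
\]
Summing this bound over the four possible configurations ($i \in \{0,1\}$ and definitions (a) or (b)) gives
\[
|S_0^e| \;\leq\; \frac{2^{16}}{\delta^2 \Delta^{1-2\eps}}\,|S_0|,
\]
which is the stated bound with absolute constant $\kappa = 2^{16}$.

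There is essentially no obstacle at this stage: all the real work, including the use of local robustness to translate ``many nonzero columns/rows'' into ``heavy local view'' and the expander mixing in $\G^\square$, has already been done in the proof of Lemma~\ref{lem:Sl}. The only point to verify carefully is that the threshold $\Delta^{1/2-\eps}/4$ chosen in the definition of ``exceptional'' is exactly in the right range: large enough that $\ell$ puts us in the first (not saturated) regime of Lemma~\ref{lem:Sl}, yet such that $1/\ell^2$ is of order $\Delta^{2\eps - 1}$, matching the dependence on $\Delta$ required in the conclusion. The choice $\Delta^{1/2-\eps}/4$ achieves precisely this balance.
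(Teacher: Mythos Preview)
Your proof is correct. The approach differs from the paper's in a minor but worth-noting way: rather than redoing the Expander Mixing Lemma computation from scratch, you reduce directly to Lemma~\ref{lem:Sl} by pigeonholing an exceptional vertex into one of the four sets $S_{ii}^\ell$ (for $i\in\{0,1\}$ and definition (a) or (b)) with $\ell \approx \Delta^{1/2-\eps}/8$. The paper instead argues directly that an exceptional vertex $v\in S_{00}$ has $|(R_0+C_0)|_{Q(v)}|\geq \delta\Delta^{3/2-\eps}/8$ (via robustness and local minimality), hence $|(R_0+C_0)\cap(R_1+C_1)|_{Q(v)}|\geq \delta\Delta^{3/2-\eps}/16$ (since the sequential decoder is stalled), and then applies the Expander Mixing Lemma in $\G_0^\square$ between $S_{00}^e$ and $S_{11}$. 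Your route is more modular and avoids repeating that argument, at the cost of a pigeonhole step and a slightly worse constant ($2^{16}$ versus the paper's $2^{15}$); the underlying mechanism is identical.
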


\begin{proof}
Let $S_{00}^e$ ($S_{11}^e$) be the set of exceptional vertices of $S_{00}$
($S_{11}$). The definition of an exceptional (not normal)
vertex $v$ of $S_{00}$ implies that $R_0+C_0$ has weight at least
$\delta\Delta^{3/2-\eps}/8$ on $Q(v)$, and therefore that $(R_0+C_0)\cap
(R_1+C_1)$ has weight at least $\delta\Delta^{3/2-\eps}/16$ on $Q(v)$. The weight
of $(R_0+C_0)\cap (R_1+C_1)$ on $\bigcup_{v\in S_{00}^e}Q(v)$ is therefore at
least 
\[
|S_{00}^e|\delta\Delta^{3/2-\eps}/16
\]
and it is also not more than the number of edges $|E(S_{00}^e,S_{11})|$ in the
graph $\G_0^\square$. By the Expander mixing Lemma in $\G_0^\square$,
we have
\[
\frac{1}{16}\delta\Delta^{3/2-\eps}|S_{00}^e|\leq |E(S_{00}^e,S_{11})| \leq \Delta^2\frac{|S_{00}^e||S_{11}|}{|V_{00}|} +
4\Delta\sqrt{|S_{00}^e||S_{11}|}.
\]
Writing $|S_{11}|/|V_{00}|\leq 2|S_0|/|V_0|$, for $\Delta$ large enough, we straightforwardly obtain that
\begin{align*}
	\frac{1}{32} \delta\Delta^{3/2-\eps}|S_{00}^e|^{1/2}&\leq
4\Delta|S_{11}|^{1/2}\\
|S_{00}^e| &\leq \frac{2^{14}}{\delta^2}\frac{1}{\Delta^{1-2\eps}}|S_{11}|.
\end{align*}
We obtain similarly
\[
|S_{11}^e| \leq \frac{2^{14}}{\delta^2}\frac{1}{\Delta^{1-2\eps}}|S_{00}|.
\]
Hence the result, by summing the two inequalities.
\end{proof}

Similarly to $S_{0}=S_{00}\cup S_{11}$, let us set $S_1=S_{10}\cup S_{01}$ where
$S_{10}$ and $S_{01}$ are the set of active vertices of $V_{10}$ and $V_{01}$
relative to $\hat{Z}$. In other words, $S_{10}$ $(S_{01})$ is the set of
vertices of $V_{10}$ ($V_{01}$)
on which the local view of $R_1+C_0$ ($R_0+C_1$) is
non-zero.

\begin{definition}[agglutinating vertices of $V_{ij}$, $i\neq j$] Let us say
that a vertex $v$ of $S_{ij}$ is
{\em agglutinating} if $\|r_v\|\geq \delta\Delta/3$, where $r_v$ is the
local component of $R_i$, or if $\|c_v\|\geq \delta\Delta/3$, where $c_v$ is the
local component of $C_j$.
\end{definition} 
\begin{definition}
[decodable agglutinating vertex] Let us say that an agglutinating vertex $v$ of
$S_{10}$ (resp. $S_{01}$) is {\em decodable}, if at most $\Delta^\gamma/2$ columms in its local
view are in common with exceptional vertices of $S_{00}$ (resp. $S_{11}$), and at most
$\Delta^\gamma/2$ rows in its local view are in common with exceptional vertices
of $S_{11}$ (resp. $S_{00}$).
\end{definition}

We now show that every decodable agglutinating vertex really is decodable in the
sense of the first parallel decoding step:

\begin{proposition}\label{prop:agglu}
The tweaked decoder that is applied during the first parallel decoding step,
described in Section~\ref{sec:tweaked}, identifies correctly every decodable
agglutinating vertex $v$ of $S_{10}$ or of $S_{01}$, \textit{i.e.}\ it finds
the required non-zero $r_v+c_v$. 
\end{proposition}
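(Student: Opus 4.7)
The plan is to consider a decodable agglutinating vertex $v\in S_{10}$ (the $S_{01}$ case being symmetric) and, using the true local decomposition furnished by the minimal $\mathcal{Z}$, exhibit a choice of partitions $A=A_0\cup A''$, $B=B_0\cup B''$ together with a dual tensor codeword in $\DT{A_0}{B_0}$ that satisfies the three criteria of the first parallel decoding step. I would take $A''$ to be the set of rows of $Q(v)$ shared with exceptional vertices of $V_{11}$, and $B''$ the columns shared with exceptional vertices of $V_{00}$, so that by the decodability hypothesis $|A''|,|B''|\leq\Delta^\gamma/2$, and propose $(r_v+c_v)|_{A_0\times B_0}$ as the candidate, where $(r_v,c_v)$ is the local view of $\mathcal{Z}$ at $v$. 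Note that this restriction indeed belongs to $\DT{A_0}{B_0}$ since puncturing preserves the row/column-code structure.

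The first bound to establish is the noise bound $|(R_0+C_1)|_{A_0\times B_0}|<w/2$, which will control $\hat{Z}+r_v+c_v$ on $A_0\times B_0$ after cancellation. For each column $b\in B_0$, the $V_{00}$-neighbour $v''(b)$ of $v$ via $b$ is normal, so $\|r_{v''(b)}\|<\Delta^{1/2-\eps}/4$; since $V_{00}$ partitions $Q$, the $R_0$-contribution on column $b$ of $Q(v)$ is exactly the restriction of $r_{v''(b)}$ to the shared column, whose weight is at most $\|r_{v''(b)}\|$. Summing over $b\in B_0$ gives $|R_0|_{A_0\times B_0}|<\Delta\cdot\Delta^{1/2-\eps}/4=w/4$, and symmetrically $|C_1|_{A_0\times B_0}|<w/4$ using that $A_0$-rows connect $v$ to normal $V_{11}$-vertices whose $\|c_{v'(a)}\|$ is bounded.

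The hard part is the weight lower bound $|(r_v+c_v)|_{A_0\times B_0}|>w$. By the agglutinating assumption and minimality of $\mathcal{Z}$, the quantity $\|r_v\|+\|c_v\|$ realises the minimum-norm decomposition of $r_v+c_v$ in $\DT{A}{B}$ and is at least $\delta\Delta/3$. I would argue by contradiction: if $|(r_v+c_v)|_{A_0\times B_0}|\leq w$, then by the $\Delta^\gamma$-resistance to puncturing (Condition~\ref{cond3} of Theorem~\ref{thm:123}), the code $\DT{A_0}{B_0}$ is $w$-robust with minimum distances at least $\delta\Delta/2$, yielding a decomposition $r^*+c^*$ of the restriction with $\|r^*\|+\|c^*\|\leq 4w/(\delta\Delta)=O(\Delta^{1/2-\eps})$. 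The main technical challenge is then to lift $r^*+c^*$ back to a decomposition of the full $r_v+c_v$ in $\DT{A}{B}$: extending each nonzero row of $r^*$ to a $C_B$-codeword and adjusting the column component to absorb the residue on $A''\times B\cup A\times B''$, one pays at most $|A''|+|B''|\leq\Delta^\gamma=\Delta^{1-\eps}$ additional nonzero rows or columns. The resulting global decomposition of $r_v+c_v$ has norm $O(\Delta^{1-\eps})+O(\Delta^{1/2-\eps})$, which for $\Delta$ large contradicts the lower bound $\delta\Delta/3$ and finishes the contradiction.

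With both bounds in place, the three criteria follow immediately by the triangle inequality: $|\hat{Z}|_{A_0\times B_0}|\geq|(r_v+c_v)|_{A_0\times B_0}|-|(R_0+C_1)|_{A_0\times B_0}|>w-w/2=w/2$, the codeword $(r_v+c_v)|_{A_0\times B_0}$ has weight $>w$, and after cancellation $|(\hat{Z}+r_v+c_v)|_{A_0\times B_0}|=|(R_0+C_1)|_{A_0\times B_0}|<w/2$. The main obstacle throughout is the lifting step in the weight lower bound, where one must carefully extend a minimum-norm decomposition in the punctured code to the original code without inflating the norm beyond a sub-linear quantity in $\Delta$; it is precisely at this point that the resistance-to-puncturing hypothesis plays its essential role, since the puncturing size $\Delta^\gamma=\Delta^{1-\eps}$ remains safely below $\delta\Delta/3$.
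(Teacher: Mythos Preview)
Your proof is correct and follows essentially the same approach as the paper's. Both arguments set $A_0,B_0$ to be the index sets of normal rows and columns, bound $|R_0+C_1|_{A_0\times B_0}$ by $w/2$ via the definition of normality, and obtain the lower bound $|(r_v+c_v)|_{A_0\times B_0}|>w$ by contradiction using robustness with resistance to puncturing together with local minimality of $\|r_v\|+\|c_v\|$. The only noteworthy difference is that you spell out the lifting step --- extending the small punctured decomposition $r^*+c^*$ back to a decomposition of the full $r_v+c_v$ at a cost of at most $|A''|+|B''|\leq\Delta^{1-\eps}$ extra rows and columns --- whereas the paper compresses this into the single phrase ``robustness (with resistance to puncturing) would imply that $r_v+c_v$ would be expressible as a sum of fewer than $O(\Delta^{1/2})$ non-zero rows and columns''. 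Your bound $O(\Delta^{1-\eps})$ is nominally weaker than the paper's stated $O(\Delta^{1/2})$, but both are $o(\Delta)$ and yield the same contradiction with the agglutinating threshold $\delta\Delta/3$.
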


\begin{proof}
Let $v$ be a decodable agglutinating vertex of $S_{10}$, the case $v\in S_{01}$
being essentially identical.
We set $A_0$ to consist of all the indexes of all normal rows in $Q(v)$ and $B_0$ to consist of all
the indexes of normal columns. We now show that
the local component $r_v+c_v$ of $R_1+C_0$, restricted to
$A_0\times B_0$
satisfies all the requirements of the tweaked decoder. Indeed, we must have
$|(r_v+c_v)_{|A_0\times B_0}|>w$, otherwise
robustness (with resistance to puncturing)
would imply that $r_v+c_v$ would be expressible as sum of fewer than
$O(\Delta^{1/2})$ non-zero rows and columns of $C_A\otimes\f_2^B$ and
$\f_2^A\otimes C_B$, which would contradict local minimality of
$\|r_v\|+\|c_v\|$, since we have supposed either $\|r_v\|\geq \delta\Delta/3$,
or $\|c_v\|\geq \delta\Delta/3$, by
definition of an agglutinating vertex.

Furthermore, the contribution of $R_0$ on any normal column is less than $\Delta^{1/2-\eps}/4$ by definition of a
normal column. Therefore, the weight of $R_0$ on $A\times B_0$ is less than
$\Delta\cdot\Delta^{1/2-\eps}/4=w/4$. Similarly, the weight of $C_1$ on $A_0\times B$
is less than $w/4$, and therefore the weight of $R_0+C_1$ on
$A_0\times B_0$ is
$<w/2$. After removing $r_v+c_v$, the updated value $R_1'+C_0'$ on
$A_0\times B_0$ is
zero, so the value of $|\hat{Z}|$ on $A_0\times B_0$ is that of
$|R_0+C_1|$ and is
$<w/2$.

Finally, since the weight of $R_0+C_1$ on $A_0\times B_0$ is $<w/2$ and the weight
of $R_1+C_0$ on $A_0\times B_0$ is $>w$, we have that the weight of
$\hat{Z}=R_0+C_1+R_1+C_0$
on $A_0\times B_0$ is $>w/2$.
\end{proof}

We remark that the above proof tells us in passing that the Hamming weight of $\hat{Z}$
on the local view of a decodable agglutinating vertex is at least $w/2$. We note
this for future reference.

\begin{proposition}\label{prop:Zagglu}
Whenever the sequential decoder is stalled, the Hamming weight of $\hat{Z}$
restricted to $Q(v)$, for $v$ a decodable agglutinating vertex of $S_{10}$ or of
$S_{01}$, is at least equal to $\Delta^{3/2-\eps}/2$.
\end{proposition}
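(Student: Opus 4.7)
The plan is to extract the bound directly from the proof of Proposition~\ref{prop:agglu}: the inequality $|\hat{Z}_{|Q(v)}| \geq w/2 = \Delta^{3/2-\eps}/2$ is essentially the observation ``noted in passing'' immediately after that proof, and the present proposition simply promotes it to a standalone statement for later use. No new tool is required; the work is to package three bounds already present there.

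Fix a decodable agglutinating vertex $v$, say $v \in S_{10}$ (the case $v \in S_{01}$ is strictly symmetric). As in the proof of Proposition~\ref{prop:agglu}, I would take $A_0 \subset A$ to be the indices of normal rows in $Q(v)$ and $B_0 \subset B$ the indices of normal columns. By the decodability hypothesis, at most $\Delta^{\gamma}/2$ rows and at most $\Delta^{\gamma}/2$ columns of $A \times B$ are discarded, so robustness-with-resistance-to-puncturing applies on the sub-rectangle $A_0 \times B_0$.

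The first step is to lower-bound the ``diagonal'' contribution. Let $r_v + c_v$ denote the local component of $R_1 + C_0$ on $Q(v)$. Since $v$ is agglutinating, either $\|r_v\| \geq \delta \Delta /3$ or $\|c_v\| \geq \delta \Delta /3$, and since $\mathcal{Z}$ is a minimum representation this quantity cannot be reduced by adding a local tensor codeword. If $|(r_v+c_v)_{|A_0 \times B_0}|$ were $\leq w$, robustness would write it as a sum of $O(\sqrt{\Delta})$ rows and columns on $A_0 \times B_0$; reinstating the few discarded coordinates would then express $r_v + c_v$ as a sum of $O(\sqrt{\Delta})$ rows and columns on $A \times B$, contradicting local minimality. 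Hence $|(r_v+c_v)_{|A_0 \times B_0}| > w$.

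The second step bounds the ``off-diagonal'' contribution. By the definition of a normal column, each column of $B_0$ carries at most $\Delta^{1/2-\eps}/4$ coordinates of $R_0$, so summing over at most $\Delta$ columns yields $|(R_0)_{|A \times B_0}| \leq w/4$; symmetrically $|(C_1)_{|A_0 \times B}| \leq w/4$, and therefore $|(R_0 + C_1)_{|A_0 \times B_0}| \leq w/2$. The third step is the triangle inequality applied to $\hat{Z} = (r_v+c_v) + (R_0 + C_1)$ on $A_0 \times B_0$, giving $|\hat{Z}_{|A_0 \times B_0}| \geq w - w/2 = \Delta^{3/2-\eps}/2$, which implies the claim since $A_0 \times B_0 \subset Q(v)$. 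The only step of any substance is the first one — the use of robustness together with local minimality and the agglutinating condition — and it has already been carried out inside the proof of Proposition~\ref{prop:agglu}, so there is no real obstacle here.
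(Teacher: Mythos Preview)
Your proposal is correct and follows exactly the paper's approach: the paper does not give a separate proof of this proposition but simply remarks, immediately after the proof of Proposition~\ref{prop:agglu}, that the inequality $|\hat{Z}_{|A_0\times B_0}|>w/2$ established there yields the statement. Your three steps (lower bound on $|(r_v+c_v)_{|A_0\times B_0}|$ via robustness and local minimality, upper bound on $|(R_0+C_1)_{|A_0\times B_0}|$ via normality, then the reverse triangle inequality) are precisely the ingredients packaged at the end of that proof.
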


Let us denote by $K^\gamma_{10}$ ($K_{01}^\gamma$) the set of vertices of
$V_{10}$ ($V_{01}$) whose
$Q$-neighbourhoods have at least
$\Delta^\gamma/2$ columns (rows) in common with exceptional vertices of $S_{00}$, or that
have at least $\Delta^\gamma/2$ rows (columns) in common with exceptional vertices of
$S_{11}$. We then set $K^\gamma=K_{10}^\gamma \cup K_{01}^\gamma$. These can be
thought of as the vertices of $V_1$ whose $Q$-neighbourhoods share an exceptionally large number of
rows or of columns with exceptional vertices of $S_0$.

\begin{lemma}
\label{lem:Tgamma}
Suppose $|S_0|\leq \frac{\delta}{8\Delta}|V_0|$.
Then there is an absolute constant $\kappa$ such that 
\[
|K^\gamma|\leq
\frac{\kappa}{\delta^2}\frac{1}{\Delta^{2-4\eps}}|S_0|.
\]
\end{lemma}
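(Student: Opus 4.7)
The plan is to mimic the strategy of Lemma~\ref{lem:tbl} and Lemma~\ref{lem:Se}: control $|K^\gamma|$ by counting edges in the Cayley-like graphs $\G_A$ and $\G_B$ between vertices of $V_1$ and exceptional vertices of $S_0$, then invoke the expander mixing lemma together with the bound on $|S_0^e|$ from Lemma~\ref{lem:Se}.

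First I would decompose $K^\gamma$ into four pieces according to whether the large overlap comes from shared columns (which corresponds to a $B$-edge between the vertex in $V_1$ and the exceptional vertex in $S_0$) or shared rows (an $A$-edge), and whether the exceptional vertex lies in $S_{00}$ or $S_{11}$. By the union bound it suffices to bound one of them, say
\[
K := \{ v \in V_{10} : v \text{ has} \geq \Delta^\gamma/2\ B\text{-edges to } S_{00}^e\}.
\]
Each such $v$ contributes at least $\Delta^\gamma/2$ edges to $E(K, S_{00}^e)$ in the graph $\G_B$, so
\[
|K|\,\Delta^\gamma/2 \leq |E(K,S_{00}^e)|.
\]
The Expander mixing lemma~\ref{lem:mixing} applied in $\G_B$, which is Ramanujan of degree $\Delta$, gives
\[
|E(K,S_{00}^e)| \leq \frac{\Delta}{|V_{00}|}|K||S_{00}^e| + 2\sqrt{\Delta}\sqrt{|K||S_{00}^e|}.
\]

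Next I would check that for $\Delta$ large enough and under the hypothesis $|S_0|\leq \delta|V_0|/(8\Delta)$, the first term on the right is negligible compared with $|K|\Delta^\gamma/4$. Indeed Lemma~\ref{lem:Se} gives $|S_{00}^e|/|V_{00}| = O(|S_0|/(\delta^2\Delta^{1-2\eps}|V_0|)) = O(1/(\delta\Delta^{2-2\eps}))$, which for $\gamma = 1-\eps$ is much smaller than $\Delta^{\gamma-1}/4 = \Delta^{-\eps}/4$. Thus the mixing inequality simplifies, after squaring, to
\[
|K| \leq \frac{64}{\Delta^{2\gamma-1}}|S_{00}^e| \;=\; \frac{64}{\Delta^{1-2\eps}}|S_{00}^e|.
\]
Plugging in the bound $|S_{00}^e| \leq |S_0^e| \leq \kappa|S_0|/(\delta^2\Delta^{1-2\eps})$ from Lemma~\ref{lem:Se} yields
\[
|K| \leq \frac{\kappa'}{\delta^2\Delta^{2-4\eps}}|S_0|.
\]
Repeating the same argument for the three remaining pieces (with $\G_A$ replacing $\G_B$ when the overlap is a row, and with $S_{11}^e$ replacing $S_{00}^e$ when appropriate) and taking the union bound delivers the claimed inequality with a slightly larger absolute constant.

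I do not anticipate a real obstacle: the whole proof is a direct copy of the edge-counting template used in Lemmas~\ref{lem:tbl} and~\ref{lem:Se}. The only point that requires mild care is verifying that the ``degree'' term in the expander mixing lemma is dominated by the ``spectral'' term, but this is exactly what the hypothesis $|S_0| \leq \delta|V_0|/(8\Delta)$ together with the bound of Lemma~\ref{lem:Se} guarantees, once we remember that $\gamma = 1-\eps$ with $\eps$ small. All else is bookkeeping of constants.
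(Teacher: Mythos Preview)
Your proposal is correct and follows essentially the same route as the paper: split $K^\gamma$ into the obvious four cases, observe that each case gives many $\G_A$- or $\G_B$-edges into $S_{00}^e$ or $S_{11}^e$, apply the expander mixing lemma, and finish with Lemma~\ref{lem:Se}. The only cosmetic difference is that the paper bounds the degree term \textit{via} the cruder inequality $|S_{00}^e|/|V_{00}|\leq 2|S_0|/|V_0|$ rather than invoking Lemma~\ref{lem:Se} twice, but both arguments work equally well.
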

\begin{proof}
We give an upper bound for $K_{10}^\gamma$, an identical one follows for
$K_{01}^\gamma$ by symmetry. Consider the case when $K_{10}^\gamma$ has $\Delta^\gamma/2$ columns 
in common with the $Q$-neighbourhoods of exceptional vertices of $S_{00}^e$,
the other case being similar. This last fact means that 
all vertices of $K_{10}^\gamma$ have degree at least
$\Delta^\gamma/2$ in the subgraph of $\G_B$ induced by $K_{10}^\gamma$ and
$S_{00}^e$.
By the Expander mixing Lemma in $\G_B$ we therefore have
\[
|K_{10}^\gamma|\Delta^\gamma/2\leq |E(K^\gamma,S_0^e)|\leq
\Delta\frac{|K_{10}^\gamma||S_{00}^e|}{|V_{00}|} + 2\sqrt{\Delta}\sqrt{|K_{10}^\gamma||S_0^e|}.
\]
Bounding $|S_{00}^e|/|V_{00}|$ from above by $2|S_0|/|V_0|$ we easily get
\[
|K_{10}^\gamma|\Delta^\gamma/4\leq 2\sqrt{\Delta}\sqrt{|K_{10}^\gamma||S_0^e|}
\]
whence
\begin{align*}
|K_{10}^\gamma|^{1/2} &\leq 8\Delta^{1/2-\gamma}|S_0^e|^{1/2}\\
|K_{10}^\gamma|       &\leq 64\Delta^{1-2\gamma}|S_0^e|\\
                 &\leq\frac{\kappa}{\delta^2}\Delta^{2\eps-2\gamma}|S_0|
\end{align*}
by applying Lemma~\ref{lem:Se}. Hence the result, since we have supposed $\gamma
= 1-\eps$.
\end{proof}

\begin{proposition}\label{prop:decoded}
Let $v$ be a vertex of $S_{10}$ (resp. $S_{01}$) that is not in $K^\gamma$ and that is decoded by
the tweaked decoder. Let $A_n$ and $B_n$ be the index
sets of its normal rows and its normal columns respectively. After (tweaked)
decoding, the contribution of $R_1'+C_0'$ (resp.
$R_0'+C_1'$) to the coordinate set indexed by $A_n\times B_n$ in
$Q(v)$ is supported by
columns of $\col_0^\beta$ and $\row_1^\beta$ (resp. $\col_1^\beta$ and
$\row_0^\beta$) for some $\beta >1/3$.
\end{proposition}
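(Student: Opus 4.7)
The argument will proceed by contradiction. Suppose that after the tweaked decoder has operated at $v$, there exists a coordinate $(a,b) \in A_n \times B_n$ in the support of $R_1'+C_0'$ with column $b \notin \col_0^\beta$ and row $a \notin \row_1^\beta$, for a value $\beta$ just above $1/3$ to be identified. Writing $\hat Z|_{Q(v)} = (R_0+C_1)|_{Q(v)} + (R_1'+C_0')|_{Q(v)}$, I will derive a contradiction by combining the normality bounds on $R_0$ and $C_1$ with the local minimality property enforced by the cleanup phase.

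The first ingredient is the normality input: normality of the $V_{00}$-vertex $v_b'$ sharing column $b$ yields $|(R_0)|_b| \le \|R_0|_{Q(v_b')}\| < \Delta^{1/2-\eps}/4$, and normality of the $V_{11}$-vertex $v''_a$ sharing row $a$ yields $|(C_1)|_a| < \Delta^{1/2-\eps}/4$; together with the direct bounds $|(C_1)|_b| < \beta\Delta$ from $b \notin \col_0^\beta$ and $|(R_0)|_a| < \beta\Delta$ from $a \notin \row_1^\beta$ (using that row $a$ of $Q(v)$ coincides with row $a$ of $Q(v''_a)$), this gives $|(R_0+C_1)|_b|, |(R_0+C_1)|_a| < \beta\Delta + \Delta^{1/2-\eps}/4$. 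The second ingredient is cleanup minimality: at termination of the cleanup phase, $\hat Z|_b$ is a minimum-weight representative of its coset in $C_A$ for every column $b$, and symmetrically for rows. Applied at column $b$ with $C_b := (C_0')|_b \ne 0$ and the triangle inequality, this yields $|(R_0+C_1)|_b + (R_1')|_b| \ge |C_b|/2 \ge \delta\Delta/2$; symmetrically, at row $a$ with $(R_1')|_a \ne 0$ we have $|(R_0+C_1)|_a + (C_0')|_a| \ge \delta\Delta/2$.

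The crux is to combine these local lower bounds with the global weight guarantee from the initial tweaked update, $|\hat Z|_{A_0 \times B_0}| < w/2 = \Delta^{3/2-\eps}/2$, together with the smallness of the non-normal region ($|A \setminus A_n|, |B \setminus B_n| \le \Delta^\gamma/2$ since $v \notin K^\gamma$) and of the decoder's excluded region ($|A \setminus A_0|, |B \setminus B_0| \le \Delta^\gamma/2$). Summing the column-minimality inequality over columns in the support of $C_0'$, and the row-minimality inequality over rows in the support of $R_1'$, then feeding these aggregate bounds into the $w/2$-weight bound on the near-full subarray $A_0 \times B_0$, produces the desired contradiction once $\beta$ is calibrated slightly above $1/3$, thereby forcing $(a,b)$ either into a $\col_0^\beta$-column or a $\row_1^\beta$-row.

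The main obstacle is the quantitative calibration in this final step: a naive column-only argument gives only $\beta \lesssim \delta/2$, which is too weak when $\delta$ is small. Extracting the threshold $\beta > 1/3$ requires using both the row and column cleanups simultaneously, carefully accounting for the interaction between the $R_1'$ row-contributions and the $C_0'$ column-contributions on $A_n \times B_n$, and leveraging that the excluded and non-normal regions have total size only $O(\Delta^{1+\gamma}) = o(\Delta^2)$ compared with the full $\Delta^2$-sized local view of $v$.
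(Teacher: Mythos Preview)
The proposal has a genuine gap: it is missing the use of \emph{robustness} of the dual tensor code, which is the key ingredient in the paper's proof. After the tweaked decoder acts on $v\in S_{10}$, the residual contribution $R_1'+C_0'$ restricted to $Q(v)$ equals $(r_v+r_v')+(c_v+c_v')$ plus the cleanup modifications. The paper first compares $r_v'+c_v'$ (what the decoder found) to $r_v+c_v$ (the true local component of $R_1+C_0$): since both bring $\hat Z$ below $w/2$ on the large subarrays $A_0\times B_0$ and $A_n\times B_n$ respectively, their difference $r_v+c_v+r_v'+c_v'$ has weight at most $w$ on $(A_0\cap A_n)\times(B_0\cap B_n)$, and robustness with resistance to puncturing then forces $\|r_v\|+\|c_v\|+\|r_v'\|+\|c_v'\|=O(\Delta^{1/2}+\Delta^\gamma)$. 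This is precisely what controls the contribution of $R_1'$ to any single normal column: at most $O(\Delta^{1/2}+\Delta^\gamma)$ rows of $R_1'$ are nonzero in $Q(v)$, so $|(R_1')|_b|=O(\Delta^{1/2}+\Delta^\gamma)$. Only with this in hand does the cleanup-minimality inequality yield that $C_1$ must cover nearly half the support of any surviving column codeword, placing $b$ in $\col_0^\beta$.

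Your plan bypasses robustness entirely, and the aggregate argument you sketch cannot close the gap. The cleanup inequality $|(R_0+C_1)|_b+(R_1')|_b|\geq \delta\Delta/2$ only tells you that on a ``bad'' column $(R_1')|_b$ is large whenever $(R_0+C_1)|_b$ is small; it gives no upper bound on $(R_1')|_b$, which is what you actually need. Summing over bad columns and invoking $|\hat Z|_{A_0\times B_0}<w/2$ does not help, because this bounds the weight of $\hat Z=(R_0+C_1)+(R_1'+C_0')$, not of $R_1'$ or $C_0'$ separately: these can be individually large while cancelling in $\hat Z$. Moreover the $w/2$ bound refers to the state immediately after the first tweaked update, before the cleanup phase; after cleanup only the weight on all of $Q(v)$ is guaranteed not to have increased. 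Finally, your concern that ``$\beta\lesssim\delta/2$ is too weak'' is misplaced: this is exactly what the paper's argument delivers (the value $\beta>1/3$ in the statement should be read as some fixed positive $\beta$ depending on $\delta$, which is all Lemma~\ref{lem:main} requires), so the more elaborate two-sided aggregation you propose is chasing a bound that is neither needed nor available by your route.
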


\begin{proof}
We treat the case $v\in S_{10}$, the case $v\in S_{01}$ being essentially
identical.
First consider the dual-tensor codeword $r_v'+c_v'$ 
that the decoder has
identified on $A_0\times B_0$ (which may not be equal to $A_n\times B_n$).
We compare it to $r_v+c_v$, the reduced local view of $R_1+C_0$. 
We recall, that by definition of the tweaked decoder, we must have 
that  $|\hat{Z}+r'_v+c'_v|$ is less than $w/2$ on $A_0\times B_0$.
So in particular it must be less than $w/2$ on $(A_0\cap A_n)\times (B_0\cap
B_n)$. Since the value of $|\hat{Z}+r_v+c_v|$ is less than $w/2$ on $A_n\times B_n$,
by definition of normal rows and columns, we must have that
$|r_v+c_v+r'_v+c'_v|\leq w$ on $(A_0\cap A_n)\times (B_0\cap B_n)$. 
Now $A_0\cap A_n$ and $B_0\cap B_n$ both have cardinality at least 
$\Delta -\Delta^\gamma/2-\Delta^\gamma/2=\Delta-\Delta^\gamma$, so that we may
apply robustness with resistance to puncturing to $r_v+c_v+r'_v+c'_v$
and deduce that $\|r_v\|+\|c_v\|+\|r'_v\|+\|c'_v\|$ is $O(\Delta^{1/2} + \Delta^\gamma)$. 
This in turn implies that, after tweaked decoding,
every normal column (\textit{i.e.}\ indexed by $b\in B_n$) has at most $O(\Delta^{1/2}+\Delta^\gamma)$ coordinates on
which $R_0$ or $R_1'$ equals $1$: therefore, if a column, indexed by
$b\in B_n$ say, is the
support of a non-zero column codeword $c\in C_A$ in $C_0'$, the only way this
column has not been decoded is for the support of $c$ on $A\times\{b\}$  to coincide with the support of
$C_1$ on almost half its coordinates (minus
$O(\Delta^{1/2}+\Delta^\gamma)$), otherwise more than half the support of $c$
contributes to $\hat{Z}$, and adding $c$ would decrease $|\hat{Z}|$.
Summarising, 
if the column under discussion does not belong to
$\col_0^\beta$, it will be decoded by the tweaked decoder. A similar argument
holds for normal rows.
\end{proof}

Note that the choice $\beta >1/3$ in this last proposition is somewhat arbitrary
and could in principle be chosen arbitrarily close to $1/2$.

\begin{proposition}\label{prop:notdecoded}
Let $v$ be a vertex of $S_{10}$ (resp. $S_{01}$) that is not in $K^\gamma$ and that is not identified
as decodable (hence not decoded) by the tweaked decoder. Let $A_n$ and $B_n$ be the index
sets of its normal rows and its normal columns respectively. Then the
contribution of $R_1+C_0$ (resp. $R_0+C_1$) to the
coordinate set indexed by $A_n\times B_n$ in $Q(v)$ is supported by columns of
$\col_0^\beta$ and $\row_1^\beta$ (resp. $\col_1^\beta$ and $\row_0^\beta$) for
$\beta =1/6+o_\Delta(1)$.
\end{proposition}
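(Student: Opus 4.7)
The plan is to adapt the proof of Proposition~\ref{prop:decoded} to the situation where $v$ is not touched by the tweaked decoder. Two immediate consequences of the hypotheses drive the argument: since $v\notin K^\gamma$, the normal row/column sets satisfy $|A\setminus A_n|,|B\setminus B_n|\leq \Delta^\gamma/2$, so $(A_n,B_n)$ is an admissible partition for the tweaked decoder; and since $v$ is not decoded while $v\notin K^\gamma$, Proposition~\ref{prop:agglu} applied contrapositively forces $v$ to be non-agglutinating, hence $\|r_v\|,\|c_v\|<\delta\Delta/3$.

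First I would establish that $r_v+c_v$ has weight at most $w$ when restricted to $A_n\times B_n$, arguing by contradiction. By normality of the $V_0$ vertices indexing $B_n$ and $A_n$, as in the proof of Proposition~\ref{prop:decoded}, $R_0$ has weight less than $w/4$ on $A\times B_n$ and $C_1$ has weight less than $w/4$ on $A_n\times B$, so $R_0+C_1$ has weight less than $w/2$ on $A_n\times B_n$. If $r_v+c_v$ had weight strictly greater than $w$ on $A_n\times B_n$, then the natural restriction $r_v|_{A_n\times B_n}+c_v|_{A_n\times B_n}\in \DT{A_n}{B_n}$ would itself meet all of the tweaked decoder's conditions on $(A_n,B_n)$---weight above $w$, initial $|\hat{Z}|$ on $A_n\times B_n$ above $w/2$ by the reverse triangle inequality, and post-update weight equal to that of $R_0+C_1$ on $A_n\times B_n$, hence below $w/2$---forcing $v$ to be decoded and yielding the contradiction. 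Applying Theorem~\ref{thm:P-robust} to $(r_v+c_v)|_{A_n\times B_n}$ (robustness with resistance to puncturing $\Delta^\gamma$) then produces a decomposition $r'+c'$ whose row-support $A''\subset A_n$ and column-support $B''\subset B_n$ have size $O(\Delta^{1/2-\eps}/\delta)$; each column of $c'$ extends uniquely to the $C_A$-codeword $\gamma_b$ equal to column $b$ of $c_v$, and similarly for rows.

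Finally, I would verify that every $b\in B''$ lies in $\col_0^\beta$ and every $a\in A''$ in $\row_1^\beta$. Fix $b\in B''$ with $\gamma_b\neq 0$, and let $u\in V_{00}$ share column $b$ with $v$. Since the sequential decoder is stalled at $u$, adding the single-column dual tensor codeword $\gamma_b\otimes e_b$ on $Q(u)$ cannot decrease $|\hat{Z}|_{Q(u)}$, which by a standard weight calculation means that $R_0+C_1+R_1$ restricted to column $b$ agrees with $\gamma_b$ on at least $|\gamma_b|/2$ coordinates of the support of $\gamma_b$. Bounding the three contributions separately: normality of $u$ (which holds because $b\in B_n$) gives that $R_0$ has weight at most $\|r_u\|<\Delta^{1/2-\eps}/4$ on column $b$ of $Q(u)$, and non-agglutination of $v$ gives that $R_1$ has weight at most $\|r_v\|<\delta\Delta/3$ on column $b$. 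Combining, $C_1$ on column $b$ has weight at least $|\gamma_b|/2-\delta\Delta/3-o(\Delta)\geq \delta\Delta/6-o(\Delta)$, placing $b$ in $\col_0^\beta$ for the announced $\beta=1/6+o_\Delta(1)$. The symmetric argument, applied at the $V_{11}$ vertex sharing row $a$ with $v$, places $a\in \row_1^\beta$. The hard part is the weight bound of the middle paragraph, which must be teased out purely from the failure of decoding through the specific candidate $(r_v+c_v)|_{A_n\times B_n}$; the factor-of-two loss in $\beta$ compared to Proposition~\ref{prop:decoded} is structural, reflecting that here the full $\delta\Delta/3$ slack from $\|r_v\|$ must be subtracted from the budget $|\gamma_b|/2\geq \delta\Delta/2$, whereas the individual column decoding step available to decoded vertices would reduce this subtraction to the much smaller $O(\Delta^{1/2}+\Delta^\gamma)$.
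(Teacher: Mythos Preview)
Your argument is correct, but you have worked considerably harder than necessary. The paper's proof dispenses entirely with your middle paragraph: once you know that $v$ is non-agglutinating (your first paragraph), you already have $\|r_v\|<\delta\Delta/3$ and $\|c_v\|<\delta\Delta/3$, so the support of $r_v+c_v$ is trivially contained in the union of the (at most $\delta\Delta/3$) rows supporting $r_v$ and the (at most $\delta\Delta/3$) columns supporting $c_v$. One can then jump straight to the stalled-decoder argument of your final paragraph, applied to every normal column on which $c_v$ is non-zero (and symmetrically to rows of $r_v$). No contradiction argument, no robustness with resistance to puncturing, no auxiliary sets $A'',B''$.

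Your detour through robustness does yield a tighter covering ($|A''|,|B''|=O(\Delta^{1/2-\eps}/\delta)$ rather than $\delta\Delta/3$), and your verification that each column of $c'$ really agrees with the corresponding column of $c_v$ is correct (it follows from injectivity of the puncturing $C_A\to C_{A_n}$ combined with $\|r_v\|+|A''|<d(C_{A_n})$). But none of this extra precision is used downstream: in your final paragraph you bound the contribution of $R_1$ by $\|r_v\|<\delta\Delta/3$, not by $|A''|$, so the robustness step buys nothing. The paper's three-line argument (non-agglutinating $\Rightarrow$ small $\|r_v\|,\|c_v\|$ $\Rightarrow$ stalled sequential decoder forces large $C_1$ on each normal $C_0$-column) is all that is required.
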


\begin{proof}
Suppose $v\in S_{10}$ is such a vertex.
The vertex $v$ cannot be agglutinating by Proposition~\ref{prop:agglu}.
Indeed, if it were  agglutinating, then since we suppose it not in $K^\gamma$, it
would be a decodable agglutinating vertex and would be decoded by the tweaked
decoder.
Therefore, we have $\|r_v\|<\delta\Delta/3$ and $\|c_v\|<\delta\Delta/3$, where
$r_v,c_v$ are the components of $R_1$ and $C_0$ on $Q(v)$.
For every column of $Q(v)$, the contribution of $R_1$ to its support is therefore $<\delta\Delta/3$.
For every normal column, the contribution of $R_0$ to its support is
$<\Delta^{1/2-\eps}/4$ by definition of normality. Therefore, on every
normal column on which $C_0$ is non-zero we must have a contribution of
$C_1$
that is at least $\delta\Delta/2-\delta\Delta/3-\Delta^{1/2-\eps}/4$.
We argue similarly for normal rows in $Q(v)$.
\end{proof}

\paragraph{Situation after the first parallel decoding step.}
We now argue that this tweaked decoding step has drastically reduced the number
of active vertices of $\hat{Z}$ in $V_0$. 

A remaining active vertex of $V_0$ may be one of the exceptional vertices of
$S_0^e\subset S_0$. By
Lemma~\ref{lem:Se} there are at most $\kappa|S_0|/\Delta^{1-2\eps}$ of them, for some
constant $\kappa$.

To bound from above the additional number of active vertices, we simply bound
from above the number of $C_A$-column vectors of $C_0'$ and $C_1'$ 
(and of $C_B$-row vectors of $R_0'$ and $R_1'$)
that are supported by the $Q$-neighbourhood of a non-exceptional vertex
$v\notin S_e$.
Such a row or column codeword may
\begin{itemize}
\item[--] belong to the $Q$-neighbourhood of a vertex of $V_1$ that has many
columns or rows in common with exceptional vertices of $V_0$: the set of such
vertices was
defined as $K^\gamma$, and by Lemma~\ref{lem:Tgamma}, there are not more than 
$\kappa|S_0|/\Delta^{2-4\eps}$ of them. These vertices therefore contribute at most 
$2\Delta|K^\gamma|$ row and column vectors, so not more than
$\kappa|S_0|/\Delta^{1-4\eps}$ for some constant $\kappa$. 
\item[--] Remaining row or column codewords (that have not been removed by the
decoder or have wrongly been introduced by the decoder) may also be present in
the $Q$-neighbourhoods of vertices of $V_1$ that have been decoded by the
tweaked decoder. By Proposition~\ref{prop:decoded}, all these rows and columns
must belong to $\row_i^\beta$ and $\col_i^\beta$, for some fixed value $\beta$
and $i=0,1$. By Lemma~\ref{lem:main} there are at most $\kappa|S_0|/\Delta^{1-2\eps}$
such column and row vectors, for some constant $\kappa$.
\item[--] The remaining row or column codewords must be present in the the
$Q$-neighbour\-hoods of vertices of $V_1$ 
that are not in $K^\gamma$ and have not been decoded by the tweaked decoder. By
Proposition~\ref{prop:notdecoded}, all these rows and columns must belong to $\row_i^\beta$ and $\col_i^\beta$, for some fixed value $\beta$
and $i=0,1$. Again, by Lemma~\ref{lem:main} there are at most $\kappa|S_0|/\Delta^{1-2\eps}$
such column and row vectors, for some constant $\kappa$.
\end{itemize}

Summing everything, we have shown the following:

\begin{theo}
\label{thm:tweaked}
We suppose
$|S_0|\leq \frac{\delta}{8\Delta}|V_0|$.
Let $S_0'$ be the set of active vertices of $V_0$ after the first parallel
decoding step, \textit{i.e.}\ the set of vertices of $V_{00}$ on whose $Q$-neighbourhood
$R_0'+C_0'$ is non-zero, together with the set of vertices of $V_{11}$
on on whose $Q$-neighbourhood $R_1'+C_1'$ is non-zero.
We have $|S_0'|\leq \kappa|S_0|/\Delta^{1-4\eps}$ for some constant $\kappa$.
\end{theo}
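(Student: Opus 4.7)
The plan is to bound $|S_0'|$ by counting the non-zero row and column codewords appearing in the updated decomposition $(C_0', R_0', C_1', R_1')$ whose supports intersect the $Q$-neighbourhood of a non-exceptional vertex of $V_0$. Indeed, every active vertex $v\in S_0'\setminus S_0^e$ must support at least one such non-zero row or column, inherited from the local view of a $V_1$-neighbour, and each row/column passes through a single $V_0$-vertex, so the count of rows/columns is a valid upper bound for $|S_0'\setminus S_0^e|$.

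First I would separate the exceptional contribution: Lemma~\ref{lem:Se} already gives $|S_0^e|\leq \kappa|S_0|/\Delta^{1-2\eps}$, which is absorbed in the final estimate. Then I would partition the $V_1$-vertices hosting the residual rows and columns into three classes, according to whether they lie in $K^\gamma$ and, if not, whether they were decoded by the tweaked decoder. For vertices in $K^\gamma$ no structural control is available, so I would bound their contribution crudely by $2\Delta|K^\gamma|$; Lemma~\ref{lem:Tgamma} turns this into $\kappa|S_0|/\Delta^{1-4\eps}$, which I expect to dominate the final bound. For the decoded vertices outside $K^\gamma$, Proposition~\ref{prop:decoded} guarantees that the remaining $R_1'+C_0'$ (resp.\ $R_0'+C_1'$) on the normal sub-region $A_n\times B_n$ is carried by columns of $\col_i^\beta$ and rows of $\row_i^\beta$ with $\beta>1/3$. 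For the undecoded vertices outside $K^\gamma$, Proposition~\ref{prop:notdecoded} gives the same containment for $R_1+C_0$ (resp.\ $R_0+C_1$) on $A_n\times B_n$ with $\beta=1/6+o_\Delta(1)$. In either case Lemma~\ref{lem:main} bounds the total number of such rows and columns by $\kappa(\beta)|S_0|/\Delta^{1-2\eps}$.

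The step I would handle most carefully is the justification that restricting attention to the normal sub-region $A_n\times B_n$ of each hosting $V_1$-vertex loses nothing: a residual row or column that activates a \emph{non-exceptional} vertex $v\in V_0$ necessarily passes through $v$, which is by definition normal, so it sits in $A_n$ or $B_n$ of the $V_1$-vertex hosting it. Non-normal rows and columns can only activate exceptional $V_0$-vertices, which have already been counted in $|S_0^e|$, so no double counting or omission occurs. Summing the three contributions, the $K^\gamma$-term $\kappa|S_0|/\Delta^{1-4\eps}$ dominates both $|S_0^e|$ and the $\col^\beta/\row^\beta$ contributions (which are $O(|S_0|/\Delta^{1-2\eps})$), yielding the claimed estimate $|S_0'|\leq \kappa|S_0|/\Delta^{1-4\eps}$ and completing the proof. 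The main obstacle, then, is not a new structural estimate but a clean bookkeeping argument that channels each residual row/column to exactly one of the three classes above without overlap.
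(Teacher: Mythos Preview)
Your proposal is correct and follows essentially the same argument as the paper: separate off $S_0^e$ via Lemma~\ref{lem:Se}, then bound the remaining active vertices by counting residual row and column codewords landing in normal $V_0$-vertices, partitioning the host $V_1$-vertices into $K^\gamma$, decoded outside $K^\gamma$ (Proposition~\ref{prop:decoded}), and undecoded outside $K^\gamma$ (Proposition~\ref{prop:notdecoded}), and invoking Lemmas~\ref{lem:Tgamma} and~\ref{lem:main} respectively. Your explicit justification that non-normal rows and columns can only activate exceptional vertices (hence the restriction to $A_n\times B_n$ loses nothing) is a useful clarification that the paper leaves implicit.
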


Before analysing the second parallel decoding step, we need the following
consequence of the above analysis.

\begin{proposition}\label{prop:stalled}
When the sequential decoder 
is stalled, we have $|\hat{Z}|\geq \kappa|S_0|\Delta^{1/2-\eps}$ for some constant $\kappa$.
\end{proposition}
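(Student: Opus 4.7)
The plan is to lower-bound $|\hat Z|$ by controlling the local mismatch weight $|\hat Z_{|Q(v)}|$ at every normal active vertex of $V_0$, and then summing by using that $\{Q(v):v\in V_{00}\}$ and $\{Q(v):v\in V_{11}\}$ are each partitions of $Q$, which gives
\[
2|\hat Z| \;=\; \sum_{v\in V_0}|\hat Z_{|Q(v)}|.
\]
Under the standing hypothesis $|S_0|\leq \delta|V_0|/(8\Delta)$ already used in the earlier lemmas, Lemma~\ref{lem:Se} says that $|S_0^e|\leq \kappa|S_0|/(\delta^2\Delta^{1-2\eps})=o(|S_0|)$ for $\Delta$ large, so $|S_0\setminus S_0^e|\geq |S_0|/2$. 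It therefore suffices to establish, for every normal vertex $v\in S_{ii}$, the per-vertex bound $|\hat Z_{|Q(v)}|\geq c\,\Delta^{1/2-\eps}$ for a suitable absolute constant $c$.

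To get this local bound, fix a normal $v\in S_{ii}$ and write $a_v=(R_i+C_i)_{|Q(v)}$ and $b_v=(R_{1-i}+C_{1-i})_{|Q(v)}$, so that $\hat Z_{|Q(v)}=a_v\,\triangle\, b_v$. By activity, $a_v$ is a non-zero dual tensor codeword, hence $|a_v|\geq\delta\Delta$; by normality, its support lies in fewer than $\Delta^{1/2-\eps}/4$ rows plus columns. The stalled condition yields $|a_v\cap b_v|\geq |a_v|/2$, and in particular $|b_v|\geq\delta\Delta/2$. The key structural observation is that $b_v$ has the \emph{separable} form $b_v(a,b)=F_b(a)+G_a(b)$, where $F_b\in\F_2^A$ is the restriction of the row-codeword of the $V_1$-neighbour of $v$ along column $b$, and $G_a\in\F_2^B$ is the restriction of the column-codeword of the $V_1$-neighbour along row $a$.

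The local bound is then proved by contradiction. Assuming $|a_v\,\triangle\, b_v|<c\,\Delta^{1/2-\eps}$ for a sufficiently small $c$ would place $b_v$ strictly within half the minimum distance $\delta\Delta$ of the dual tensor code from the codeword $a_v$. Combining this closeness with the separable form of $b_v$, the concentration of $a_v$'s support in few rows plus columns granted by normality, and the local minimality of $\mathcal Z$ at the $V_1$-neighbours of $v$, one can exhibit an alternate decomposition of $\hat Z$ whose norm is strictly smaller than $\|\mathcal Z\|$, contradicting the fact that $\mathcal Z$ is a minimum-norm representation. Feeding the local bound into the partitioning identity yields
\[
2|\hat Z|\;\geq\; c\,\Delta^{1/2-\eps}\,|S_0\setminus S_0^e|\;\geq\;\tfrac{c}{2}\,|S_0|\,\Delta^{1/2-\eps},
\]
which is the claim with $\kappa=c/4$.

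The main obstacle is the contradiction step: one must translate the Hamming-closeness of the separable vector $b_v$ to the structured dual tensor codeword $a_v$ into an \emph{explicit} norm-decreasing modification of the $V_1$-side decomposition $\mathcal Z$. This requires a delicate combination of the $V_0$-side support restriction coming from normality with the $V_1$-side local minimality of $\mathcal Z$, and is the place where the bookkeeping is most technical.
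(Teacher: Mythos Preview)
Your approach is genuinely different from the paper's, and the gap you yourself flag as ``the place where the bookkeeping is most technical'' is in fact the entire content of the argument, not bookkeeping. As written, the contradiction step is a bare assertion: you claim that if $|\hat Z_{|Q(v)}|<c\,\Delta^{1/2-\eps}$ at a normal active $v\in V_{00}$, then some modification at the $V_1$-neighbours of $v$ strictly decreases $\|\mathcal Z\|$. But you never exhibit such a modification. The ``separable'' form $b_v(a,b)=F_b(a)+G_a(b)$ carries no structure (any matrix can be written this way), and local minimality of $\mathcal Z$ at a $V_1$-neighbour $w$ only says that no single \emph{tensor} codeword added to $(r_w,c_w)$ decreases $\|r_w\|+\|c_w\|$; it is not at all clear why Hamming-closeness of $b_v$ to $a_v$ on $Q(v)$ would force such a non-minimality at $w$. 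Indeed, I see no obstruction to a normal active vertex $v\in V_{00}$ having $|\hat Z_{|Q(v)}|$ arbitrarily small (even zero): a single row codeword $r_v\in C_B$ on one row of $Q(v)$ could in principle be exactly matched by the contribution of $C_1$ on that row, coming from a $V_{01}$-neighbour $w$ whose $c_w$ happens to have that row equal to $r_v$; nothing in the stalled condition or in local minimality at $w$ prevents this.

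The paper's proof takes a completely different route. Instead of bounding $|\hat Z|$ per $V_0$-vertex, it switches to $V_1$. By Lemma~\ref{lem:Se}, most of the $\geq|S_0|$ normal rows and columns survive; by Lemma~\ref{lem:Tgamma} and Lemma~\ref{lem:main} (via the argument of Proposition~\ref{prop:notdecoded}), only $O(|S_0|/\Delta^{1-2\eps})$ of them can lie in $K^\gamma$ or in a $\row_i^\beta$/$\col_j^\beta$, and any normal row or column outside these sets sits in the $Q$-neighbourhood of a \emph{decodable agglutinating} vertex of $V_1$. Hence there are at least $\kappa|S_0|/\Delta$ such vertices, and Proposition~\ref{prop:Zagglu} gives $|\hat Z_{|Q(v)}|\geq\Delta^{3/2-\eps}/2$ at each one. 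Summing over these disjoint $V_1$-views yields $|\hat Z|\geq\kappa|S_0|\Delta^{1/2-\eps}$. The point is that the lower bound on $|\hat Z|$ comes from the $V_1$-side, where agglutination forces large local weight; no per-$V_0$-vertex lower bound is ever claimed.
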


\begin{proof}
By Lemma~\ref{lem:Se}, the number of exceptional vertices in $S_0^e$ is a small fraction of
the set of active vertices $S_0$. There are therefore at least at least $|S_0|/2$
(say) normal rows and columns. By Lemma~\ref{lem:Tgamma} only a small fraction
of the set of normal rows and columns spanned by $S_0$ can end up in the $Q$-neighbourhood of a
vertex of $K^\gamma$ in $V_1$. Furthermore, 
by arguing as in the proof of
Proposition~\ref{prop:notdecoded}, if a normal row or a normal column is not in
the $Q$-neighbourhood of a decodable agglutinating vertex, it must belong to one of the
sets $\row_i^\beta$ or $\col_j^\beta$ for $\beta=1/6+o_\Delta(1)$, otherwise it
would have already been decoded by the sequential decoder.
Therefore, most normal rows and columns of $S_0$ must belong to the
$Q$-neighbourhood of a decodable agglutinating vertex of $S_1$ (hence the terminology
'agglutinating'). The number of decodable agglutinating vertices of $S_1$ is therefore at
least equal to $\kappa|S_0|/\Delta$ for some constant $\kappa$, and the result follows from
Proposition~\ref{prop:Zagglu}.
\end{proof}

\paragraph{Analysis of the second parallel decoding step.}
This step is simpler than the first. Recall that the second parallel decoding
step is applied on vertices of $V_0$ and uses regular decoding (\textit{i.e.}\ decreases
$|\hat{Z}|$ as much
as possible on complete, unpunctured, local views). 

To avoid confusion between the different stages of the decoder, let us denote by
$\hat{Z},\hat{Z}',\hat{Z}''$ respectively the values of the mismatch $\hat{Z}$ of the stalled
serial decoder before the
first parallel decoding step,
just after the first parallel decoding step on $V_1$, and finally after the
second parallel decoding step on $V_0$. Accordingly we write
$\hat{Z}=C_0+R_0+C_1+R_1$, $\hat{Z}'=C'_0+R'_0+C'_1+R'_1$, 
$\hat{Z}''=C''_0+R''_0+C''_1+R''_1$, and use the
notation $S_0=S_{00}\cup S_{11}$, $S_0'=S_{00}'\cup S_{11}'$,
$S_0''=S_{00}''\cup S_{11}''$ for the set of active vertices
of $V_0$ at each stage of decoding.

We recall also that $\mathcal{Z}=(C_0,R_0,C_1,R_1)$ has been chosen to be a minimal
representation of $\hat{Z}$. The
representations $\mathcal{Z'}=(C_0',R_0',C_1',R_1')$ and
$\mathcal{Z}''=(C_0'',R_0'',C_1'',R_1'')$ 
are however simply deduced from $\mathcal{Z}$ by the decoder's increments.

\begin{lemma}
\label{lem:S''00}
Let $x_0\geq 1$ (resp. $x_1\geq 1$) be such that $x_0\delta\Delta$ (resp.
$x_1\delta\Delta$) is the average
weight of $R''_0+C''_0$ (resp. $R''_1+C''_1$) on the union of the
$Q$-neighbourhoods of $S''_{00}$ (resp. $S''_{11}$). Under the hypothesis
$|S_0|\leq \frac{\delta}{8\Delta}|V_0|$,
we have
\begin{align*}
|S''_{00}|&\leq\frac{256}{x_0^2\delta^2}|S'_0|\\
|S''_{11}|&\leq\frac{256}{x_1^2\delta^2}|S'_0|.
\end{align*}
\end{lemma}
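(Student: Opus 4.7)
The plan is to mirror the proof of Lemma~\ref{lem:Sl}: set up a bipartite subgraph $H$ of $\G_0^\square$, lower-bound $|E(H)|$ via a local inequality at each $v \in S_{00}''$, upper-bound it via the Expander Mixing Lemma, and combine. The crucial trick is to take the intersection of the post-step vector $R_0''+C_0''$ against the \emph{pre-step} vector $R_1'+C_1'$ rather than against $R_1''+C_1''$; this aligns the subgraph with the optimality statement of the second parallel decoding step, which is phrased in terms of $\hat{Z}'$, and allows us to sidestep any analysis of how the simultaneous $V_{00}$- and $V_{11}$-updates interact on shared squares.

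The first step is the local inequality: for every $v \in V_{00}$, writing $w_v := |(R_0''+C_0'')|_{Q(v)}|$,
\[
\bigl|\bigl((R_0''+C_0'')\cap (R_1'+C_1')\bigr)|_{Q(v)}\bigr| \;\geq\; w_v/2.
\]
The second parallel step at $v$ selects some $u_v \in \DT{A}{B}$ (possibly $0$) minimising $|\hat{Z}'|_{Q(v)}+u|$ over $u \in \DT{A}{B}$. Since the local view $(R_0''+C_0'')|_{Q(v)}$ is itself a dual-tensor codeword (it is the sum of an element of $\F_2^A\otimes C_B$ and one of $C_A\otimes\F_2^B$), substituting $u := u_v + (R_0''+C_0'')|_{Q(v)}$ into the minimality inequality and using the identity $\hat{Z}'|_{Q(v)}+u_v = (R_0''+C_0''+R_1'+C_1')|_{Q(v)}$ gives
\[
|(R_0''+C_0''+R_1'+C_1')|_{Q(v)}| \;\leq\; |(R_1'+C_1')|_{Q(v)}|.
\]
Inclusion--exclusion for the Hamming weight, $|x+y|=|x|+|y|-2|x\cap y|$, then delivers the claimed bound.

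Next, let $H$ be the subgraph of $\G_0^\square$ whose edges are the squares in the support of $(R_0''+C_0'')\cdot (R_1'+C_1')$. Summing the local inequality over $v \in S_{00}''$ and invoking the averaging hypothesis $\sum_{v\in S_{00}''} w_v = x_0\delta\Delta|S_{00}''|$ yields $|E(H)|\geq \tfrac12 x_0\delta\Delta\,|S_{00}''|$, while every edge of $H$ has its $V_{11}$-endpoint in $S_{11}'\subseteq S_0'$. Invoking Lemmas~\ref{lem:mixing} and~\ref{lem:lambda} (so $\lambda(\G_0^\square)\leq 4\Delta$) together with $|S_0'|\leq|S_0|\leq \tfrac{\delta}{8\Delta}|V_0|$ gives $|S_{11}'|/|V_{11}|\leq \tfrac{\delta}{4\Delta}$, so the ``expected'' mixing term is at most $\tfrac{\delta\Delta}{4}|S_{00}''|$; for $x_0\geq 1$ this absorbs into the left side, leaving
\[
\tfrac{x_0\delta\Delta}{4}|S_{00}''| \;\leq\; 4\Delta\sqrt{|S_{00}''|\,|S_0'|},
\]
which squares away to $|S_{00}''|\leq \tfrac{256}{x_0^2\delta^2}|S_0'|$. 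The bound on $|S_{11}''|$ follows by the symmetric argument, exchanging the roles of $V_{00}$ and $V_{11}$ and of the indices $0$ and $1$ (so that the competitor at $v\in V_{11}$ is taken with respect to the pre-step vector $R_0'+C_0'$, whose active vertices lie in $S_{00}'\subseteq S_0'$).

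The main step is the local inequality, and the key insight is the choice of the pre-step vector $R_1'+C_1'$. Working instead with the post-step $R_1''+C_1''$ would require accounting for contributions from simultaneous $V_{11}$-updates on the $\Delta^2$ squares of $Q(v)$ (an error term $\delta_v$ that does not obviously cancel), which is technically awkward; the pre-step vector fits naturally with the parallel step's optimality, which is stated relative to $\hat{Z}'$, and $S_{11}'$ is already directly controlled by $S_0'$.
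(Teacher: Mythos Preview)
Your proof is correct and follows essentially the same approach as the paper's: the local inequality $|(R_0''+C_0'')|_{Q(v)}|\leq 2|(R_1'+C_1')|_{Q(v)}|$ (which you obtain via inclusion--exclusion, the paper via the triangle inequality) feeds directly into the expander mixing lemma on $\G_0^\square$ between $S_{00}''$ and $S_{11}'$, exactly as in the paper. Your explicit discussion of why one compares against the pre-step vector $R_1'+C_1'$ rather than $R_1''+C_1''$ is a useful clarification of a point the paper leaves implicit in its parenthetical justification of \eqref{eq:basic}.
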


\begin{proof}
Observe that $x_i\geq 1$, $i=0,1$, because $\delta\Delta$ is the minimum distance of the dual tensor code. 
Now for any vertex $v$  of $V_{00}$, we must have
\begin{equation}\label{eq:basic}
|(R''_0+C''_0)_{|Q(v)}|\leq 2|(R'_1+C'_1)_{|Q(v)}|
\end{equation} 
(otherwise the zero vector
is a better choice than $R''_0+C''_0$ on $Q(v)$, since it gives a lower value of
$|\hat{Z}''|$).

From \eqref{eq:basic} we have
that the average degree of vertices of $S''_{00}$ in the subgraph of $\G_0^\square$
induced by $S''_{00}$ and $S'_{11}$ is at least $x\delta\Delta/2$. Applying the
Expander mixing Lemma in $\G_0^\square$ we have 
\[
\frac 12|S''_{00}|x\delta\Delta \leq |E(S''_{00},S'_{11})|\leq
\Delta^2\frac{|S_{00}''||S_{11}'|}{|V_{00}|} +
4\Delta\sqrt{|S''_{00}||S'_{11}|}.
\]
Writing $|S_{11}'|/|V_{00}|\leq 2|S_0'|/|V_0|\leq 2|S_0|/|V_0|$ by
Theorem~\ref{thm:tweaked} and applying the hypothesis
$|S_0|\leq\frac{\delta}{8\Delta}|V_0|$, we get
\begin{align*}
\frac 14|S''_{00}|x_0\delta\Delta&\leq 4\Delta\sqrt{|S''_{00}||S'_0|}\\
|S''_{00}|^{1/2}&\leq \frac{16}{x_0\delta}|S'_0|^{1/2}\\
|S''_{00}|&\leq\frac{256}{x_0^2\delta^2}|S'_0|.
\end{align*}
The same reasoning is valid for $S''_{11}$.
\end{proof}

The following theorem is the core result of the analysis of the whole parallel
decoding procedure, it states that together the two steps of the parallel
decoding procedure decrease the Hamming weight of $\hat{Z}$.

\begin{theo}\label{thm:hatZ}
Under the hypothesis
$|S_0|\leq \frac{\delta}{8\Delta}|V_0|$,
after the second parallel decoding step we have $|\hat{Z}''|\leq \kappa
|S_0|\Delta^{4\eps}$ for some
constant $\kappa$.
\end{theo}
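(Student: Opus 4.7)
\medskip

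\noindent\textbf{Proof plan.} The plan is to bound $|\hat{Z}''|$ in two steps: first express it as a sum of local weights over $V_0$ active vertices, then use Lemma~\ref{lem:S''00} to control the number of such vertices, and finally combine with Theorem~\ref{thm:tweaked} to bring the bound into the form stated.

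First I would apply the triangle inequality for Hamming weight:
\[
|\hat{Z}''| \;\leq\; |R_0'' + C_0''| + |R_1'' + C_1''|,
\]
and observe that because the $Q$-neighbourhoods $\{Q(v)\}_{v\in V_{00}}$ partition $Q$, and likewise for $V_{11}$, each of these two norms decomposes as a sum of local contributions:
\[
|R_i'' + C_i''| \;=\; \sum_{v\in V_{ii}} \bigl|(R_i'' + C_i'')_{|Q(v)}\bigr| \;=\; \sum_{v\in S_{ii}''} \bigl|(R_i'' + C_i'')_{|Q(v)}\bigr|
\]
for $i=0,1$, where the second equality follows because the summand vanishes on inactive vertices by definition of $S_{ii}''$.

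Now I would use the definition of $x_0, x_1$ from Lemma~\ref{lem:S''00}: by construction, the average of $|(R_i''+C_i'')_{|Q(v)}|$ over $v\in S_{ii}''$ equals $x_i \delta \Delta$, so
\[
\sum_{v\in S_{ii}''}\bigl|(R_i''+C_i'')_{|Q(v)}\bigr| \;=\; |S_{ii}''|\cdot x_i\delta\Delta.
\]
Plugging in the upper bound $|S_{ii}''| \leq 256 |S_0'|/(x_i^2\delta^2)$ from Lemma~\ref{lem:S''00}, each contribution is at most $256 |S_0'|\Delta/(x_i\delta)$. Since $x_i\geq 1$, this is bounded by an absolute constant times $|S_0'|\Delta/\delta$ for each $i$, so
\[
|\hat{Z}''| \;\leq\; \frac{512}{\delta}\,|S_0'|\,\Delta.
\]

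Finally, I would substitute Theorem~\ref{thm:tweaked}, which gives $|S_0'|\leq \kappa_1 |S_0|/\Delta^{1-4\eps}$, to obtain
\[
|\hat{Z}''| \;\leq\; \frac{512\kappa_1}{\delta}\,|S_0|\,\Delta^{4\eps},
\]
which is the claimed bound with $\kappa = 512\kappa_1/\delta$. There is no serious obstacle here; the only point requiring care is the partition argument, which relies on the fact that the $V_{00}$ (resp.\ $V_{11}$) $Q$-neighbourhoods tile $Q$, so that the Hamming weights truly split as a sum of local weights and the contribution of non-active vertices is identically zero. All the analytic content has already been extracted in Lemma~\ref{lem:S''00} and Theorem~\ref{thm:tweaked}; this theorem is essentially their combination.
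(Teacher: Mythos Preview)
Your proof is correct and follows essentially the same approach as the paper: decompose $|\hat{Z}''|$ via the triangle inequality into $|R_0''+C_0''|+|R_1''+C_1''|$, write each as $|S_{ii}''|\,x_i\delta\Delta$, bound $|S_{ii}''|$ by Lemma~\ref{lem:S''00}, use $x_i\geq 1$, and finish with Theorem~\ref{thm:tweaked}. The paper's argument is identical, only slightly more condensed.
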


\begin{proof}
From Lemma~\ref{lem:S''00} we obtain that
\[
|R''_0+C''_0|=\sum_{v\in S''_{00}}
|(R''_0+C''_0)_{Q(v)}|=|S''_{00}|x_0\delta\Delta\leq\frac{256}{x_0\delta}|S'_0|\Delta
\]
and applying Theorem~\ref{thm:tweaked} we obtain
\[
|R''_0+C''_0|\leq \kappa|S_0|\Delta^{4\eps}
\]
for some constant $\kappa$ (remembering that $x_0\geq 1$).
We have naturally the same upper bound for $|R''_1+C''_1|$
and since
\[
|\hat{Z}''|\leq |R''_0+C''_0| + |R''_1+C''_1|
\]
we have the claimed result.
\end{proof}

For the proof of Theorem~\ref{thm:main} we will also need the following Lemma. Recall that
$\|\hat{Z}\|$ denotes the minimum weight $\|\mathcal{Z}\|$ of a representation
of $\hat{Z}$.

\begin{lemma}
\label{lem:/2}
Let $\hat{Z}$ and $\hat{Z}''$ be the states of $\hat{Z}$ just before a parallel
decoding procedure and just after the second step of the parallel decoding
procedure. We have $\|\hat{Z}''\|\leq \|\hat{Z}\|/2$.
\end{lemma}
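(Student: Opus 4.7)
The plan is to bracket $\|\hat{Z}''\|$ from above and $\|\hat{Z}\|$ from below, establishing $\|\hat{Z}\|\geq|S_0|$ and $\|\hat{Z}''\|=o(|S_0|)$, so that the ratio falls below $1/2$ once $\Delta$ is large enough.

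For the lower bound I would argue that, since $\mathcal{Z}=(C_0,R_0,C_1,R_1)$ is a \emph{minimal} representation of $\hat{Z}$, every active vertex $v\in S_{ii}$ (with $V_{ii}\subset V_0$) has $(R_i+C_i)|_{Q(v)}\neq 0$, so its $V_0$ local view contains at least one nonzero column codeword of $C_i$ or row codeword of $R_i$. Because distinct $V_{ii}$ vertices have disjoint $Q$-neighbourhoods and each column (resp.\ row) of $Q(v)$ is attached to a unique $B$-edge (resp.\ $A$-edge) at $v$, the contributions of different active vertices to $\|\mathcal{Z}\|$ are globally disjoint, yielding $\|\hat{Z}\|=\|\mathcal{Z}\|\geq|S_{00}|+|S_{11}|=|S_0|$.

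For the upper bound I would exhibit an alternative representation $\mathcal{Y}$ of $\hat{Z}''$ by working in the $V_0$ viewpoint. The decoder output $\mathcal{Z}''$ already provides the splitting $\hat{Z}''=X_{00}''+X_{11}''$ with $X_{ii}'':=C_i''+R_i''$, each local view $X_{ii}''|_{Q(v)}$ being a dual tensor codeword on $Q(v)$ of Hamming weight $w_v$. The local pair $(c_v'',r_v'')$ may be freely shifted by any $\xi_v\in C_A\otimes C_B$ added to both without affecting the global vector, so by Definition~\ref{def:robust} and the robustness of the dual tensor code guaranteed by Theorem~\ref{thm:123}, whenever $w_v\leq w=\Delta^{3/2-\eps}$ the minimised local norm is at most $2w_v/(\delta\Delta)$. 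I would then partition $S_0''$ into \emph{small-weight} ($w_v\leq w$) and \emph{large-weight} ($w_v>w$) vertices: the former contribute at most $\frac{2}{\delta\Delta}(|X_{00}''|+|X_{11}''|)=O(|S_0|\Delta^{4\eps-1})$, using the bound $|X_{ii}''|=|R_i''+C_i''|\leq\kappa|S_0|\Delta^{4\eps}$ already established in the proof of Theorem~\ref{thm:hatZ}, and the number of large-weight vertices is at most $|X_{ii}''|/w=O(|S_0|\Delta^{5\eps-3/2})$ by Markov's inequality, so with the trivial bound $2\Delta$ per vertex they contribute $O(|S_0|\Delta^{5\eps-1/2})$.

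Both contributions are $o(|S_0|)$ for $\eps$ sufficiently small and $\Delta$ sufficiently large, yielding $\|\hat{Z}''\|\leq|S_0|/2\leq\|\hat{Z}\|/2$. The main technical point is that $\hat{Z}''|_{Q(v)}$ is not itself a dual tensor codeword, so robustness cannot be invoked on it directly; the remedy is to apply robustness to the structured pieces $X_{ii}''|_{Q(v)}$ afforded by the $V_0$ splitting of $\hat{Z}''$, and to verify that the global norm $\|\mathcal{Y}\|$ decomposes cleanly as a sum of local norms over $V_0$ active vertices, which works because the columns and rows associated to distinct $V_{ii}$ vertices occupy disjoint positions.
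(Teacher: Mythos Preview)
Your argument is correct and follows the same overall plan as the paper: lower-bound $\|\hat{Z}\|$ by $|S_0|$ (each active vertex of $V_0$ contributes at least one nonzero row or column to the norm, and the paper uses this inequality without spelling it out) and upper-bound $\|\hat{Z}''\|$ by $|S_0|/2$ by splitting the active $V_0$ vertices of $\mathcal{Z}''$ according to whether the local Hamming weight of $R_i''+C_i''$ lies below the robustness threshold $w=\Delta^{3/2-\eps}$, applying robustness on the light part and the trivial bound $2\Delta$ on the heavy part.

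The difference is in how the two halves are controlled. The paper first splits on the \emph{average} local weight~$x_0$ (defined in Lemma~\ref{lem:S''00}), and in each case bounds $|S_{00}''|$ via Lemma~\ref{lem:S''00} and Theorem~\ref{thm:tweaked} before invoking robustness or Markov. You instead feed the \emph{total} weight bound $|R_i''+C_i''|\leq\kappa|S_0|\Delta^{4\eps}$, already established inside the proof of Theorem~\ref{thm:hatZ}, directly into the robustness estimate (giving $\sum_{w_v\leq w}2w_v/(\delta\Delta)\leq \frac{2}{\delta\Delta}|R_i''+C_i''|$) and into Markov for the count of heavy vertices. This sidesteps both the case distinction on $x_0$ and the use of Lemma~\ref{lem:S''00}, and even gives marginally sharper exponents ($\Delta^{4\eps-1}$ and $\Delta^{5\eps-1/2}$ against the paper's $\Delta^{3\eps-1/2}$ and $\Delta^{5\eps-1/4}$). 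Your remark that one must apply robustness to $X_{ii}''|_{Q(v)}$ rather than to $\hat{Z}''|_{Q(v)}$ is exactly right and is the point the paper's proof glosses over when it writes ``applying robustness $\ldots$ to $\|R_0''\|+\|C_0''\|$''; you are being more explicit that a new representation $\mathcal{Y}$ is being constructed.
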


\begin{proof}
Recall that we have taken
$\mathcal{Z}=(C_0,R_0,C_1,R_1)$ such that $\|\mathcal{Z}\|=\|\hat{Z}\|$.
As mentioned just before Lemma~\ref{lem:S''00}, the two
representations $\mathcal{Z'}=(C_0',R_0',C_1',R_1')$ and
$\mathcal{Z}''=(C_0'',R_0'',C_1'',R_1'')$ that are deduced from $\mathcal{Z}$
through the parallel decoder's updates are not necessarily minimal, but
nevertheless it suffices to show that $\|\mathcal{Z}''\|\leq \|\mathcal{Z}\|/2$.

Let $x_0,x_1$ be defined as in Lemma~\ref{lem:S''00}.
We consider two cases
\begin{itemize}
\item[(i)] $x_0>\Delta^{1/4}$. In this case Lemma~\ref{lem:S''00}
and Theorem~\ref{thm:tweaked} imply
$|S_{00}''|\leq\frac{\kappa\Delta^{4\eps}}{\delta^2\Delta^{3/2}}|S_0|$ for some constant
$\kappa$. We deduce $\|R_0''\|+\|C_0''\|\leq |S_{00}''|2\Delta\leq |S_0|/4$ for $\Delta$ large
enough.
\item[(ii)] $x_0\leq\Delta^{1/4}$. All vertices of $S_{00}''$ on which
$R_{00}''+C_{00}''$ has Hamming weight less than $\Delta^{3/2-\eps}$ contribute, by
applying robustness with $w=\Delta^{3/2-\eps}$, at most
$\frac{2}{\delta}\Delta^{1/2-\eps}|S_{00}''|$ to $\|R_0''\|+\|C_0''\|$. The
vertices of $S_{00}''$ on which $R_{00}''+C_{00}''$ has Hamming weight $\geq \Delta^{3/2-\eps}$
number not more than $\frac{\delta}{\Delta^{1/4-\eps}}|S_{00}''|$, and we again easily
deduce from Lemma~\ref{lem:S''00}
and Theorem~\ref{thm:tweaked} that $\|R_0''\|+\|C_0''\|\leq |S_0|/4$ for $\Delta$ large
enough.
\end{itemize}
We therefore always have $\|R_0''\|+\|C_0''\|\leq |S_0|/4$. We also have the
symmetrical inequality
$\|R_1''\|+\|C_1''\|\leq |S_0|/4$, and summing both we get 
\[
\|R_0''\|+\|C_0''\|+\|R_1''\|+\|C_1''\|\leq |S_0|/2\leq \|\hat{Z}\|/2.
\]
\end{proof}

\begin{proof}[Proof of Theorem~\ref{thm:main}]

Let $\error$ be the original error vector, with local view $e_v$ for vertices $v\in
V$. We recall that the decoder initially
computes a local estimation $\eps_v$ of the error on all $Q$-neighbourhoods of
vertices $v\in V_1$. We have $\eps_v=e_v+r_v+c_v$ where $r_v+c_v$ is a local
dual tensor codeword.
Over the partition $(Q(v))_{v\in V_{01}}$ of the set $Q$ of coordinates, 
the global vector $\beps_{01}$ defined by the local views $\eps_v$ differs from the error
vector $\error$ by a vector $R_0+C_1$, similarly the global vector $\beps_{10}$ defined by the local views $\eps_v$ 
for $v\in V_{10}$ differs from $\error$ by a vector $R_1+C_0$. The two vectors
$R_0+C_1$ and $R_1+C_0$ do not a priori coincide, and their sum defines the
original mismatch $Z$. The goal of the decoder is to find a decomposition
$Z=\hat{C}_0 + \hat{R}_0+\hat{C}_1+\hat{R}_1$ and output
$\hat{\error}=\beps_{01}+\hat{R}_0+\hat{C}_1$.

We will have proved that our decoder works after ensuring two things:

\begin{itemize}
\item[--] that it indeed finds a decomposition
$Z=\hat{C}_0 + \hat{R}_0+\hat{C}_1+\hat{R}_1$,
\item[--] that the corresponding $\hat{\error}$ is such that
$|\error +\hat{\error}|=|R_0+C_1+\hat{R}_0+\hat{C}_1|<d_{\min}$ where
$d_{\min}$ is the minimum distance of the quantum code.
\end{itemize}

By Proposition~\ref{prop:stalled} and Theorem~\ref{thm:hatZ} we have that, for a
sufficiently small value of $\eps$ (namely $\eps <1/10$), the
Hamming weight $|\hat{Z}|$ decreases during parallel decoding steps whenever we have
$|S_0|\leq \frac{\delta}{8\Delta}|V_0|$. Furthermore, the
Hamming weight $|\hat{Z}|$ decreases at every sequential
decoding step by definition. Therefore to ensure convergence to some valid
output $\hat{\error}$ we only need to ensure that the condition $|S_0|\leq \frac{\delta}{8\Delta}|V_0|$
is satisfied throughout the decoding process.

To estimate the
number $|S_0|$ of active vertices in $V_0$, we can simply upper bound it by
$\|\hat{Z}\|$ and track the evolution of $\|\hat{Z}\|$ during the course of the
decoding algorithm.

At the initial stage, we have $\hat{Z}=Z$ which is the original mismatch vector.
We note that we will have $r_v+c_v\neq 0$ only if
$|e_v|\geq\delta\Delta/2$. Since $Q$ is partitioned into $Q$-neighbourhoods of
vertices of $V_{01}$, we have that the number of vertices of $V_{01}$ for which
 $r_v+c_v\neq 0$ is at most $2|\error|/\delta\Delta$, with the same conclusion
holding for $V_{01}$. Crudely upper bounding the contribution of every such
vertex to $\|Z\|$ by $2\Delta$, we have $\|Z\|\leq 8|\error|/\delta$. Strictly
speaking, we have upper bounded the number of terms of one possible expression
of $Z$ as a sum of local row and column vectors, but the conclusion holds anyway
since $\|Z\|$ refers to the minimum number of terms of such a decomposition of
$Z$.

Similarly, we remark that $|r_v+c_v|\leq 2|e_v|$,  from which we have the
upper bound for the Hamming weight of $Z$, $|Z|\leq 4|\error|$, and also 
\begin{equation}
\label{eq:<2|e|}
|R_0+C_1|\leq 2|\error|,
\end{equation}
for $(R_0,C_0,R_1,C_1)$ the original decomposition of $Z$.

For any sequential decoding step, we clearly have that the weight $\|\hat{Z}\|$
is increased by at most $2\Delta$. Since the weight $\|\hat{Z}\|$ only decreases
during a parallel decoding step, we have that $\|\hat{Z}\|$ stays upper bounded
by $8|\error|/\delta + |Z|2\Delta \leq 8|\error|(\Delta+1/\delta)$ so that the requirement
$|S_0|\leq \frac{\delta}{8\Delta}|V_0|$
is always satisfied whenever  
\[
8|\error|(\Delta+\frac{1}{\delta})\leq \frac{\delta}{8\Delta}|V_0|
\]
\textit{i.e.}\ whenever
\[
|\error|\leq\frac{\delta^2}{64\Delta(\delta\Delta+1)}|V_0|=\frac{\delta^2}{32\Delta^3(\delta\Delta+1)}|Q|
\]
since $|Q|=|V_0|\Delta^2/2$. We have therefore proved that under the hypothesis
of Theorem~\ref{thm:main}, the decoder always finds a decomposition of $Z$.

Next we evaluate an upper bound on the Hamming weight $|\hat{R}_0+\hat{C}_1|$ of
the output of the decoder. To this end we upper bound the increment of
$|\hat{R}_0+\hat{C}_1|$ during sequential decoding, during all the first
parallel decoding steps, and finally during all second parallel decoding steps.

We note that during a sequential decoding step, the $Q$-neighbourhood of a
single vertex is modified, this vertex being in $V_{00}$ or in $V_{11}$. This
translates either into a modification of $\hat{R}_0$ or into a modification of
$\hat{C}_1$, which affects at most $\Delta$ rows or $\Delta$ columns.
Therefore a sequential decoding step translates into an augmentation of
$|\hat{R}_0+\hat{C}_1|$ by at most $\Delta^2$. Since the value of $|\hat{Z}|$
decreases at every step, the total contribution of sequential decoding does not
exceed $|Z|\Delta^2\leq 4|\error|\Delta^2$.

During a first parallel decoding step, the only modification to
$|\hat{R}_0+\hat{C}_1|$ is induced by the parallel decoder's action on vertices
$v$
of $V_{01}$. We note that the decoder modifies the associated $Q$-neighbourhood
only if it decreases $|\hat{Z}|$ on a subset of $Q(v)$ by at least
$w/2=\Delta^{3/2-\eps}/2$. Since all these sub-$Q$-neighbourhoods are disjoint,
their number is at most $2|\hat{Z}|/\Delta^{3/2-\eps}$ and the total increment
of $|\hat{R}_0+\hat{C}_1|$ during this step is at most $\Delta^2$ times this
number of vertices, namely $2|\hat{Z}|\Delta^{1/2+\eps}$. Now by
Proposition~\ref{prop:stalled} and Theorem~\ref{thm:hatZ}, the next time a
first parallel decoding step occurs the value of $|\hat{Z}|$ will have been
divided by at least $\kappa\Delta^{1/2-5\eps}$ for some constant $\kappa$, therefore the
sum of the contributions to $|\hat{R}_0+\hat{C}_1|$ of all parallel decoding
steps is at most $4|Z|\Delta^{1/2+\eps}\leq 16|\error|\Delta^{1/2+\eps}$ by summing the converging
geometric series.

Finally, for the contribution of the second parallel decoding steps, we argue as
before that the increment of $|\hat{R}_0+\hat{C}_1|$ is at most $\Delta^2$ per
vertex on which the decoder takes action. The number of such vertices is not
more than $|S_0'|+|S_0''|$, $|S_0'|$ being an upper bound on the number of
active vertices on which the decoder may act, and $|S_0''|$ being an upper bound
on the number of inactive vertices which will become active because the decoder
decides to update them. Applying Theorem~\ref{thm:tweaked} and Lemma~\ref{lem:S''00}
(which also holds for $S_{11}''$), we have that $|S_0'|\leq
\kappa|S_0|/\Delta^{1-4\eps}$ and $|S_0''|\leq \kappa|S_0|/\Delta^{1-4\eps}$ for some
constant $\kappa$. The increment to $|\hat{R}_0+\hat{C}_1|$ is therefore at most
\[
\kappa|S_0|\Delta^{1+4\eps}\leq \kappa\|\hat{Z}\|\Delta^{1+4\eps}
\]
for some constant $\kappa$, where $\hat{Z}$ refers to the value just before the
parallel decoding procedure is applied. As stated above, the maximum value of
$\|\hat{Z}\|$ is bounded from above by $8|\error|(\Delta+1/\delta)$. Since every
parallel decoding procedure divides the current value of $\|\hat{Z}\|$ by at
least $2$ (by Lemma~\ref{lem:/2}), it is an easy exercice in combinatorics to
see that the sum of all the values of $\|\hat{Z}\|$ before each parallel
decoding procedure cannot exceed $16|\error|(\Delta+1/\delta)$, irrespective of the
distribution of the parallel decoding steps during the whole decoding procedure.
We obtain therefore that total contribution of the second parallel decoding steps is at
most
\[
\kappa|\error|\Delta^{2+4\eps}
\]
for some constant $\kappa$ and $\Delta$ large enough.

Summarising, the upper bound on the contribution of the second parallel decoding
steps dominates the other terms and we have that the final output of the decoder
is $\hat{R}_0+\hat{C}_1$ of Hamming weight
\[
|\hat{R}_0+\hat{C}_1| \leq \kappa|\error|\Delta^{2+4\eps}
\]
for some constant $\kappa$. From \eqref{eq:<2|e|} we therefore obtain
\[
|\error+\hat{\error}|=|R_0+C_1+\hat{R}_0+\hat{C}_1|\leq
|R_0+C_1|+|\hat{R}_0+\hat{C}_1|\leq c|\error|\Delta^{2+4\eps}
\]
for some constant $\kappa$. Applying the lower bound on $d_{\min}$ from Theorem~\ref{thm:good-qLDPC}
we obtain that $|\error+\hat{\error}|<d_{\min}$ which concludes the proof. 
\end{proof}


\section{Link with the Lifted Product codes}
\label{sec:LP}

In this section, we briefly recall the construction of the lifted product codes introduced by Panteleev and Kalachev~\cite{PK21} and explain how quantum Tanner codes can be obtained from lifted product codes, or more generally from hypergraph product codes. Finally, we show that the decoder we studied in the previous sections immediately gives an efficient decoder for lifted product codes.

\subsection{Construction of the Lifted Product codes}

Take a group $G$ and two symmetric subsets $A = \{a_i\} \subset G$, $B = \{b_i\} \subset G$ of size $\Delta$. 
Let $h_A$ and $h_B$ be two parity-check matrices on $\F_2 \subset \F_2[G]$ of size $m \times \Delta$.
We further define two diagonal matrices of size $\Delta$ over $G$:
\[D_A := \mathrm{diag}(a_1, \ldots, a_\Delta), \qquad D_B := \mathrm{diag} (b_1, \ldots, b_\Delta)\]
and set $H_A := h_A D_A$ and $H_B := h_B D_B$. 

Consider the graph with two vertices and $\Delta$ parallel edges, and define two Tanner codes $T_A$ and $T_B$ on this graph, where the local constraints on both vertices are given by the codes $\ker h_A$ and $\ker h_B$, respectively. The corresponding Tanner graphs have vertex sets $A \cup (C_0 \cup C_1)$ for $T_A$ and $B \cup (D_0 \cup D_1)$ for $T_B$. 
In particular, we have $|C_0| = |C_1| = |D_0| = |D_1| = m$.

The lifted product codes of~\cite{PK21} are defined on qubits indexed by ($|G|$ copies of) $A\times B \cup (C_0 \cup C_1) \times (D_0 \cup D_1)$, and have $\sigma_X$-generators indexed by ($|G|$ copies of) $A \times (D_0 \cup D_1)$, and $\sigma_Z$-generators by ($|G|$ copies of) $(C_0 \cup C_1) \times B$.  
We will use the following identification between $\F_2^G$ and the group algebra $\F_2[G]$: 
\[ (x_g)_{g \in G} \quad \leftrightarrow \quad \sum_{g \in G} x_g g\]
where $x_g \in \F_2$. This allows us to describe a subset of $A\times B \times G$, \textit{e.g.}\ the support of a stabilizer, as a matrix in $\F_2[G]^{A\times B}$.

A relatively simple way of describing the lifted product codes is \emph{via} their stabilizer elements. 
For instance, a general $\sigma_Z$-stabilizer is associated with two matrices $(U_0, U_1)$ with entries indexed by $C_0 \times B$ and $C_1 \times B$, and values in $\F_2[G]$.
The corresponding operator is a product of $\sigma_Z$-Pauli matrices with a support given by
\[ \underbrace{ h_A^T U_0 + H_A^T U_1}_{A \times B} + \underbrace{U_0 h_B^T}_{C_0\times D_0} + \underbrace{ U_0 H_B^T}_{C_0\times D_1} + \underbrace{U_1 h_B^T }_{ C_1\times  D_0}+ \underbrace{ U_1 H_B^T}_{ C_1\times D_1}.\]
We leave the factor $G$ implicit for ease for notation, and represent the support of a set by its indicator vector.
Similarly, a general $\sigma_X$-stabilizer element $(V_0, V_1)$ has support
\[ \underbrace{V_0 h_B + V_1 H_B}_{A \times B} + \underbrace{h_A  V_0}_{C_0 \times D_0} + \underbrace{H_A V_0}_{ C_1\times D_0} + \underbrace{h_A V_1 }_{C_0  \times D_1}+\underbrace{ H_A V_1}_{ C_1 \times D_1}.\] 
A generator of the code is obtained by taking $(U_0, U_1)$, or $(V_0, V_1)$, of Hamming weight equal to 1. The generators of the lifted product codes have weight at most $2\Delta$. 

\subsection{Quantum Tanner codes from Lifted Product codes}

Let us define matrices $g_A$ indexed by $C' \times A$ and $g_B$ by $D' \times B$ to be the parity-check matrices of $(\ker h_A)^\perp$ and $(\ker h_B)^\perp$. Note that $|C'| = |D'| = \Delta - m$ and that 
\[ g_A h_A^T=0, \quad g_B h_B^T=0.\]
We further define $G_A := g_A D_A^{-1}$ and $G_B := g_B D_B^{-1}$ that satisfy $G_A H_A^T=0$ and $G_B H_B^T=0$.

The quantum Tanner code is nothing but the code with $\sigma_Z$-stabilizers $(U_0', U_1') \in \F_2[G]^{(C_0 \cup C_1) \times D'}$ and $\sigma_X$-stabilizers $(V_0', V_1') \in \F_2[G]^{(C' \times (D_0 \cup D_1)}$ corresponding to the stabilizers $(U_0, U_1) := (U_0' G_B, U_1' g_b)$ and $(V_0, V_1) := (g_A^T V_0', G_A^T V_1')$ of the Lifted Product code. 
They take the following form:
\begin{align*}
\sigma_Z-\text{stabilizer} \; (U_0', U_1')  &: \quad \underbrace{ h_A^T U_0' G_B+ H_A^T U_1' g_b}_{A \times B} + \underbrace{U_0' G_B h_B^T}_{C_0\times D_0} + \underbrace{ U_0' G_B H_B^T}_{C_0\times D_1} + \underbrace{U_1' g_B h_B^T }_{ C_1\times  D_0}+ \underbrace{ U_1' g_b H_B^T}_{ C_1\times D_1}\\
& \quad = 
\underbrace{ h_A^T U_0' G_B+ H_A^T U_1' g_b}_{A \times B} + \underbrace{U_0' G_B h_B^T}_{C_0\times D_0}  +  \underbrace{ U_1' g_b H_B^T}_{ C_1\times D_1},\\
\sigma_X-\text{stabilizer}\; (V_0', V_1')  &: \quad \underbrace{g_A^T V_0' h_B + G_A^T V_1' H_B}_{A \times B} + \underbrace{h_A  g_A^T V_0'}_{C_0 \times D_0} + \underbrace{H_A g_A^T V_0'}_{ C_1\times D_0} + \underbrace{h_A G_A^T V_1' }_{C_0  \times D_1}+\underbrace{ H_A G_A^T V_1'}_{ C_1 \times D_1}\\
&\quad =\underbrace{g_A^T V_0' h_B + G_A^T V_1' H_B}_{A \times B} + \underbrace{H_A g_A^T V_0'}_{ C_1\times D_0} + \underbrace{h_A G_A^T V_1' }_{C_0  \times D_1}.
\end{align*}
We immediately notice that the supports of the $\sigma_X$-stabilizers and the $\sigma_Z$-stabilizers only overlap on $A\times B$. In particular, we can simply discard all the qubits in $(C_0 \cup C_1) \times (D_0 \cup D_1)$ and define a shorter stabilizer code on $A \times B$ with the stabilizers corresponding to $(U_0', U_1')$ and $(V_0', V_1')$. This is nothing but the quantum Tanner code $\Q = (\C_0, \C_1)$ with local tensor codes $C_A \otimes C_B$ and $C_A^\perp \times C_B^\perp$ where we define
\[ C_A = \ker g_A, \quad C_B = \ker h_B, \quad C_A^\perp = \ker h_A, \quad C_B^\perp = \ker g_B.\]

\subsection{Decoding Lifted Product codes}

An error vector $\error = (\error^X, \error^Z)$ for the lifted product code is described a $\sigma_X$-type error $\error^X$ and a $\sigma_Z$-type error $\error^Z$:
\[ \error^{X/Z} = \error^{X/Z}_{AB} + \error^{X/Z}_{00} + \error^{X/Z}_{01} + \error^{X/Z}_{10} + \error^{X/Z}_{11}\]
with $\error^{X/Z}_{AB} \in \F_2[G]^{A\times B}$ and $\error^{X/Z}_{ij} \in \F_2[G]^{C_j \times D_i}$.
This error gives rise to a syndrome $(S_0, S_1,T_0, T_1)$ with $(S_0, S_1)$ detected by $\sigma_Z$-generators and $(T_0, T_1)$ detected by $\sigma_X$-generators, as follows:
\begin{align*}
S_0 & =  h_A \error^{Z}_{AB}  + \error^{Z}_{00} h_B + \error^{Z}_{10} H_B  \quad \in \F_2[G]^{C_0 \times B},  \\
S_1& = H_A \error^{Z}_{AB} + \error^{Z}_{01} h_B + \error^{Z}_{11} H_B  \quad \in \F_2[G]^{C_1\times B},\\
T_0 & =  \error^{X}_{AB} h_B^T + h_A^T \error^{X}_{00} + H_A^T \error^{X}_{01} \quad \in \F_2[G]^{A \times D_0},\\
T_1& =  \error^{X}_{AB} H_B^T + h_A^T \error^{X}_{10}  + H_A^T \error^{X}_{11} \quad \in \F_2[G]^{A \times D_1}.
\end{align*}

One of the features of this quantum code (and also of the hypergraph product code of two classical Tanner codes) is that one can easily remove the error from $C_0 \times D_1 \cup C_1 \times D_0$ by adding a stabilizer of weight $O(|\error_{10} \cup \error_{01}|)$. Since the code is LDPC, it means that this can be done without increasing the error weight too much. 
\begin{lemma}\label{lem:simplification}
For any error $\error$, there exists an equivalent error $\error'$ such that $\error'_{01} = 0$, $\error'_{10}=0$ and 
\[ |\error'| \leq |\error| + 4\Delta^2 (|\error_{01}| + |\error_{10}|).\]
\end{lemma}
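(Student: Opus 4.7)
The plan is to kill the $C_0\times D_1$ and $C_1\times D_0$ components of $\error$ by multiplying by a carefully chosen stabilizer of the Lifted Product code, constructed atom by atom, each contribution having Hamming weight at most $2\Delta^2+1$. By the CSS structure I treat $\error^X$ and $\error^Z$ independently, and by the symmetry of the construction it is enough to describe the elimination of $\error^Z_{01}$ atom by atom; the other three sub-cases ($\error^Z_{10}$ via $U_0=0$ with suitable $U_1$; the $\sigma_X$ analogues via $V_0=0$ or $V_1=0$) are entirely parallel.

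For each atom $(c,d,g)\in C_0\times D_1\times G$ in the support of $\error^Z_{01}$, I will construct a $\sigma_Z$-stabilizer $S_{c,d,g}$ whose restriction to $(C_0\cup C_1)\times(D_0\cup D_1)$ is the singleton $\{(c,d,g)\}$ and whose total weight is at most $2\Delta^2+1$. Taking $U_1=0$ kills the $C_1$ blocks of the generic $\sigma_Z$-stabilizer; writing $U_0=e_c\otimes v^T$ for some $v\in\F_2[G]^B$ concentrates the $C_0$ blocks in row $c$; and requiring the $C_0\times D_1$ block to be the singleton at $(c,d,g)$ reduces to the single equation $H_Bv=g\cdot e_d$ in $\F_2[G]^{D_1}$. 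Summing $S_{c,d,g}$ over all atoms of $\error^Z_{01}$ then gives a single $\sigma_Z$-stabilizer whose effect on $C_0\times D_1$ is exactly $\error^Z_{01}$ (cancelling it) and whose effect on $C_1\times(D_0\cup D_1)$ is zero, so that the separate treatment of $\error^Z_{10}$ does not interfere.

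The one non-routine step, and the main obstacle I anticipate, is producing a small $v$. Over $\F_2[G]$ the equation $H_Bv=g\cdot e_d$ has a huge solution space, and a naive preimage could give $|v|$ as large as $\Delta\cdot|G|$. The trick is to search for $v$ inside the single translate $g\cdot\F_2^B\subset\F_2[G]^B$ (vectors whose coordinates lie in $\{0,g\}$), which reduces the problem to the classical $\F_2$-linear system $h_Bw_0=e_d$. Assuming $h_B$ has full row rank (the standard assumption on a parity check matrix; drop redundant rows first otherwise), such a $w_0\in\F_2^B$ exists and has Hamming weight at most $\Delta=|B|$ trivially. Setting $v:=D_B^{-1}(g\cdot w_0)$ then satisfies $H_Bv=g\cdot e_d$ and has $|v|=|w_0|\leq\Delta$ because $D_B^{-1}$ and left multiplication by $g$ both preserve Hamming weight on $\F_2[G]^B$.

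Once $|v|\leq\Delta$ is secured, bounding the remaining blocks of $S_{c,d,g}$ is purely mechanical: the $A\times B$ block $h_A^TU_0=(h_A^Te_c)\otimes v^T$ has weight at most (row weight of $h_A$)$\cdot|v|\leq\Delta^2$; the $C_0\times D_0$ block $U_0h_B^T=e_c\otimes(h_Bv)^T$ has weight at most (column weight of $h_B$)$\cdot|v|\leq\Delta^2$; the $C_0\times D_1$ block is the promised singleton of weight $1$; and the two $C_1$ blocks vanish. Both parity check matrices being $m\times\Delta$ with $m\leq\Delta$, the trivial row and column weight bounds are $\leq\Delta$, so $|S_{c,d,g}|\leq 2\Delta^2+1\leq 4\Delta^2$. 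Summing over all atoms of $\error^Z_{01},\error^Z_{10},\error^X_{01},\error^X_{10}$ and applying the triangle inequality yields $|\error'|\leq|\error|+4\Delta^2(|\error_{01}|+|\error_{10}|)$ as claimed.
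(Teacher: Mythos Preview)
Your proof is correct and follows essentially the same approach as the paper's: both construct the cancelling stabilizer $(U_0,U_1)$ by solving the linear system over $\F_2[G]$ (the paper in one stroke, you explicitly atom by atom via an $\F_2$-preimage of weight at most $\Delta$, relying on full row rank of $h_B$), and then bound the weight of the induced change on the remaining blocks by $2\Delta(|U_0|+|U_1|)\leq 2\Delta^2(|\error_{01}|+|\error_{10}|)$ per error type. One minor caveat: the paper indexes $\error_{ij}\in\F_2[G]^{C_j\times D_i}$, so what you call $\error^Z_{01}$ is the paper's $\error^Z_{10}$ and vice versa, but this is immaterial since the statement is symmetric in the two subscripts.
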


\begin{proof}
Consider a general error $(\error^X, \error^Z)$. We focus on the $\sigma_Z$-part $(\error_{AB}^Z, \error_{00}^Z, \error_{10}^Z, \error_{01}^Z, \error_{11}^Z)$. The $\sigma_X$-part is treated identically.
Let $U_0, U_1$ be matrices such that 
\[ U_0 H_B =\error_{10}^Z, \quad U_1 h_B = \error_{01}^Z.\]
We have that $|U_0| \leq \Delta |\error_{10}^Z|$ and $|U_1| \leq \Delta |\error_{01}^Z|$.
These stabilizers induce new errors $h_A^T U_0 + H_A^T U_1 + U_0 h_B^T+U_1 H_B^T$, which have a weight at most $2\Delta (|U_0| + |U_1|) \leq 2\Delta^2( |\error_{10}|+ |\error_{01}|)$.
\end{proof}

Thanks to Lemma~\ref{lem:simplification}, we now focus on a $\sigma_Z$-error $\error$ such that $\error_{01}=0$ and $\error_{10}=0$. Its syndrome reads:
\[T_0 = \error_{AB} h_B^T + h_A^T \error_{00}, \qquad T_1 = \error_{AB} H_B^T +H_A^T \error_{11}.\]
We can multiply these syndromes by $g_A$ and $G_A$ respectively and obtain
\[ g_A T_0 = g_A \error_{AB} h_B^T,\qquad G_A T_1 = G_A \error_{AB} H_B^T,\]
which is exactly the decoding problem for quantum Tanner codes. We note that a similar strategy was already exploited in~\cite{QC21}. 
The decoder presented in the previous sections can correct this error provided its weight is small enough (Theorem~\ref{thm:main}). 
More precisely, if $|\error_{AB}| \leq \kappa n/\Delta^4$, then the decoder outputs an equivalent error $\hat{\error}_{AB}$ for the quantum Tanner code. In other words, there exists a stabilizer element associated with $(U_0, U_1)$ such that 
\[ \hat{\error}_{AB} = \error_{AB} +h_A^T U_0' G_B + H_A^T U_1' g_B,\]
as described in the previous subsection.

Let us introduce a matrix $n_A$ indexed by $C_0 \times A$ such that $n_A h_A^T = \1_{C_0}$. We assume here that $h_A$ is a full-rank matrix. We also denote $N_A := n_A D_A^{-1}$, which implies $N_A H_A^T = \1_{C_1}$.

The decoder for the lifted product code will simply return the triple $(\hat{\error}_{AB}, \hat{\error}_{00}, \hat{\error}_{11})$ with 
\[ \hat{\error}_{00} := n_A (T_0 + \hat{\error}_{AB} h_B^T), \quad \hat{\error}_{11} := N_A (T_1 + \hat{\error}_{AB} H_B^T).\]
Straightforward manipulations show that
\begin{align*}
 \hat{\error}_{00}  &= n_A (( \error_{AB} h_B^T + h_A^T \error_{00}) +(\error_{AB} +h_A^T U_0' G_B + H_A^T U_1' g_B) h_B^T)  \\
&=  \error_{00} + U_0' G_B  h_B^T  
\end{align*}
and 
\begin{align*}
 \hat{\error}_{11}  &= N_A (( \error_{AB} H_B^T + H_A^T \error_{11}) +(\error_{AB} +h_A^T U_0' G_B + H_A^T U_1' g_B) H_B^T)  \\
&=  \error_{11} +  U_1' g_B H_B^T
\end{align*}
which means that the true error and the error returned by the decoder differ by
\[ \underbrace{ h_A^T (U_0' G_B) + H_A^T (U_1' g_B)}_{A \times B} + \underbrace{ (U_0' G_B) h_B^T}_{C_0\times D_0} +  \underbrace{ (U_1' g_B)H_B^T}_{ C_1\times D_1}.\]
In other words, they differ by the stabilizer element $(U_0, U_1) = (U_0' G_B, U_1' g_B)$ of the lifted product code, meaning that they are equivalent and that the decoder succeeded. \\

We have therefore proved the following:
\begin{theo}\label{thm:PK}
There exists a constant $\kappa$, depending only on $\delta$, a lower bound for the
minimum distances of both component codes $C_A$ and $C_B$, such that for large
enough fixed $\Delta$, the above decoding algorithm corrects all error patterns
of weight less than $\kappa n/\Delta^6$ for the lifted product code of length
$n=|Q|$.
\end{theo}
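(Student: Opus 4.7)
The plan is to follow the reduction sketched in Section~\ref{sec:LP} and reduce the decoding of a lifted product error to the decoding of a quantum Tanner code error, where Theorem~\ref{thm:main} applies. Starting from an arbitrary error $\error = (\error^X, \error^Z)$, I would first apply Lemma~\ref{lem:simplification} independently to the $X$-part and the $Z$-part to obtain an equivalent error $\error'$ whose components on $C_0 \times D_1$ and $C_1 \times D_0$ vanish. This step costs a multiplicative factor $O(\Delta^2)$ in the weight, so if the original error has weight at most $\kappa n/\Delta^6$, then $|\error'| \leq (1 + 4\Delta^2) \cdot \kappa n/\Delta^6 = O(\kappa n/\Delta^4)$, and in particular $|\error'_{AB}| \leq \kappa' n/\Delta^4$ for a suitable constant $\kappa'$.

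Next, I focus on the $\sigma_Z$-part (the $\sigma_X$-part is symmetric). With $\error'_{01} = \error'_{10} = 0$, the syndrome reduces to $T_0 = \error'_{AB} h_B^T + h_A^T \error'_{00}$ and $T_1 = \error'_{AB} H_B^T + H_A^T \error'_{11}$. Multiplying on the left by $g_A$ and $G_A$ respectively kills the contributions from $\error'_{00}$ and $\error'_{11}$ (since $g_A h_A^T = 0$ and $G_A H_A^T = 0$), leaving $g_A T_0 = g_A \error'_{AB} h_B^T$ and $G_A T_1 = G_A \error'_{AB} H_B^T$. This is exactly the pair of local syndromes of $\error'_{AB}$ viewed as a bit-flip error on the quantum Tanner code $\eQ = (\C_0, \C_1)$ with local codes $C_A = \ker g_A$ and $C_B = \ker h_B$. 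Since $|\error'_{AB}| \leq \kappa' n/\Delta^4$, Theorem~\ref{thm:main} guarantees that the decoder of Section~\ref{sec:analysis} returns $\hat{\error}_{AB}$ equivalent to $\error'_{AB}$, that is, there exist $U_0', U_1'$ with $\hat{\error}_{AB} + \error'_{AB} = h_A^T U_0' G_B + H_A^T U_1' g_B$.

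Finally, I would complete the decoding by recovering $\hat{\error}_{00} := n_A(T_0 + \hat{\error}_{AB} h_B^T)$ and $\hat{\error}_{11} := N_A(T_1 + \hat{\error}_{AB} H_B^T)$ using left inverses $n_A, N_A$ of $h_A^T, H_A^T$. A direct substitution, exactly as in the excerpt, gives $\hat{\error}_{00} + \error'_{00} = U_0' G_B h_B^T$ and $\hat{\error}_{11} + \error'_{11} = U_1' g_B H_B^T$. Reading off the five components, the total difference between the output and $\error'$ coincides with the support of the lifted product $\sigma_Z$-stabilizer associated with $(U_0, U_1) = (U_0' G_B, U_1' g_B)$; hence the output is equivalent to $\error'$, which is itself equivalent to $\error$ by construction. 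Repeating the argument symmetrically for the $\sigma_X$-part concludes the proof.

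There is no real obstacle here beyond bookkeeping: the core analytical work has already been done in Theorem~\ref{thm:main}. The only point that requires care is tracking the constants through the reduction: the $\Delta^2$ blow-up from Lemma~\ref{lem:simplification}, combined with the $1/\Delta^4$ threshold of Theorem~\ref{thm:main}, is precisely what yields the $n/\Delta^6$ bound in the statement. I would make this dependency explicit and verify that $h_A$ (and $h_B$) can be assumed full-rank without loss of generality, so that the left inverses $n_A, N_A$ used in the reconstruction step are well-defined.
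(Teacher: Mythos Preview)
Your proposal is correct and follows essentially the same approach as the paper: Section~\ref{sec:LP} itself constitutes the proof of Theorem~\ref{thm:PK}, and you have accurately reproduced the reduction via Lemma~\ref{lem:simplification}, the syndrome manipulation using $g_A,G_A$, the invocation of Theorem~\ref{thm:main}, and the reconstruction of $\hat{\error}_{00},\hat{\error}_{11}$ through the left inverses $n_A,N_A$. The only point worth tightening is that Lemma~\ref{lem:simplification} is an existence statement used to replace the unknown error by an equivalent one \emph{in the analysis} (the decoder works from the syndrome alone and need not perform this replacement), but you treat it correctly in substance.
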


\newpage

\newcommand{\etalchar}[1]{$^{#1}$}

\end{document}